\documentclass[aps,prd,reprint,nofootinbib,longbibliography]{revtex4-2}
\pdfoutput=1

\usepackage{amsmath,amssymb,mathtools,bm,amsthm,mathrsfs}
\usepackage{graphicx}
\usepackage[hidelinks]{hyperref}

\newtheorem{theorem}{Theorem}
\newtheorem{lemma}{Lemma}
\newtheorem{corollary}{Corollary}

\newcommand{\SPD}{\mathrm{SPD}}
\newcommand{\Tr}{\operatorname{tr}}
\newcommand{\Id}{\mathbb{I}}
\newcommand{\dd}{\mathrm{d}}
\newcommand{\cQ}{\mathcal{Q}}
\newcommand{\cH}{\mathcal{H}}
\newcommand{\DT}{\mathcal{D}_{T}}
\newcommand{\Dini}{D^{+}}

\begin{document}

\title{Renormalization-group locking of finite anisotropic propagation cones in Yukawa networks}

\author{Shuai Zeng}
\email{zengshuai@cqupt.edu.cn}
\affiliation{Chongqing University of Posts and Telecommunications, Chongqing 400065, China}

\begin{abstract}
Different field species can begin with finite anisotropic propagation
geometries whose principal axes need not coincide.  We ask whether local
Yukawa interactions on a finite graph can nevertheless generate one common
infrared spatial cone, even when the interactions that drive the locking
become marginally irrelevant.  Retaining the full finite mismatch between positive-definite spatial kinetic
matrices, we derive the one-loop nonlinear matrix flow and show that its
Thompson diameter is globally nonexpansive.  On a connected graph, persistent normalized edge activity
upgrades this to a uniform finite-window contraction, so all interacting
sectors asymptotically share one relative propagation geometry.  An explicit
two-channel Pauli--Dirac Yukawa--quartic completion realizes the persistence
conditions on an open weak-coupling set of finite-mismatch initial data.  In four dimensions the Yukawa
couplings vanish in the infrared, yet their accumulated interaction time
diverges, leaving a Gaussian endpoint with a common spatial cone.  For a
controlled two-plateau extremal class, graph distance also fixes the first
nonzero short-time order of the diameter decrease.  Thus common-cone locking can emerge from renormalization-group dynamics
rather than being imposed as a microscopic common structure.

\end{abstract}

\maketitle

\section{Introduction}
\label{sec:introduction}

A recurring question in quantum field theory is why sectors with distinct
microscopic propagation laws can share a common low-energy characteristic
structure.  In its simplest form this is the problem of a universal limiting
speed: Lorentz-violating theories may begin with different fermion, boson, or
gauge velocities, while interactions drive those velocities toward a common
infrared value \cite{ChadhaNielsen1983,AnberDonoghue2011}.  Related mechanisms
appear in strongly coupled settings and near interacting quantum critical
points \cite{BednikPujolasSibiryakov2013,RoyJuricicHerbut2016}.  The underlying question extends beyond equality of a few velocity parameters.
A generic anisotropic sector carries an entire characteristic quadratic form,
and a common low-energy cone requires the relative propagation geometry of all
interacting species to become compatible.

Renormalization-group studies have progressively enlarged the kinematic data
included in this problem.  Species-dependent limiting velocities and emergent
Lorentz symmetry have been analyzed in Yukawa and related interacting theories
\cite{AnberDonoghue2011,RoyJuricicHerbut2016,KharukSibiryakov2016}.
Anisotropic Dirac cones, multiple nodes, and anisotropic fixed points have
been studied in fermionic and gauge-coupled systems
\cite{VafekTesanovicFranz2002,IsobeNagaosa2013,WangLiu2014,
PozoFerreirosVozmediano2018}.  Complementary developments include tilted,
birefringent, and non-Hermitian propagation
\cite{ReiserJuricic2024,MurshedRoy2025,PinoAlarconJuricic2025}, explicit
one-loop renormalization of Lorentz-violating scalar and Yukawa kinetic
operators \cite{FerreroAltschul2011}, and general anisotropic quadratic
structures \cite{AgeevAgeeva2026}.  These developments motivate a
finite-mismatch problem in which the complete relative propagation matrices
and their eigenframes evolve dynamically across interacting sectors.  The
question addressed here is whether this RG matching mechanism extends to a
finite family of full spatial quadratic forms at finite mismatch, including
noncoincident eigenframes, and whether local edgewise attraction enforces
global common-cone locking on a sparse interaction network.

We formulate that problem in terms of spatial propagation matrices
\begin{equation}
G_i\in\SPD(3),
\label{eq:intro-Gi}
\end{equation}
where \(i\) labels a field species and \(\SPD(3)\) denotes the cone of real
symmetric positive-definite \(3\times3\) matrices.  We write \(A_a\) for a
fermion propagation matrix and \(B_r\) for a scalar propagation matrix, with
\(a\) and \(r\) labeling fermionic and scalar species, respectively.
A full SPD matrix encodes both principal propagation rates and their
eigenframes.  For a family of three or more sectors, a single spatial congruence may fail to
diagonalize all of them simultaneously.  The RG
problem is then genuinely matrix valued: the infrared dynamics must align
relative eigenvalues and relative eigenframes across the full family.

Interaction topology introduces a second ingredient.  We consider a
finite bipartite Yukawa graph whose fermion and scalar nodes carry the
matrices \(A_a\) and \(B_r\).  Global network contraction requires edgewise
attraction to propagate through shared nodes.  On a sparse graph an extremal
node can initially be shielded by neighboring cones that are equal along the
active extremal direction, so the first derivative of a global mismatch can
vanish even though another edge farther along the graph is already
contracting.  Global locking therefore requires control of the finite-time propagation of
strictness through the interaction network.

A third ingredient is field-theoretic persistence.  The cone-flow rates are
proportional to running Yukawa couplings and wave-function factors.  In a
marginal four-dimensional theory these couplings can decrease toward the
infrared.  The QFT application then requires the coupling RG to supply an infinite
accumulated interaction time while maintaining persistent relative edge
activity.  These ingredients lead to the central question of
this work: can local Yukawa interactions erase finite, noncommuting,
species-dependent propagation geometry across an entire sparse network even
when the interactions responsible for the locking themselves vanish in the
infrared?

The present paper answers this question within a multichannel
Dirac--scalar Yukawa theory.  First, we retain the full finite relative SPD
mismatch in the one-loop self-energies.  For one fermion--scalar edge with fermion matrix \(A\) and scalar matrix
\(B\), the relative matrix
\begin{equation}
R=A^{-1/2}BA^{-1/2}
\end{equation}
enters the nonlinear kernel
\begin{equation}
\mathcal H(R)
=
\int_0^1\dd y\,
\frac{(1-y)
\left[
\Id-\left((1-y)\Id+yR\right)^{-1}
\right]}
{\sqrt{\det[(1-y)\Id+yR]}},
\label{eq:intro-H}
\end{equation}
where \(\Id\) is the \(3\times3\) identity matrix and \(y\in[0,1]\) is
the Feynman parameter.  The congruence-covariant perspective of this kernel
generates the fermion propagation flow.  Second, special identities of this
Yukawa kernel imply global nonexpansion of the Thompson diameter
\(\mathcal D_T\), defined in section~\ref{subsec:thompson}.  Writing
\(D^+\) for the upper-right Dini derivative, this result is
\begin{equation}
D^+\mathcal D_T\le0.
\label{eq:intro-nonexpansion}
\end{equation}
The inequality holds for arbitrary finite positive-definite mismatch in the
controlled one-loop domain.  Third, on a fixed connected graph with persistent normalized edge
activity, equality rigidity and positive propagation of endpoint strictness
upgrade nonexpansion to a finite-window contraction.  Here \(s\) denotes the
accumulated interaction time, \(T>0\) a fixed window length, and
\(\rho_T\) the corresponding contraction factor:
\begin{equation}
\mathcal D_T(s+T)
\le
\rho_T\mathcal D_T(s),
\qquad
0<\rho_T<1,
\label{eq:intro-window}
\end{equation}
and hence
\begin{equation}
\mathcal D_T(s)\longrightarrow0.
\label{eq:intro-locking}
\end{equation}
This is the global common-cone locking result.  We further show in an
explicit two-channel Pauli--Dirac Yukawa--quartic completion that the
coupling RG itself generates a compact weak-coupling persistence basin on an
open set of genuinely finite-mismatch initial data.  The resulting mechanism goes beyond pairwise velocity matching: local Yukawa
interactions synchronize both relative propagation rates and relative
eigenframes across the connected network.  Figure~\ref{fig:intro-schematic} summarizes this
mechanism.  Connectivity determines the asymptotic locking of the network,
while graph distance controls when strictness first becomes visible at
shielded extremal sectors in the controlled two-plateau class.

\begin{figure*}[t]
\centering
\includegraphics[width=0.96\textwidth]{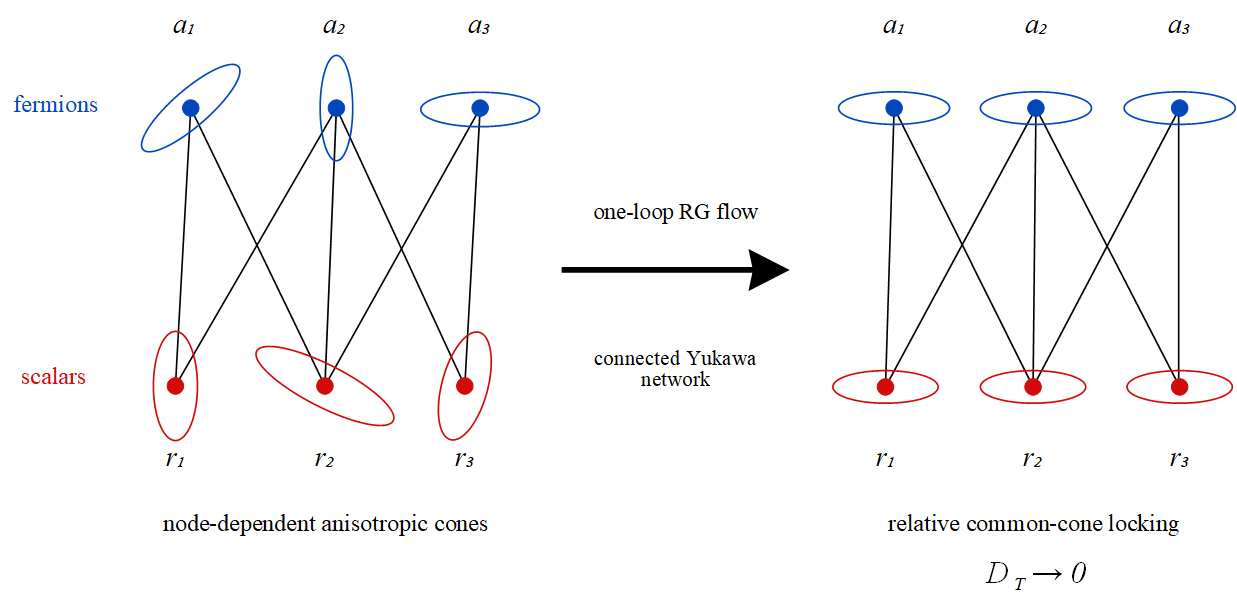}
\caption{Conceptual overview of common-cone locking in a connected Yukawa
network.  Left: fermion and scalar nodes carry node-dependent anisotropic
spatial propagation cones.  Middle: the one-loop RG acts on the connected
bipartite interaction graph.  Right: persistent connected interactions drive
relative common-cone locking, quantified by \(\mathcal D_T\to0\).
Connectivity underlies the global-locking theorem, while graph distance
controls the onset order in the topology-sensitive short-time law.}
\label{fig:intro-schematic}
\end{figure*}

The interaction graph also leaves a sharper short-time imprint.  For a
controlled class of two-plateau extremal initial states, the first nonzero
decrease of the Thompson diameter occurs at an accumulated-time order fixed
by graph distance.  For a common simple extremal direction \(x\), let
\(L_x\) and \(U_x\) denote the lower and upper initial plateau node sets,
let \(\mathcal A_x\) be the set of initially active ordered endpoint pairs,
and let \(d(v,S)\) denote the shortest graph distance from a node \(v\) to
a node set \(S\).  The relevant shielding depth is
\begin{equation}
q
=
\max_{(i,j)\in\mathcal A_x}
\min\{d(i,U_x),d(j,L_x)\},
\label{eq:intro-q}
\end{equation}
and
\begin{equation}
\mathcal D_T(0)-\mathcal D_T(s)
=
C_qs^q+O(s^{q+1}),
\qquad
C_q>0.
\label{eq:intro-onset}
\end{equation}
Here \(C_q\) is the positive leading coefficient and \(O(s^{q+1})\) is
Landau notation for the higher-order remainder as \(s\to0^+\).
This result refines the asymptotic locking theorem by separating two roles of
the interaction graph: connectivity determines the asymptotic fate of the
network, whereas graph distance determines the earliest perturbative order at
which strict contraction reaches shielded extremal sectors.

The analysis is carried out at one loop and weak coupling for a fixed Yukawa
graph, positive-definite spatial kinetic matrices, and vanishing time--space
tilt.  In this perturbative setting, the flow resolves the complete relative
SPD matrix at finite cone mismatch.
Section~\ref{sec:yukawa-flow} derives
the finite-matrix RG flow.  Section~\ref{sec:global-geometry} develops the
relative-cone geometry and proves global nonexpansion.
Section~\ref{sec:finite-window-locking} proves finite-window locking and
constructs the explicit field-theoretic persistence basin.
Section~\ref{sec:topology-onset} derives the topology-sensitive onset law,
and section~\ref{sec:discussion} discusses the physical interpretation,
domain, and extensions.  Technical derivations and proofs are collected in
appendices~\ref{app:internal-algebra}--\ref{app:persistence-topology}.

\section{Multichannel Yukawa theory and finite-mismatch RG}
\label{sec:yukawa-flow}

We derive the propagation-geometry flow directly from a multichannel
Dirac--scalar Yukawa theory.  The construction identifies a finite class of
field theories with a closed one-loop two-point sector, while retaining the
full relative spatial kinetic matrices at finite mismatch.  The resulting
matrix-valued RG equations provide the starting point for the global analysis
in section~\ref{sec:global-geometry}.

\subsection{Fields, propagation matrices, physical cones, and graph}
\label{subsec:fields-graph}

Let
\[
\mathsf G=(V_F\sqcup V_B,E)
\]
be a finite bipartite interaction graph, with fermionic node set \(V_F\),
bosonic node set \(V_B\), and edge set \(E\).  We write \(r\sim a\) when
\((a,r)\in E\).  A fermionic node \(a\in V_F\) carries a four-component
Dirac field \(\psi_a\) with a finite-dimensional internal flavor space
\(\mathcal V_a\), while a bosonic node \(r\in V_B\) carries a real scalar
field \(\phi_r\).  We denote the scalar multiplet by
\(\boldsymbol\phi=(\phi_r)_{r\in V_B}\).  The matrices
\(\gamma_\mu\) are Euclidean Dirac matrices, \(\bar\psi_a\) is the Dirac
adjoint, \(\partial_0=\partial/\partial\tau\) is the Euclidean-time
derivative, and repeated spatial indices \(i,j=1,2,3\) are summed.  At a
massless critical point in three spatial dimensions, we consider the Euclidean
action
\begin{align}
S
={}&
\int \dd\tau\,\dd^3x\,
\Bigg[
\sum_{a\in V_F}
\bar\psi_a
\left(
\gamma_0\partial_0
+
\gamma_i e^{(a)}_{ij}\partial_j
\right)\psi_a
\nonumber\\
&\qquad
+
\frac12\sum_{r\in V_B}
\left(
(\partial_0\phi_r)^2
+
\partial_i\phi_r\,B_r^{ij}\partial_j\phi_r
\right)
+
V(\boldsymbol{\phi})
\nonumber\\
&\qquad
+
\sum_{(a,r)\in E}
g_{ar}\,
\phi_r\,
\bar\psi_a T_{ar}\psi_a
\Bigg].
\label{eq:jhep-action}
\end{align}
Here \(e_a=[e^{(a)}_{ij}]\) is a real invertible spatial kinetic frame,
\(B_r\in\SPD(3)\), \(g_{ar}\) is the real Yukawa coupling on edge
\((a,r)\), and the matrices \(T_{ar}\) act only on \(\mathcal V_a\) and
commute with the spacetime Dirac algebra.  The scalar potential \(V(\boldsymbol\phi)\) contains the interactions that
complete the critical theory.  The one-loop propagation-matrix
flow derived below is determined by the Yukawa sector.  For the explicit
two-channel persistence realization, the complete quartic sector is specified
in section~\ref{subsec:qft-persistence} and
appendix~\ref{appsub:quartic-completion}.

The spatial quadratic forms carried by the two sectors are
\begin{equation}
A_a=e_a^{T}e_a\in\SPD(3),
\qquad
B_r\in\SPD(3).
\label{eq:jhep-cone-matrices}
\end{equation}
These are also the matrices that control the physical characteristic
quadratic forms after analytic continuation.  With the temporal kinetic
coefficients normalized to unity and time--space tilt set to zero, let
\(\omega\) denote frequency and
\(\boldsymbol k\in\mathbb R^3\) the spatial momentum.  The massless scalar
dispersion relation is
\begin{equation}
\omega^2=\boldsymbol{k}^{T}B_r\boldsymbol{k},
\label{eq:jhep-boson-dispersion}
\end{equation}
whereas the determinant of the fermionic kinetic operator gives
\begin{equation}
\omega^2=\boldsymbol{k}^{T}A_a\boldsymbol{k}.
\label{eq:jhep-fermion-dispersion}
\end{equation}
Thus \(A_a\) and \(B_r\) determine spatial sections of the characteristic
cones.  Throughout the paper, common-cone locking means synchronization of
these relative characteristic quadratic forms.

The graph \(\mathsf G\) records which sectors exchange Yukawa fluctuations:
\((a,r)\in E\) precisely when the coupling \(g_{ar}\) is present.  The graph is
kept fixed in the one-loop problem studied here; the edge strengths may run
with the RG scale.

\subsection{Channel closure and finite-degree realization}
\label{subsec:channel-closure}

For every edge incident on a fermionic node \(a\), choose a Hermitian
internal matrix satisfying
\begin{equation}
T_{ar}^{\dagger}=T_{ar},
\qquad
T_{ar}^{2}=\Id_{\mathcal V_a},
\label{eq:jhep-involution}
\end{equation}
and, for distinct scalar neighbors \(r,s\sim a\),
\begin{equation}
\Tr_{\mathcal V_a}(T_{ar}T_{as})=0.
\label{eq:jhep-HS-orthogonality}
\end{equation}
Here \(\Id_{\mathcal V_a}\) denotes the identity on the internal flavor
space and \(\Tr_{\mathcal V_a}\) denotes its matrix trace.  These conditions ensure that the propagation sector closes edge by edge at
one loop.  For the fermion self-energy associated with edge \((a,r)\), the
internal factor is \(T_{ar}^2=\Id_{\mathcal V_a}\), so contributions from
different incident edges add within the original fermion kinetic sector.  A mixed scalar two-point
function \(\phi_r\)--\(\phi_s\) generated through the same fermion is
proportional to
\(\Tr(T_{ar}T_{as})\) and therefore vanishes for \(r\neq s\).  Scalar quartic tadpoles are momentum independent at this order and contribute
to the local potential/mass sector.  Consequently, the logarithmically
divergent momentum-dependent two-point sector closes on the prescribed
bipartite graph.

The required internal algebra exists at every finite graph degree.  If
\(d_a=\deg(a)\), choose \(n_a\) such that
\begin{equation}
4^{n_a}-1\ge d_a
\label{eq:jhep-pauli-count}
\end{equation}
and assign distinct nonidentity \(n_a\)-qubit Pauli strings to the incident
edges.  Such strings are Hermitian involutions and are mutually
Hilbert--Schmidt orthogonal.  Matrix-valued order-parameter channels of this
type are familiar in multicomponent Gross--Neveu--Yukawa constructions
\cite{HerbutScherer2022}.

For the Pauli-string subclass used below, the Yukawa vertex directions also
remain closed at one loop.  Any two Pauli strings either commute or
anticommute, and therefore
\begin{equation}
T_{as}T_{ar}T_{as}
=
\eta^{(a)}_{sr}T_{ar},
\qquad
\eta^{(a)}_{sr}\in\{+1,-1\}.
\label{eq:jhep-pauli-conjugation}
\end{equation}
The cubic flavor structures entering the general one-loop scalar--fermion
Yukawa beta function are consequently proportional to the original external
channel:
\begin{equation}
\begin{aligned}
T_{as}^{2}T_{ar}&=T_{ar},
& T_{ar}T_{as}^{2}&=T_{ar},\\
T_{as}T_{ar}T_{as}&=\eta^{(a)}_{sr}T_{ar}.
\end{aligned}
\label{eq:jhep-vertex-closure}
\end{equation}
Together with the absence of off-diagonal scalar wave-function mixing from
eq.~\eqref{eq:jhep-HS-orthogonality}, this keeps the prescribed Pauli
Yukawa-channel subspace invariant at one loop.  The complete Dirac-to-Weyl
bookkeeping and its relation to the general tensorial one-loop Yukawa beta
function \cite{PannellStergiou2023} are collected in
appendix~\ref{app:internal-algebra}.  The anisotropic kinetic matrices modify
the scalar loop coefficients but not the internal Pauli algebra in
eqs.~\eqref{eq:jhep-pauli-conjugation}--\eqref{eq:jhep-vertex-closure}.

\subsection{Fermion self-energy at finite relative mismatch}
\label{subsec:fermion-self-energy}

Consider one edge and denote its fermion and scalar spatial quadratic forms
by \(A\) and \(B\).  A common spatial congruence, followed by a spin-frame
rotation, places the fermion kinetic term in the identity frame.  The only
relative spatial matrix is then
\begin{equation}
R=A^{-1/2}BA^{-1/2}\in\SPD(3).
\label{eq:jhep-relative-R}
\end{equation}
Up to overall Euclidean factors that leave the logarithmic counterterm
unchanged, the propagators in this frame may be written as
\begin{equation}
S(k)
=
\frac{\gamma_0k_0+\boldsymbol{\gamma}\!\cdot\!\boldsymbol{k}}
{k_0^2+\boldsymbol{k}^2},
\qquad
D_R(q)
=
\frac{1}{q_0^2+\boldsymbol{q}^{T}R\boldsymbol{q}}.
\label{eq:jhep-propagators}
\end{equation}
For external momentum \(p\), the one-loop fermion self-energy contains
\begin{align}
\Sigma(p)
={}&
g^2
\int
\frac{\dd k_0\,\dd^3k}{(2\pi)^4}
\left(\gamma_0 k_0+\boldsymbol{\gamma}\!\cdot\!\boldsymbol{k}\right)
\nonumber\\
&\times
(k_0^2+\boldsymbol{k}^2)^{-1}
\nonumber\\
&\times
\left[
(p_0-k_0)^2+
(\boldsymbol{p}-\boldsymbol{k})^{T}
R
(\boldsymbol{p}-\boldsymbol{k})
\right]^{-1}.
\label{eq:jhep-selfenergy}
\end{align}
All dependence on the internal channel has already reduced to \(T^2=\Id\).

Using a Feynman parameter \(y\in[0,1]\), define
\begin{equation}
M_y(R)=(1-y)\Id+yR.
\label{eq:jhep-My}
\end{equation}
The shifts
\begin{equation}
q_0=k_0-y p_0,
\qquad
\boldsymbol{q}
=
\boldsymbol{k}
-
yM_y(R)^{-1}R\boldsymbol{p}
\label{eq:jhep-loop-shift}
\end{equation}
remove terms linear in the loop momentum from the combined denominator.
Odd terms integrate to zero, while the external-momentum part of the numerator
becomes
\begin{equation}
y\gamma_0p_0
+
y\gamma_i
\bigl[M_y(R)^{-1}R\bigr]_{ij}p_j.
\label{eq:jhep-external-numerator}
\end{equation}
The logarithmically divergent scalar integral has the form
\begin{align}
I(M,\Delta)
&=
\int
\frac{\dd q_0\,\dd^3q}{(2\pi)^4}
\frac{1}
{\left[
q_0^2+\boldsymbol{q}^{T}M\boldsymbol{q}+\Delta
\right]^2}
\nonumber\\
&=
\frac{\kappa_4}{\sqrt{\det M}}
\log\frac{\Lambda_{\rm UV}}{\mu}
+
O(1).
\label{eq:jhep-log-integral}
\end{align}
For the spherical-cutoff convention
\(\log(\Lambda_{\rm UV}/\mu)\) used here,
\(\kappa_4=1/(8\pi^2)\).  The matrix dependence of the logarithmic
coefficient follows immediately from
\(\boldsymbol{Q}=M^{1/2}\boldsymbol{q}\); the same determinant dependence
is also obtained by proper-time regularization in
appendix~\ref{app:selfenergies}.

It follows that the logarithmic self-energy can be separated into temporal
and spatial pieces,
\begin{equation}
\Sigma_{\log}(p)
=
\kappa_4g^2
\log\frac{\Lambda_{\rm UV}}{\mu}
\left[
f_0(R)\gamma_0p_0
+
\gamma_iF_{ij}(R)p_j
\right],
\label{eq:jhep-sigma-log}
\end{equation}
where
\begin{align}
f_0(R)
&=
\int_0^1
\frac{y\,\dd y}{\sqrt{\det M_y(R)}},
\label{eq:jhep-f0}\\
F(R)
&=
\int_0^1
\frac{
yM_y(R)^{-1}R\,\dd y
}{
\sqrt{\det M_y(R)}
}.
\label{eq:jhep-F}
\end{align}

\subsection{The finite-mismatch matrix kernel}
\label{subsec:finite-kernel}

After restoring the temporal fermion kinetic coefficient to unity, the
relative spatial renormalization is governed by the difference
\(F(R)-f_0(R)\Id\).  We define
\begin{align}
\cH(R)
&=
F(R)-f_0(R)\Id
\nonumber\\
&=
\int_0^1\dd y\,
\frac{
(1-y)
\left[
\Id-M_y(R)^{-1}
\right]
}{
\sqrt{\det M_y(R)}
}.
\label{eq:jhep-H-kernel}
\end{align}
The second line follows from
\begin{equation}
yM_y(R)^{-1}R
=
\Id-(1-y)M_y(R)^{-1}.
\label{eq:jhep-My-identity}
\end{equation}
Equation~\eqref{eq:jhep-H-kernel} retains the complete positive-definite
relative matrix \(R\) at one-loop order throughout the perturbatively
controlled finite-mismatch domain.

Since every \(M_y(R)\) is a polynomial function of \(R\),
\(\cH(R)\) is a spectral matrix function of \(R\).  In particular,
\begin{equation}
[\cH(R),R]=0.
\label{eq:jhep-H-commutes}
\end{equation}
This property makes the one-edge relative dynamics diagonal in the
instantaneous eigenbasis of \(R\), while the many-node problem remains
genuinely matrix valued because different relative matrices need not share an
eigenframe.

\subsection{Scalar self-energy and congruence-covariant network flow}
\label{subsec:network-flow}

For a fermion cone \(A=e^{T}e\), the spatial change of variables
\(\boldsymbol{K}=e\boldsymbol{k}\) reduces the one-loop scalar bubble to an
isotropic fermion integral.  Its logarithmically divergent
momentum-dependent part has the quadratic form
\begin{equation}
\Pi_{\log}(p)
\propto
\frac{g^2\,\dim(\mathcal V_a)}{|\det e|}
\left(
p_0^2+\boldsymbol{p}^{T}A\boldsymbol{p}
\right)
\log\frac{\Lambda_{\rm UV}}{\mu},
\label{eq:jhep-scalar-selfenergy-structure}
\end{equation}
with a positive convention-dependent coefficient proportional to the
internal flavor multiplicity on the edge.  Restoring the scalar temporal
kinetic coefficient to unity therefore gives the edge flow
\begin{equation}
\dot B=\alpha(A-B),
\qquad
\alpha>0.
\label{eq:jhep-B-edge}
\end{equation}
The full trace and normalization are recorded in
appendix~\ref{app:selfenergies}.  The global analysis uses the positivity of
\(\alpha\) and its regular dependence on the remaining perturbative RG
variables.

To restore an arbitrary fermion frame, define the matrix perspective
\begin{equation}
\cQ(A,B)
=
A^{1/2}
\cH\!\left(
A^{-1/2}BA^{-1/2}
\right)
A^{1/2}.
\label{eq:jhep-Q-kernel}
\end{equation}
It is equivariant under a common spatial congruence:
\begin{equation}
\cQ(SAS^{T},SBS^{T})
=
S\cQ(A,B)S^{T}
\qquad
(S\in GL(3,\mathbb R)).
\label{eq:jhep-Q-covariance}
\end{equation}
Indeed,
\begin{equation}
U=(SAS^{T})^{-1/2}SA^{1/2}
\label{eq:jhep-U}
\end{equation}
is orthogonal, so the two relative matrices are orthogonally similar.
Orthogonal equivariance of the spectral function \(\cH\), together with
\((SAS^{T})^{1/2}U=SA^{1/2}\), gives
eq.~\eqref{eq:jhep-Q-covariance}.

Writing the induced fermion edge flow in terms of the quadratic form
\(A=e^{T}e\) gives
\begin{equation}
\dot A=2\beta\,\cQ(A,B),
\qquad
\beta>0.
\label{eq:jhep-A-edge}
\end{equation}
Summing the edge contributions over the bipartite graph yields
\begin{subequations}
\label{eq:jhep-network-flow}
\begin{align}
\dot A_a
&=
2\sum_{r\sim a}
\beta_{ar}\,
\cQ(A_a,B_r),
\label{eq:jhep-network-A}\\
\dot B_r
&=
\sum_{a\sim r}
\alpha_{ar}\,
(A_a-B_r).
\label{eq:jhep-network-B}
\end{align}
\end{subequations}
A dot denotes differentiation with respect to an infrared-oriented RG
parameter \(\ell\).  The positive rates \(\alpha_{ar}\) and \(\beta_{ar}\)
absorb \(g_{ar}^2\), internal multiplicities, determinant factors and the
positive loop normalization; they may themselves run with the remaining
couplings.  Equations~\eqref{eq:jhep-network-flow} are the
QFT-derived nonlinear matrix flow analyzed in the rest of the paper.

\subsection{Isotropic limit and transition to the global problem}
\label{subsec:isotropic-limit}

The finite-matrix kernel contains the known scalar-velocity result as a
special case.  For
\begin{equation}
R=\nu^2\Id
\end{equation}
in three spatial dimensions,
\begin{equation}
\cH(\nu^2\Id)
=
h(\nu^2)\Id,
\qquad
h(\nu^2)
=
\frac{4(\nu-1)}
{3\nu(\nu+1)^2},
\label{eq:jhep-isotropic-H}
\end{equation}
while the temporal coefficient is
\begin{equation}
f_0(\nu^2\Id)
=
\frac{2}{\nu(\nu+1)^2}.
\label{eq:jhep-isotropic-f0}
\end{equation}
The comparison with the scalar-velocity literature can be made explicitly.
Write
\begin{equation}
A=v_F^2\Id,
\qquad
B=v_B^2\Id,
\qquad
\nu=\frac{v_B}{v_F}.
\label{eq:jhep-rjh-isotropic-map}
\end{equation}
The fermion edge rate contains the Jacobian factor
\(|\det e|^{-1}=v_F^{-3}\).  Consequently, after stripping the common
positive loop normalization, eqs.~\eqref{eq:jhep-network-A} and
\eqref{eq:jhep-isotropic-H} give the Yukawa contribution
\begin{equation}
\dot v_F
\ \propto\
\frac{4g^2}{3}
\frac{v_B-v_F}{v_B(v_F+v_B)^2},
\label{eq:jhep-rjh-vF-match}
\end{equation}
which reproduces the full dependence on positive \(v_F\) and \(v_B\) of the
Yukawa term in eq.~(3.6) of Ref.~\cite{RoyJuricicHerbut2016} in its
infrared RG convention.  Likewise, the scalar equation gives
\begin{equation}
\dot v_B
\ \propto\
\frac{g^2}{2}
\frac{v_F^2-v_B^2}{v_F^3v_B},
\label{eq:jhep-rjh-vB-match}
\end{equation}
with the fermion-flavor multiplicity absorbed into \(\alpha\); this is the
velocity dependence of the Yukawa term in eq.~(3.7) of that reference.
Thus the finite-mismatch kernel reproduces the known one-loop flow for an
arbitrary positive velocity ratio, extending the comparison beyond the
linearized neighborhood of \(v_F=v_B\).  More general Lorentz-violating scalar--Yukawa theories are
also known to be one-loop renormalizable
\cite{FerreroAltschul2011}; the present construction keeps the complete
finite relative SPD matrix in the velocity-sector kernel.

The determinant factor in
eq.~\eqref{eq:jhep-log-integral} follows equivalently from a direct spatial
linear transformation and from proper-time regularization, while
eq.~\eqref{eq:jhep-Q-covariance} gives the corresponding
congruence-covariance identity.  The detailed derivation is collected in
appendix~\ref{app:selfenergies}.

The field theory thus supplies the full one-loop vector field on a network of
finite propagation matrices.  The remaining question is global:
whether this nonlinear flow admits a coordinate-independent measure of
relative cone mismatch that is controlled for arbitrary finite
positive-definite initial data.  We turn to that geometric problem in the
next section.

\section{Relative-cone geometry and global nonexpansion}
\label{sec:global-geometry}

Section~\ref{sec:yukawa-flow} reduced the field theory to a nonlinear flow on
a finite family of positive-definite propagation matrices.  We now ask a
coordinate-independent global question: can the relative spread of these
matrices grow under the Yukawa RG?  We first separate common-coordinate
freedom from genuine multi-cone incompatibility, then analyze one edge, and
finally prove that the Thompson diameter of the full network is
nonincreasing for arbitrary finite positive-definite mismatch.

\subsection{Common congruence and simultaneous diagonalization}
\label{subsec:relative-geometry}

A common change of spatial coordinates acts on every propagation matrix by
the same congruence,
\begin{equation}
G_i\longmapsto S G_i S^{T},
\qquad
S\in GL(3,\mathbb R),
\label{eq:common-congruence}
\end{equation}
where \(G_i\) denotes either a fermion matrix \(A_a\) or a scalar matrix
\(B_r\), and \(GL(3,\mathbb R)\) is the group of invertible real
\(3\times3\) matrices.  Relative propagation geometry is therefore naturally described by
quantities invariant under eq.~\eqref{eq:common-congruence}.  A standard
simultaneous-diagonalization characterization gives a convenient intrinsic
criterion for multi-cone compatibility.  Choose one reference cone \(G_0\)
and whiten the family,
\begin{equation}
H_i=G_0^{-1/2}G_iG_0^{-1/2}.
\label{eq:whitened-family}
\end{equation}
Then the full family \(\{G_0,G_1,\ldots,G_m\}\) can be diagonalized by one
common congruence if and only if
\begin{equation}
[H_i,H_j]=0
\qquad
\text{for all }i,j.
\label{eq:simultaneous-congruence-criterion}
\end{equation}
Indeed, if the \(H_i\) commute, their symmetry implies simultaneous
orthogonal diagonalizability; composing that orthogonal transformation with
\(G_0^{-1/2}\) gives the required common congruence.  Conversely, any common
congruence that diagonalizes the original family can first be rescaled so
that \(G_0\) is sent to the identity, leaving an orthogonal simultaneous
diagonalization of the whitened family and hence
eq.~\eqref{eq:simultaneous-congruence-criterion}.

This observation also identifies the smallest genuinely matrix-valued
network configuration.  Any two SPD matrices can be simultaneously
diagonalized by congruence, whereas the three-node Yukawa path
\begin{equation}
B_1\;-\;F\;-\;B_2
\label{eq:minimal-path}
\end{equation}
can already fail to admit a common diagonal frame.  Taking the fermion cone
\(A_F\) as reference, define
\begin{equation}
H_1=A_F^{-1/2}B_1A_F^{-1/2},
\qquad
H_2=A_F^{-1/2}B_2A_F^{-1/2}.
\label{eq:minimal-whitened}
\end{equation}
Whenever \([H_1,H_2]\neq0\), the three propagation geometries cannot be
reduced by one spatial congruence to independent scalar-velocity channels.
This is the finite-matrix setting addressed below.

\subsection{Single-edge attraction at finite mismatch}
\label{subsec:single-edge}

Before treating a network, consider a single Yukawa edge with fermion cone
\(A\) and scalar cone \(B\).  Using the common-congruence covariance of
eq.~\eqref{eq:jhep-network-flow}, maintain the fermion frame at
\(A=\Id\).  The relative matrix
\begin{equation}
R=A^{-1/2}BA^{-1/2}
\label{eq:relative-R-sec3}
\end{equation}
then obeys
\begin{equation}
\dot R
=
\alpha(\Id-R)
-
\beta\{\cH(R),R\}.
\label{eq:relative-flow}
\end{equation}
Because \(\cH(R)\) is a spectral function of \(R\),
eq.~\eqref{eq:relative-flow} is diagonal in the instantaneous eigenbasis of
\(R\).  If \(r_i>0\) is an eigenvalue of \(R\), define
\begin{equation}
m_i(y)=1-y+yr_i.
\end{equation}
The corresponding eigenvalue \(h_i\) of \(\cH(R)\) is
\begin{equation}
h_i
=
(r_i-1)
\int_0^1\dd y\,
\frac{y(1-y)}
{m_i(y)\sqrt{\prod_k m_k(y)}}.
\label{eq:hi-sign}
\end{equation}
Hence
\begin{equation}
\operatorname{sgn}h_i
=
\operatorname{sgn}(r_i-1),
\label{eq:hi-sign-result}
\end{equation}
and
\begin{equation}
\dot r_i
=
\alpha(1-r_i)-2\beta r_i h_i
\label{eq:ri-flow}
\end{equation}
implies
\begin{equation}
\operatorname{sgn}\dot r_i
=
-\operatorname{sgn}(r_i-1).
\label{eq:ri-monotone}
\end{equation}
Thus every relative eigenvalue moves monotonically toward unity throughout
the finite SPD domain.

A convenient one-edge Lyapunov function is
\begin{equation}
\mathscr L(R)
=
\frac12\Tr[(\log R)^2]
=
\frac12\sum_i(\log r_i)^2.
\label{eq:single-edge-Lyapunov}
\end{equation}
Along eq.~\eqref{eq:relative-flow},
\begin{align}
\dot{\mathscr L}
={}&
\alpha
\sum_i
\log r_i
\left(
\frac1{r_i}-1
\right)
-
2\beta
\sum_i
(\log r_i)h_i
\nonumber\\
<&\,0
\qquad
(R\neq\Id).
\label{eq:single-edge-Ldot}
\end{align}
The strict sign follows from
eq.~\eqref{eq:hi-sign-result}.  A single interacting pair therefore locks
globally at finite mismatch.  For a network, the extremal backreaction of
shared nodes is captured more naturally by a metric on the full positive
cone.

\subsection{Thompson geometry for network mismatch}
\label{subsec:thompson}

For \(A,B\in\SPD(3)\), let \(\lambda_{\max}(X)\) denote the largest
eigenvalue of a real symmetric matrix \(X\).  Thompson's
metric~\cite{Thompson1963} is
\begin{equation}
\begin{aligned}
d_T(A,B)=\max\bigl\{&
\log\lambda_{\max}(A^{-1/2}BA^{-1/2}),\\
&\log\lambda_{\max}(B^{-1/2}AB^{-1/2})
\bigr\}.
\end{aligned}
\label{eq:Thompson-distance}
\end{equation}
It is invariant under a common congruence and retains relative isotropic
speed mismatch:
\begin{equation}
d_T(G,cG)=|\log c|.
\label{eq:Thompson-scale}
\end{equation}
The latter property distinguishes it from projective metrics that quotient
out relative scalar rescalings.  Thompson and Hilbert geometries on positive cones have a well-developed
contraction theory~\cite{Birkhoff1957,Thompson1963}, including matrix-valued
and noncommutative consensus settings
\cite{SepulchreSarletteRouchon2010,GaubertQu2014}.  Here the metric supplies
the geometry; the nontrivial step is to show that the QFT-derived Yukawa
vector field of eq.~\eqref{eq:jhep-network-flow} obeys the required extremal
inequality globally at finite mismatch.

For a finite network define
\begin{equation}
\DT
=
\max_{i,j}d_T(G_i,G_j)
=
\log\Lambda,
\label{eq:network-diameter}
\end{equation}
where ordered pairs allow
\begin{equation}
\Lambda
=
\max_{i,j}
\lambda_{\max}
\left(
G_i^{-1/2}G_jG_i^{-1/2}
\right).
\label{eq:Lambda-def}
\end{equation}
Equivalently, \(\Lambda\) is the least number satisfying
\begin{equation}
G_j\preceq \Lambda G_i
\qquad
\text{for every ordered pair }(i,j).
\label{eq:global-Loewner-bounds}
\end{equation}
This order-theoretic form makes the edge signs transparent.

\subsection{Yukawa kernel identities and extremal rigidity}
\label{subsec:kernel-rigidity}

The global argument rests on two identities specific to the finite-matrix
Yukawa kernel.

\paragraph{Inversion identity.}
Changing \(y\mapsto1-y\) in
eq.~\eqref{eq:jhep-H-kernel} and using
\begin{equation}
M_y(R^{-1})
=
R^{-1}[y\Id+(1-y)R]
\end{equation}
gives
\begin{equation}
\cH(R^{-1})
=
-\sqrt{\det R}\,\cH(R).
\label{eq:H-inversion}
\end{equation}
This relation maps the lower-extremum sign problem to its upper-extremum
dual.

\paragraph{Rank-one extremal sign.}
Let \(u\) be a Euclidean unit vector and suppose
\begin{equation}
R\succeq uu^T.
\label{eq:R-rankone-bound}
\end{equation}
The Schur-complement characterization of
eq.~\eqref{eq:R-rankone-bound} gives
\begin{equation}
u^TR^{-1}u\le1.
\label{eq:inverse-expectation}
\end{equation}
Since matrix inversion is operator convex on \(\SPD(3)\),
\begin{equation}
M_y(R)^{-1}
\preceq
(1-y)\Id+yR^{-1}.
\label{eq:inverse-convex}
\end{equation}
Therefore
\begin{equation}
u^T[\Id-M_y(R)^{-1}]u
\ge
y[1-u^TR^{-1}u]
\ge0,
\end{equation}
and the positive weight in
eq.~\eqref{eq:jhep-H-kernel} yields
\begin{equation}
R\succeq uu^T
\quad\Longrightarrow\quad
u^T\cH(R)u\ge0.
\label{eq:rankone-sign}
\end{equation}

The equality case is rigid.  If
\(u^T\cH(R)u=0\), the continuous nonnegative integrand must vanish.
Its expansion at \(y=0\) gives \(u^TRu=1\).  Together with
\(R-uu^T\succeq0\), this implies
\((R-uu^T)u=0\), and hence
\begin{equation}
u^T\cH(R)u=0
\quad\Longleftrightarrow\quad
Ru=u
\qquad
(R\succeq uu^T).
\label{eq:equality-rigidity}
\end{equation}
Using eq.~\eqref{eq:H-inversion}, the dual statement is
\begin{equation}
R^{-1}\succeq uu^T
\quad\Longrightarrow\quad
u^T\cH(R)u\le0,
\label{eq:dual-rankone-sign}
\end{equation}
with equality again if and only if \(Ru=u\).  The sign controls
nonexpansion; the equality condition will later transmit strictness through
a connected graph.

\subsection{Global nonexpansion}
\label{subsec:global-nonexpansion}

For a real-valued function \(f(s)\), we write \(D^+f(s)\) for its
upper-right Dini derivative.

We now apply the preceding identities to an active extremal pair.  Let
\((i,j)\) realize the maximum in eq.~\eqref{eq:Lambda-def}, and choose a
generalized eigenvector \(x\) normalized by
\begin{equation}
G_jx=\Lambda G_ix,
\qquad
x^TG_ix=1.
\label{eq:active-generalized-eigenpair}
\end{equation}
Then \(x^TG_jx=\Lambda\).  We first establish the endpoint signs
\begin{equation}
x^T\dot G_i x\ge0,
\qquad
x^T\dot G_j x\le0.
\label{eq:endpoint-signs}
\end{equation}

If the lower endpoint is bosonic, \(G_i=B_r\), the global bound
\(G_j\preceq\Lambda A_a\) holds for every neighbor \(a\sim r\).
Evaluating it on \(x\) gives \(x^TA_ax\ge1=x^TB_rx\), and
eq.~\eqref{eq:jhep-network-B} therefore gives
\(x^T\dot B_rx\ge0\).

If the lower endpoint is fermionic, \(G_i=A_a\), define for each incident
scalar edge
\begin{equation}
\begin{aligned}
u&=A_a^{1/2}x,
& C&=A_a^{-1/2}G_jA_a^{-1/2},\\
R&=A_a^{-1/2}B_rA_a^{-1/2}.
\end{aligned}
\label{eq:lower-fermion-frames}
\end{equation}
Then \(u\) is unit and \(Cu=\Lambda u\).  Global extremality applied to
the ordered pair \((B_r,G_j)\) gives
\(G_j\preceq\Lambda B_r\), hence
\(C\preceq\Lambda R\).  Since \(C/\Lambda\) is positive semidefinite and
has eigenvalue one along \(u\),
\begin{equation}
R\succeq C/\Lambda\succeq uu^T.
\end{equation}
Equation~\eqref{eq:rankone-sign} then gives
\(u^T\cH(R)u\ge0\).  Every incident term in
eq.~\eqref{eq:jhep-network-A} is therefore nonnegative on \(x\), proving
the lower fermionic part of eq.~\eqref{eq:endpoint-signs}.

The upper endpoint is analogous.  If \(G_j=B_r\), the bound
\(A_a\preceq\Lambda G_i\) for every \(a\sim r\) implies
\(x^TA_ax\le\Lambda=x^TB_rx\), and hence
\(x^T\dot B_rx\le0\).  If \(G_j=A_a\), define
\begin{equation}
\begin{aligned}
v&=\frac{A_a^{1/2}x}{\sqrt{\Lambda}},
& C&=A_a^{-1/2}G_iA_a^{-1/2},\\
R&=A_a^{-1/2}B_rA_a^{-1/2}.
\end{aligned}
\label{eq:upper-fermion-frames}
\end{equation}
Then \(v\) is unit and \(Cv=\Lambda^{-1}v\).  The global order bound
\(B_r\preceq\Lambda G_i\) becomes
\(R\preceq\Lambda C\), so
\begin{equation}
R^{-1}
\succeq
\Lambda^{-1}C^{-1}
\succeq vv^T.
\end{equation}
The dual sign
eq.~\eqref{eq:dual-rankone-sign} gives
\(v^T\cH(R)v\le0\), and every incident fermion-edge contribution is
nonpositive at the upper endpoint.  This proves
eq.~\eqref{eq:endpoint-signs} for all endpoint types.

To include degenerate active generalized eigenvalues, define
\begin{equation}
\Lambda_{ij}
=
\lambda_{\max}
\left(
G_i^{-1/2}G_jG_i^{-1/2}
\right)
\end{equation}
and its active generalized eigenspace
\begin{equation}
\mathcal E_{ij}
=
\{x:\;G_jx=\Lambda_{ij}G_ix\}.
\end{equation}
The perturbation formula for a symmetric matrix pencil gives the upper Dini
derivative
\begin{equation}
\Dini\Lambda_{ij}
=
\max_{\substack{x\in\mathcal E_{ij}\\x^TG_ix=1}}
x^T
\left(
\dot G_j-\Lambda_{ij}\dot G_i
\right)x.
\label{eq:pencil-Dini}
\end{equation}
The endpoint signs established above hold for every vector in an active
eigenspace, so
\begin{equation}
\Dini\Lambda_{ij}\le0
\end{equation}
for every active ordered pair.  If \(\mathcal A\) denotes the finite set of
ordered pairs attaining the network maximum \(\Lambda\), then
\begin{equation}
\Dini\Lambda
\le
\max_{(i,j)\in\mathcal A}
\Dini\Lambda_{ij}
\le0.
\label{eq:Lambda-nonexpansion}
\end{equation}

\begin{theorem}[Global nonexpansion]
\label{thm:global-nonexpansion}
Let the one-loop network flow
eq.~\eqref{eq:jhep-network-flow} exist in the positive-definite domain, with
positive edge rates on the fixed Yukawa graph.  Then its Thompson diameter
satisfies
\begin{equation}
\Dini\DT\le0.
\label{eq:DT-nonexpansion}
\end{equation}
Hence the relative cone diameter is nonincreasing for arbitrary finite SPD
initial mismatch throughout the controlled flow.
\end{theorem}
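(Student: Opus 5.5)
The proof assembles the pieces developed in subsection~\ref{subsec:global-nonexpansion} into a statement about $\DT=\log\Lambda$.

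\textbf{Step 1: reduce to $\Lambda$.} Along the flow every $G_i(s)$ stays in $\SPD(3)$ and is $C^1$. Hence each pair function $\Lambda_{ij}(s)=\lambda_{\max}(G_i^{-1/2}G_jG_i^{-1/2})$ is locally Lipschitz and positive. Since $\log$ is $C^1$ and increasing, $\Dini\DT=\Dini\Lambda/\Lambda$, so it suffices to prove $\Dini\Lambda\le0$ at every $s$.

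\textbf{Step 2: reduce from the maximum to the active pairs.} $\Lambda=\max_{(i,j)}\Lambda_{ij}$ is a finite maximum of continuous functions. Pairs with $\Lambda_{ij}(s)<\Lambda(s)$ remain strictly below the maximum for small $h>0$. Therefore
\[
\Dini\Lambda(s)\le\max_{(i,j)\in\mathcal A}\Dini\Lambda_{ij}(s),
\]
which is eq.~\eqref{eq:Lambda-nonexpansion}.

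\textbf{Step 3: compute the derivative of each active pair.} For each active pair, invoke the pencil perturbation formula eq.~\eqref{eq:pencil-Dini}. I would justify it by writing $\Lambda_{ij}=\max_{x^TG_ix=1}x^TG_jx$, a Rayleigh quotient. Danskin's theorem then gives the one-sided derivative as the maximum of $x^T(\dot G_j-\Lambda_{ij}\dot G_i)x$ over normalized maximizers, and the maximizers are exactly the normalized vectors of $\mathcal E_{ij}$.

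\textbf{Step 4: show each such quantity is nonpositive.} This uses the endpoint signs eq.~\eqref{eq:endpoint-signs}, which must hold for \emph{every} normalized $x\in\mathcal E_{ij}$, not just one chosen eigenvector. That is why the case analysis above is phrased for an arbitrary active generalized eigenvector.
\begin{itemize}
\item Bosonic endpoints: the scalar flow eq.~\eqref{eq:jhep-network-B} gives the sign directly from the global Loewner bounds eq.~\eqref{eq:global-Loewner-bounds}, using $\alpha_{ar}>0$.
\item Lower fermionic endpoint: the rank-one bound $R\succeq uu^T$ holds edge by edge. Then eq.~\eqref{eq:rankone-sign} makes each term $x^T\cQ(A_a,B_r)x=u^T\cH(R)u$ nonnegative, and $\beta_{ar}>0$ preserves the sign.
\item Upper fermionic endpoint: use the dual sign eq.~\eqref{eq:dual-rankone-sign}, obtained from the inversion identity eq.~\eqref{eq:H-inversion}.
\end{itemize}
With $\Lambda_{ij}>0$, these give $x^T\dot G_jx-\Lambda_{ij}x^T\dot G_ix\le0$. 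Combining with Steps 2 and 3 yields $\Dini\Lambda\le0$, and hence eq.~\eqref{eq:DT-nonexpansion}.

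\textbf{Step 5: monotonicity.} To conclude that $\DT$ is nonincreasing, I would apply the standard fact that a continuous function with $\Dini f\le0$ everywhere on an interval is nonincreasing. This follows from a comparison argument applied to $f(s)-\epsilon s$, then letting $\epsilon\to0$.

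\textbf{Main obstacle.} The delicate point is the regularity in Step 3 when eigenvalues are degenerate, or when several pairs, possibly including $i=j$ trivial pairs, are active simultaneously. I would handle it by the Danskin/Rayleigh-quotient formulation over the compact set $\{x:x^TG_i(s)x=1\}$. To deal with the moving constraint, I would rewrite it as $\lambda_{\max}(G_i^{-1/2}G_jG_i^{-1/2})$ with a fixed unit sphere, so that Danskin applies with a $C^1$ objective. The maximizer set then maps back to $\mathcal E_{ij}$, and the derivative of $G_i^{-1/2}$ contributes exactly the $-\Lambda_{ij}\dot G_i$ term.
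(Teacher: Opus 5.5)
Your proposal is correct and follows the paper's own argument: write $\DT=\log\Lambda$, bound $\Dini\Lambda$ by the maximum over active pairs, apply the pencil formula eq.~\eqref{eq:pencil-Dini}, and check the four endpoint-sign cases (bosonic via the Loewner bounds, lower fermionic via the rank-one sign, upper fermionic via its inversion dual) for every normalized vector of $\mathcal E_{ij}$. The differences are minor. You justify the pencil formula by Danskin applied to $\lambda_{\max}(G_i^{-1/2}G_jG_i^{-1/2})$ on a fixed sphere, whereas the paper uses the homogeneous Rayleigh quotient and so avoids differentiating a matrix square root; you also make explicit the standard step from $\Dini\DT\le0$ to monotonicity, which the paper leaves implicit.
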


\begin{proof}
Equation~\eqref{eq:network-diameter} gives
\(\DT=\log\Lambda\).  Since \(\Lambda>0\),
eq.~\eqref{eq:Lambda-nonexpansion} implies
\[
\Dini\DT
=
\frac{\Dini\Lambda}{\Lambda}
\le0.
\]
\end{proof}

The theorem also gives a useful compactness statement for the normalized
relative sector.  Let \(D_0=\DT(0)\) and choose any reference cone as the
identity by a common congruence.  Every normalized cone \(\widetilde G_i\)
then satisfies
\begin{equation}
e^{-D_0}\Id
\preceq
\widetilde G_i
\preceq
e^{D_0}\Id.
\label{eq:compact-Loewner-sector}
\end{equation}
Thus the initial Thompson diameter controls a compact Loewner interval for
all relative propagation matrices.

For a complete bipartite graph, the equality rigidity in
eq.~\eqref{eq:equality-rigidity} already makes the active diameter strictly
inward at every nonsynchronized state.  Sparse connected graphs are subtler:
an active extremum may be shielded by several edges on which the first
derivative vanishes.  Theorem~\ref{thm:global-nonexpansion} therefore
establishes global control of the relative spread, while the next section
develops the finite-time propagation mechanism by which strict contraction
travels through a general connected interaction graph.

\section{From nonexpansion to global common-cone locking}
\label{sec:finite-window-locking}

Theorem~\ref{thm:global-nonexpansion} shows that the largest relative
cone mismatch can never increase.  On a sparse network, locking requires a
second step: an extremal cone can be separated from the first genuinely
mismatched edge by several equal-extremum edges, so its instantaneous
Thompson derivative may vanish before strictness reaches that endpoint.  The
central question is therefore whether strictness propagates through a
connected Yukawa graph in finite accumulated interaction time.  We establish
this under uniform perturbative control and persistent edge activity.

\subsection{Delayed strictness on sparse graphs}
\label{subsec:why-window}

The mechanism is easiest to see on a path.  Suppose several neighboring
nodes initially share the same extremal generalized eigenvalue along an
active direction, while the first mismatch occurs only farther along the
graph.  The endpoint sign argument of
section~\ref{subsec:global-nonexpansion} then saturates on the first edge:
the relevant rank-one inequality is an equality and the first derivative of
the active diameter vanishes.  The next node can nevertheless begin to move
because of a mismatch on its other side, after which strictness is
transmitted one edge closer to the endpoint.  Repeating this mechanism
produces a finite-order delay rather than a failure of contraction.

This motivates a finite-window formulation.  We separate the overall
strength of the running Yukawa interactions from their relative edge
weights by introducing an accumulated interaction clock.  The theorem below
is stated entirely in that clock; section~\ref{subsec:qft-persistence} will
later give an explicit field-theoretic realization for which the accumulated
clock diverges in the infrared.

\subsection{Persistent-interaction assumptions and the locking theorem}
\label{subsec:persistent-assumptions}

Let \(Z\) denote the remaining running couplings and wave-function data on
which the positive cone-flow rates may depend.  We assume that the extended
one-loop RG state
\begin{equation}
\mathsf X=(\{G_i\},Z)
\label{eq:extended-state}
\end{equation}
evolves autonomously and real analytically on a compact invariant
perturbative domain \(\mathcal K\).  Let
\(\chi(\mathsf X)>0\) be a common interaction scale and define
\begin{equation}
\frac{\dd s}{\dd\ell}
=
\chi(\mathsf X(\ell)).
\label{eq:interaction-clock}
\end{equation}
The normalized edge rates are
\begin{equation}
\widehat\alpha_{ar}
=
\frac{\alpha_{ar}}{\chi},
\qquad
\widehat\beta_{ar}
=
\frac{\beta_{ar}}{\chi}.
\label{eq:normalized-rates}
\end{equation}
The persistence condition is that, on the fixed connected graph,
\begin{equation}
0<m
\le
\widehat\alpha_{ar},
\widehat\beta_{ar}
\le
M<\infty
\qquad
\text{for every }(a,r)\in E,
\label{eq:uniform-rate-bounds}
\end{equation}
with constants \(m,M\) independent of \(s\).  The four assumptions have
distinct roles: connectedness permits strictness to propagate across the
network, eq.~\eqref{eq:uniform-rate-bounds} prevents active edges from
disappearing in the normalized clock, compactness supplies uniform bounds,
and analyticity controls finite-order propagation through temporarily
saturated extremal branches.

In \(s\)-time the cone equations retain the form
\begin{subequations}
\label{eq:normalized-network-flow}
\begin{align}
\frac{\dd A_a}{\dd s}
&=
2\sum_{r\sim a}
\widehat\beta_{ar}\,
\cQ(A_a,B_r),
\label{eq:normalized-A}\\
\frac{\dd B_r}{\dd s}
&=
\sum_{a\sim r}
\widehat\alpha_{ar}\,
(A_a-B_r).
\label{eq:normalized-B}
\end{align}
\end{subequations}

\begin{theorem}[Finite-window contraction]
\label{thm:finite-window}
Consider the normalized one-loop flow
eq.~\eqref{eq:normalized-network-flow} on a fixed finite connected Yukawa
graph.  Suppose the extended state remains in a compact invariant
perturbative domain, the normalized vector field is real analytic there,
and the edge rates obey eq.~\eqref{eq:uniform-rate-bounds}.  Then for every
\(T>0\) there exists
\begin{equation}
0<\rho_T<1
\label{eq:rhoT}
\end{equation}
such that every trajectory in the domain satisfies
\begin{equation}
\DT(s+T)
\le
\rho_T\,\DT(s)
\label{eq:finite-window-contraction}
\end{equation}
whenever both times belong to the controlled flow interval.  Consequently,
if \(s(\ell)\to\infty\) along the infrared RG trajectory, then
\begin{equation}
\DT(\ell)\longrightarrow0.
\label{eq:DT-locking}
\end{equation}
Thus all sectors asymptotically share one relative propagation geometry.
\end{theorem}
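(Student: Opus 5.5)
The plan is to derive the window estimate from three ingredients: the nonexpansion of Theorem~\ref{thm:global-nonexpansion}, the equality rigidity in eq.~\eqref{eq:equality-rigidity}, and compactness of $\mathcal K$. Congruence invariance of $\DT$ and equivariance in eq.~\eqref{eq:jhep-Q-covariance} let us fix one reference cone at $\Id$. The normalized configurations then lie in the compact Loewner sector of eq.~\eqref{eq:compact-Loewner-sector}. Since the normalized flow is autonomous in $s$, it suffices to bound $\DT(T)/\DT(0)$ uniformly over initial states $\mathsf X\in\mathcal K$ with $\DT(0)>0$. There are two regimes: states bounded away from synchronization, and states near the synchronized diagonal $\{G_i=G\}$.

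\textbf{Step 1 (strictness for each nonsynchronized state).} We show that $\DT(T)<\DT(0)$ whenever $\DT(0)>0$, and argue by contradiction. If equality held, the monotonicity of Theorem~\ref{thm:global-nonexpansion} would force $\DT$ to be constant on $[0,T]$. By real analyticity, the active set $\mathcal A$ and analytic branches of the active generalized eigenpairs are locally constant on a subinterval. On that subinterval eq.~\eqref{eq:pencil-Dini} must vanish, so both endpoint inequalities in eq.~\eqref{eq:endpoint-signs} are equalities along an active $x$.

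For a fermionic lower endpoint $A_a$, every incident edge term is nonnegative and their weighted sum vanishes, with weights $\widehat\beta_{ar}\ge m>0$. Hence each $u^T\cH(R)u=0$, and eq.~\eqref{eq:equality-rigidity} gives $Ru=u$, i.e.\ $B_rx=A_ax$. For a bosonic lower endpoint $B_r$, vanishing of $\sum\widehat\alpha_{ar}\,x^T(A_a-B_r)x$ together with $A_a\succeq\Lambda^{-1}G_j$ gives $x^TA_ax=x^TB_rx$. Because $x$ also attains the minimum of the Rayleigh quotient of $A_a$ relative to $G_j$, we get $A_ax=B_rx$. In either case the neighbor $k$ satisfies $G_kx=G_ix$. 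The pair $(k,j)$ is then again active with the same $x$, so the argument iterates.

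Connectedness then propagates $G_kx=G_ix$ to every node, including $j$. This gives $\Lambda G_ix=G_jx=G_ix$, hence $\Lambda=1$ and $\DT=0$, a contradiction. The same propagation also applies from the upper endpoint via eq.~\eqref{eq:dual-rankone-sign}. By continuity of the flow map, the function $\mathsf X\mapsto \DT(T)/\DT(0)$ is then continuous and $<1$ on the compact set $\{\DT(0)\ge\varepsilon\}\cap\mathcal K$. Its supremum $\rho^{(\varepsilon)}$ is therefore $<1$.

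\textbf{Step 2 (near the synchronized set; main obstacle).} As $\DT(0)\to0$ the ratio is $0/0$, so compactness alone fails there. I would handle this by a blow-up. Write $G_i=G(\Id+\epsilon X_i)$ with $\epsilon=\DT(0)$, where the $X_i$ are symmetric and normalized to unit relative Thompson spread. The kernel expansion $\cH(\Id+\epsilon X)=\tfrac{\epsilon}{?}\,c\,X+O(\epsilon^2)$ has $c>0$, as fixed by $h'(1)$ in eq.~\eqref{eq:jhep-isotropic-H}. So to leading order the flow of the $X_i$ is a linear Laplacian consensus on the connected graph with weights in $[m,M]$, uniformly in $Z\in\mathcal K$.

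For this linear consensus, the spread $\max_{i,j}\lambda_{\max}(X_j-X_i)$ is nonexpansive. The same rigidity-propagation argument, now with linear equality conditions, gives strict decrease over time $T$ for every nonconsensus datum. Compactness of the normalized blow-up data then gives a factor $\rho_0<1$. The $O(\epsilon)$ corrections are uniform on $\mathcal K$, so for small enough $\varepsilon$ the ratio is at most $(1+\rho_0)/2$ when $\DT(0)<\varepsilon$.

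Taking $\rho_T=\max\{\rho^{(\varepsilon)},(1+\rho_0)/2\}$ gives eq.~\eqref{eq:finite-window-contraction}. Iterating over consecutive windows gives $\DT(s)\le\rho_T^{\lfloor s/T\rfloor}\DT(0)$, so eq.~\eqref{eq:DT-locking} follows whenever $s(\ell)\to\infty$.

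The delicate points are two. The first is making the analytic-branch argument in Step 1 rigorous when the active set changes. Analyticity rules out accumulation of switching times, and this is what I expect to lean on. The second is verifying uniformity of the blow-up limit in Step 2.
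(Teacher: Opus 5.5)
Your proposal is correct, but it reaches strictness away from synchronization by a different and more elementary route than the paper.

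\textbf{Away from synchronization.} The paper excludes a constant-diameter interval by isolating, via Rellich branches and Baire, one analytic branch that is constant on a subinterval. It then contradicts Lemma~\ref{lem:endpoint-propagation}, a Taylor-order induction that carries strictness across saturated edges through the Fr\'echet transfer coefficient \(c(R)>0\) while controlling rate- and frame-derivative terms. You stay at first order instead. On an interval where \(\DT\) is constant, every neighbor \(k\) of a stationary lower endpoint becomes the lower endpoint of an active pair \((k,j)\) with the same value \(\Lambda_*\), so it is itself stationary. Equality rigidity then walks \(G_kx=G_ix\) along a path to \(j\). This is the complete-bipartite argument of appendix~\ref{app:global-proofs}, made to work on sparse graphs by using the whole interval rather than one instant. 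It needs neither \(c(R)\) nor the Taylor bookkeeping. What it gives up is the order at which strictness appears, which the paper's lemma supplies and reuses in Theorem~\ref{thm:topology-onset}.

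You should make explicit why \((k,j)\) is stationary along the same \(x\), since vanishing of eq.~\eqref{eq:pencil-Dini} only says the maximum over the active eigenspace is zero. Either of two arguments closes this:
(i) Differentiate \(x(s)^TG_jx(s)-\Lambda_*\,x(s)^TG_kx(s)\equiv0\), which holds because \(G_kx=G_ix\) throughout the subinterval; the \(\dot x\) terms cancel.
(ii) At an interior time, the left derivative of \(\Lambda_{kj}\) is the minimum of \(x^T(\dot G_j-\Lambda\dot G_k)x\) over the normalized active eigenspace and must be nonnegative. With the endpoint signs this forces vanishing on the whole eigenspace, and it removes the need for analyticity in this step.

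\textbf{Near synchronization.} The paper is constructive: an explicit entrywise bound \(\eta_T\), a Dobrushin coefficient at most \(1-N\eta_T\), local equivalence of \(\DT\) with operator-norm oscillation, and an \(O(\varepsilon)\) remainder. Your blow-up plus compactness is non-constructive but equivalent, and the uniformity you flag is exactly that remainder bound. Three details need fixing there:
(i) The form \(\cH(\Id+\epsilon X)\approx c\,\epsilon X\) with no \(\Tr(X)\Id\) part requires the full Fr\'echet derivative \(D\cH_{\Id}[K]=K/6\); \(h'(1)\) alone only fixes the isotropic response.
(ii) The parametrization should read \(G_i=G^{1/2}(\Id+\epsilon X_i)G^{1/2}\).
(iii) \(Z_0\) must be included in the compact blow-up data, because the limiting consensus rates depend on \(Z(s)\).
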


The proof has two regimes.  Away from synchronization, the equality
rigidity of section~\ref{subsec:kernel-rigidity} rules out any
nonsynchronized interval of constant diameter and compactness upgrades this
pointwise strictness to a uniform finite-window factor.  Near
synchronization, the Yukawa kernel linearizes to an irreducible
bidirectional consensus system, which supplies a second uniform factor.
The remainder of this section gives the proof architecture; the branch and
compactness technicalities are collected in
appendix~\ref{app:global-proofs}.

\subsection{Strong endpoint propagation}
\label{subsec:endpoint-propagation}

We first record the mechanism that turns equality at one extremal edge into
information about the next edge.  Let
\((i,j)\) be an active ordered pair with
\begin{equation}
G_jx=\Lambda G_ix,
\qquad
x^TG_ix=1,
\qquad
\Lambda=e^{\DT}>1.
\label{eq:active-pair-sec4}
\end{equation}
Suppose the lower endpoint derivative along \(x\) vanishes.  Every incident
term that entered the nonnegative sum in the proof of
Theorem~\ref{thm:global-nonexpansion} must then vanish separately.

For a bosonic lower endpoint \(B_r\), this gives
\begin{equation}
x^T(A_a-B_r)x=0
\qquad
\text{for every }a\sim r.
\label{eq:boson-equality-edge}
\end{equation}
The global Loewner ordering used in section~\ref{subsec:global-nonexpansion}
makes \(A_a-B_r\) positive semidefinite on the active extremal direction,
so the equality implies the corresponding edge agrees exactly along that
direction.  For a fermionic lower endpoint, the equality condition
eq.~\eqref{eq:equality-rigidity} gives, in the fermion frame,
\begin{equation}
R\,u=u,
\label{eq:fermion-equality-edge}
\end{equation}
which is the same directional matching statement in invariant form.  The
upper endpoint has the dual property.

The next derivative detects the first mismatch behind such a saturated
edge.  Let \(D\cH_R[K]\) denote the Fr\'echet derivative of the Yukawa
kernel at \(R\) in the symmetric matrix direction \(K\).  If \(Rz=z\),
then for every such perturbation \(K\) generated by the neighboring cone
flow,
\begin{equation}
z^T(D\cH_R[K])z
=
c(R)\,z^TKz,
\qquad
c(R)>0.
\label{eq:Frechet-transfer}
\end{equation}
The Fr\'echet derivative and the explicit positive integral for
\(c(R)\) are derived in appendix~\ref{app:selfenergies};
appendix~\ref{app:global-proofs} uses this identity in the full
endpoint-propagation induction.  Equation~\eqref{eq:Frechet-transfer}
has two consequences.  First, a strict inward coefficient generated at a
mismatched edge retains its sign when transmitted across an equality edge.
Second, crossing one saturated edge delays the first nonzero inward term by
one derivative order.

Iterating along a path gives the strong endpoint-propagation statement:

\begin{lemma}[Endpoint propagation]
\label{lem:endpoint-propagation}
Consider an analytic trajectory of
eq.~\eqref{eq:normalized-network-flow} and an analytic active generalized
eigenvalue branch with \(\Lambda>1\).  If its first derivative vanishes,
then either the branch is already synchronized along the entire connected
graph or there exists a finite derivative order at which the branch is
strictly inward.  The first nonzero coefficient has the contraction sign.
\end{lemma}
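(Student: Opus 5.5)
Write the analytic active branch as \(\Lambda(s)=x(s)^TG_j(s)x(s)\), with \(G_j x=\Lambda G_i x\) and \(x^TG_ix=1\). Differentiating repeatedly gives \(\Lambda^{(n)}(0)=x^T(G_j^{(n)}-\Lambda G_i^{(n)})x\) plus lower-order terms. These lower-order terms are built from the derivatives of \(x\) and from \(\Lambda^{(k)}\) with \(k<n\). The plan is an induction on the order \(n\) of the first nonvanishing endpoint derivative, organised by graph distance. For each node \(v\), define the \emph{contact order} \(n_v\) as the first order at which \(x^TG_v^{(n)}x\) departs from the extremal value \(1\) (for the lower plateau) or \(\Lambda\) (for the upper plateau).

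The target claim is that contact order grows by at most one per edge, with the sign fixed inward. That is, \(n_{v}\le n_w+1\) whenever \(v\sim w\), and the first nonzero term is always inward.

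\emph{Base case.} Suppose the first derivative vanishes. By the equality analysis of section~\ref{subsec:endpoint-propagation}, every incident edge at the lower endpoint is saturated along \(x\). For a bosonic endpoint this gives eq.~\eqref{eq:boson-equality-edge}. For a fermionic endpoint it gives \(Ru=u\) as in eq.~\eqref{eq:fermion-equality-edge}, and the dual statement holds at the upper endpoint. Because \(G_v\preceq\Lambda G_i\) and \(G_j\preceq\Lambda G_v\) hold globally, any saturated neighbour \(v\) is itself extremal along \(x\): it lies on the same lower or upper plateau. Consider the set \(P\) of nodes lying on a plateau along \(x\). If \(P\) contains the whole connected graph, then every node agrees along \(x\) with either \(G_i\) or \(G_j\). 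Connectedness then forces an edge between a lower and an upper plateau node, and since \(\Lambda>1\) that edge is not saturated. So either the branch is synchronized along the graph (the first alternative of the Lemma), or there is a boundary edge \((v,w)\) with \(v\in P\) and \(w\) strictly off the plateau along \(x\). That boundary edge gives a strict first-order inward motion of \(x^TG_vx\): for a bosonic \(v\) this is immediate from eq.~\eqref{eq:jhep-network-B}, and for a fermionic \(v\) the strict inequality comes from eq.~\eqref{eq:equality-rigidity}.

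\emph{Inductive step.} Follow a shortest path of plateau nodes from the endpoint to \(v\). Across each saturated edge, the eigenvector relation \(Rz=z\) holds. Eq.~\eqref{eq:Frechet-transfer} shows that the first nonzero variation of a neighbour, \(z^TKz\), is transmitted with a positive factor \(c(R)\), one derivative order higher. The bosonic edges transmit \(x^T(A_a-B_r)x\) linearly with the positive rates \(\widehat\alpha\). The induction hypothesis is that, at order \(k\), every node at distance \(k\) from the first strict edge has a nonzero \(k\)-th derivative along \(x\) with inward sign, while all closer-to-endpoint derivatives of lower order vanish. Under this hypothesis the \(n\)-th derivative at the endpoint reduces to a sum of positively weighted inward terms.

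\emph{Main obstacle.} The hard part is controlling the cross terms at order \(n\). These come from the motion of the eigenvector \(x(s)\), from \(\Lambda^{(k)}\) with \(k<n\), and from higher Fréchet derivatives of \(\cH\). I would handle them by using the extremality of the branch. By the Rayleigh-quotient (pencil) characterization, derivatives of \(x\) only contribute at orders where the quadratic form \(G_j-\Lambda G_i\) has already moved. The terms generated by \(\dot x\) are therefore of the form \(\dot x^T(G_j-\Lambda G_i)\dot x\le 0\), which is negative semidefinite on the complement of \(\mathcal E_{ij}\). These terms either vanish or carry the contraction sign. Analyticity of the branch ensures the expansion exists, so the first nonzero coefficient is well defined. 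Its order is at most \(1+{}\)the graph distance to the first strict edge, which is finite on a finite connected graph.
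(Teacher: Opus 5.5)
\textbf{The cross-term step fails.} Your architecture matches the paper's (appendix~\ref{appsub:jet-propagation}): find the first edge leaving the lower saturated plateau, get a strict order-one flux there from rigidity, and carry it back through equality edges with the positive multipliers $\widehat\alpha_e(0)$ and $2\widehat\beta_e(0)c(R)$. The trouble is how you dispose of the cross terms. The eigenvector-motion terms do not carry the contraction sign; they carry the opposite one. For $G_jx=\lambda G_ix$ with $x^TG_ix=1$, differentiating twice gives
\[
\ddot\lambda
=
x^T\bigl(\ddot G_j-\lambda\ddot G_i-2\dot\lambda\,\dot G_i\bigr)x
-2\,\dot x^T(G_j-\lambda G_i)\dot x .
\]
So the quantity you correctly note is $\le0$ enters with coefficient $-2$ and pushes the branch up. More generally, $\lambda(s)\ge r_{x_0}(s)$, where $r_{x_0}$ is the Rayleigh quotient at the frozen vector $x_0=x(0)$, with equality at $s=0$. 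The whole eigenvector-motion correction $\lambda-r_{x_0}$ is therefore nonnegative; this is the usual nonnegative second-order shift of a largest eigenvalue.

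The higher Fr\'echet derivatives you list are not inward-signed either. Projecting the derivative of eq.~\eqref{eq:appB-DH} on an equality direction $u$ with $Ru=u$ gives
\[
u^TD^2\cH_R[K,K]u=-2\int_0^1\frac{y^2(1-y)}{\sqrt{\det M_y}}\,(Ku)^TM_y^{-1}(Ku)\,\dd y+(\cdots)\,u^TKu .
\]
This is nonzero, and outward, whenever $Ku\neq0$, even though $u^TKu=0$. If either kind of term appeared at or before the order at which your transferred coefficient arrives, it could cancel or reverse that coefficient. So the induction as written establishes neither finiteness nor the sign.

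\textbf{How to repair it.} You must show these terms vanish below the arrival order. That requires carrying the vector-valued mismatches $G_v^{(k)}x$ through the induction, not only the scalars $x^TG_v^{(k)}x$. The starting point is the vector equalities $G_vx=G_ix$ from lower-saturation rigidity and Lemma~\ref{lem:appC-endpoint-equality}. This is what the paper means when it says rate, frame and eigenvector derivatives only multiply lower-order edge mismatches that have already vanished.

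Alternatively, the sign comes for free. By Theorem~\ref{thm:global-nonexpansion}, $\lambda(s)\le\Lambda(s)\le\Lambda(0)=\lambda(0)$ for $s\ge0$, so the first nonzero Taylor coefficient of an initially active analytic branch must have the contraction sign. Only $\lambda\not\equiv\lambda(0)$ then remains. If it failed, the diameter would be constant on some $[0,\epsilon)$. At an interior time every frozen Rayleigh quotient attains its maximum, so every active direction of every active pair has zero derivative. Lemma~\ref{lem:appC-endpoint-equality} then propagates $G_vx=G_ix$ node by node along a path from $i$ to $j$, since each newly reached node forms an active pair with $j$ along the same $x$. This contradicts $\Lambda>1$. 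This route loses the explicit order bound that Theorem~\ref{thm:topology-onset} needs.

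\textbf{Base case.} Your dichotomy is mislabelled. When $P$ is the whole graph you get a strict lower--upper edge, not synchronization. Also, a same-plateau path from the endpoint to your boundary node need not exist. Instead, take any path from $i$ to $j$ and stop at the first node with $x^TG_vx>1$, as the paper does.
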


The induction alternates between the bosonic linear edge
eq.~\eqref{eq:normalized-B} and the fermionic transfer
eq.~\eqref{eq:Frechet-transfer}.  Derivatives of the normalized rates and of
the moving fermion frame multiply edge fluxes whose lower-order coefficients
have already vanished, so they do not change the first nonzero order or its
sign.  Since the graph is finite and connected, a nonsynchronized active
endpoint encounters a mismatch after finitely many propagation steps.
Appendix~\ref{app:global-proofs} gives the full Taylor-order induction.

\subsection{Strictness away from synchronization}
\label{subsec:away-from-sync}

The Thompson diameter is the maximum of finitely many largest generalized
eigenvalues and can lose analyticity when active branches switch.  The
matrices in the extended RG state remain analytic in \(s\), and
the generalized eigenvalues of each symmetric matrix pencil admit local
analytic parametrizations.  Hence on every compact time interval the
diameter is the upper envelope of finitely many analytic generalized
eigenvalue branches, after a finite local subdivision.

Suppose a nonsynchronized trajectory had
\begin{equation}
\DT(s)=D_*>0
\label{eq:constant-diameter}
\end{equation}
through a nontrivial time interval.  The finite family of analytic active
branches covers that interval.  A standard finite-branch/Baire argument
then yields a subinterval on which one analytic branch supports the
constant envelope.  All of its derivatives vanish there, contradicting
Lemma~\ref{lem:endpoint-propagation}.  Therefore every nonsynchronized
trajectory satisfies
\begin{equation}
\DT(s+T)<\DT(s)
\qquad
\text{for every }T>0
\label{eq:pointwise-window-strict}
\end{equation}
as long as the trajectory remains in the controlled domain.

Pointwise strictness becomes uniform away from the synchronized manifold.
For \(\varepsilon>0\), define
\begin{equation}
\mathcal K_\varepsilon
=
\left\{
\mathsf X\in\mathcal K:
\DT(\mathsf X)\ge\varepsilon
\right\}.
\label{eq:Kepsilon}
\end{equation}
This set is compact.  Let \(\Phi_T\) be the \(s\)-time-\(T\) flow map and
define
\begin{equation}
R_T(\mathsf X)
=
\frac{\DT(\Phi_T\mathsf X)}
{\DT(\mathsf X)}.
\label{eq:RT-ratio}
\end{equation}
Continuity of the flow and of Thompson's metric makes \(R_T\) continuous
on \(\mathcal K_\varepsilon\), while
eq.~\eqref{eq:pointwise-window-strict} gives \(R_T<1\) pointwise.
Therefore
\begin{equation}
\rho_{T,\varepsilon}^{\rm out}
=
\max_{\mathsf X\in\mathcal K_\varepsilon}
R_T(\mathsf X)
<1.
\label{eq:rho-out}
\end{equation}

\subsection{Near-synchronization consensus contraction}
\label{subsec:near-sync}

Near synchronization, the ratio eq.~\eqref{eq:RT-ratio} approaches the
synchronized manifold, where compactness alone does not yield a strict
factor.  A common congruence places the synchronized cone at the identity.  Write
\begin{equation}
A_a=\Id+X_a,
\qquad
B_r=\Id+Y_r,
\qquad
\|X_a\|,\|Y_r\|=O(\varepsilon).
\label{eq:near-sync-variables}
\end{equation}
The finite-matrix kernel has the Fr\'echet expansion
\begin{equation}
\cH(\Id+E)
=
\frac16E
+
O(\|E\|^2),
\label{eq:H-linearization}
\end{equation}
so
\begin{equation}
\cQ(\Id+X_a,\Id+Y_r)
=
\frac16(Y_r-X_a)
+
O(\varepsilon^2).
\label{eq:Q-linearization}
\end{equation}
To first order, eq.~\eqref{eq:normalized-network-flow} becomes
\begin{subequations}
\label{eq:linear-consensus}
\begin{align}
\dot X_a
&=
\sum_{r\sim a}
\frac{\widehat\beta_{ar}}{3}
(Y_r-X_a),
\label{eq:linear-consensus-F}\\
\dot Y_r
&=
\sum_{a\sim r}
\widehat\alpha_{ar}
(X_a-Y_r).
\label{eq:linear-consensus-B}
\end{align}
\end{subequations}
Here and below a dot denotes \(\dd/\dd s\).

For each independent entry of a symmetric perturbation matrix,
eq.~\eqref{eq:linear-consensus} is an ordinary continuous-time consensus
system on the bidirected version of the Yukawa graph.  Its generator is
Metzler with zero row sums.  Connectedness together with
eq.~\eqref{eq:uniform-rate-bounds} makes the transition operator over every
fixed \(T>0\) uniformly scrambling: there exists
\begin{equation}
0<\eta_T<1
\label{eq:etaT}
\end{equation}
depending only on \(T\), the finite graph and the bounds \(m,M\), such that
the scalar oscillation contracts by at most \(\eta_T\).  Equivalently, the
Dobrushin coefficient of the transition operator is uniformly below one.
A pathwise variation-of-constants estimate gives a uniform positive
lower bound on every entry of the time-\(T\) transition matrix; the explicit
bound and the resulting Dobrushin estimate are given in
appendix~\ref{app:global-proofs}.

Near the synchronized manifold, Thompson distance is locally equivalent to
the oscillation of the logarithmic matrix perturbations.  More precisely,
for sufficiently small \(\varepsilon\) there are uniform constants
\(c_-,c_+>0\) such that
\begin{equation}
c_-\,\operatorname{osc}(X,Y)
\le
\DT
\le
c_+\,\operatorname{osc}(X,Y),
\label{eq:local-Thompson-equivalence}
\end{equation}
where \(\operatorname{osc}\) may be taken as the operator-norm diameter
\(\max_{u,v}\|Z_u-Z_v\|_{\rm op}\) on the quotient by the common matrix
mode.  The quadratic remainder in eq.~\eqref{eq:Q-linearization} is uniformly
\(O(\varepsilon)\) relative to the linear spread.  For sufficiently small
\(\varepsilon\), it preserves a strict finite-window factor:
\begin{equation}
\DT(s+T)
\le
\rho_T^{\rm in}\,\DT(s),
\qquad
0<\rho_T^{\rm in}<1,
\qquad
\DT(s)<\varepsilon.
\label{eq:rho-in}
\end{equation}

\subsection{Completion of the finite-window theorem}
\label{subsec:complete-window}

Choose \(\varepsilon>0\) small enough for
eq.~\eqref{eq:rho-in}.  On the complementary region
\(\DT\ge\varepsilon\), use
eq.~\eqref{eq:rho-out}.  Then
\begin{equation}
\rho_T
=
\max
\left\{
\rho_{T,\varepsilon}^{\rm out},
\rho_T^{\rm in}
\right\}
<1
\label{eq:rho-patched}
\end{equation}
proves eq.~\eqref{eq:finite-window-contraction} for the entire compact
controlled domain.

Iterating over windows gives
\begin{equation}
\DT(s+nT)
\le
\rho_T^n\DT(s),
\qquad
n=0,1,2,\ldots.
\label{eq:window-iteration}
\end{equation}
Thus any infrared trajectory for which the accumulated interaction time
diverges satisfies \(\DT\to0\).  The theorem establishes asymptotic
synchronization of the relative propagation geometries.  Section~\ref{subsec:qft-persistence} shows that its
persistence assumptions are dynamically realized on an open set of
finite-mismatch Yukawa initial data.

The same argument applies component by component when the interaction graph
is disconnected.  If the graph has \(k\) connected components, the
Thompson diameter within each component tends to zero under the same
componentwise persistence conditions.  Hence the infrared theory contains at most \(k\) mutually unlocked
propagation-cone sectors.  Distinct components can still converge to the same
cone for particular initial data.

\subsection{Representative fixed-rate dynamics}
\label{subsec:fixed-rate-numerics}

Before returning to the running Yukawa couplings, we illustrate the
finite-window theorem directly in the normalized cone flow with fixed
positive edge rates.  Figure~\ref{fig:global-locking} shows two
representative evolutions.  In panel~(a), the same four initially anisotropic
\(3\times3\) SPD cones are evolved with
\begin{equation}
\widehat\alpha_e=\widehat\beta_e=1
\label{eq:fixed-unit-rates}
\end{equation}
on two connected bipartite topologies: the alternating path \(P_4\) and the
complete graph \(K_{2,2}\).  The Thompson diameter decreases in both cases.
For this common initial condition, the complete graph contracts faster over
the displayed interval because each sector receives strict information from
more incident edges.  This example illustrates how additional incident edges
can accelerate contraction for a fixed normalization and initial condition.

Panel~(b) shows sixteen additional connected networks with two fermionic and
three bosonic nodes.  The edge rates are independently chosen within a fixed
positive interval,
\begin{equation}
e^{-0.35}
\le
\widehat\alpha_e,\widehat\beta_e
\le
e^{0.35},
\label{eq:random-rate-window}
\end{equation}
and the initial cones are random nonaligned SPD matrices.  Each thin curve is
the normalized diameter of one realization and the thick curve is their
pointwise median.  These integrations provide a direct numerical illustration
of the finite-mismatch dynamics and of the monotonicity established by
Theorems~\ref{thm:global-nonexpansion} and~\ref{thm:finite-window}.
Numerical quadrature, ODE tolerances and the random seeds defining the
displayed realizations are given in
appendix~\ref{appsub:global-numerics}.

\begin{figure*}[t]
\centering
\includegraphics[width=0.96\textwidth]{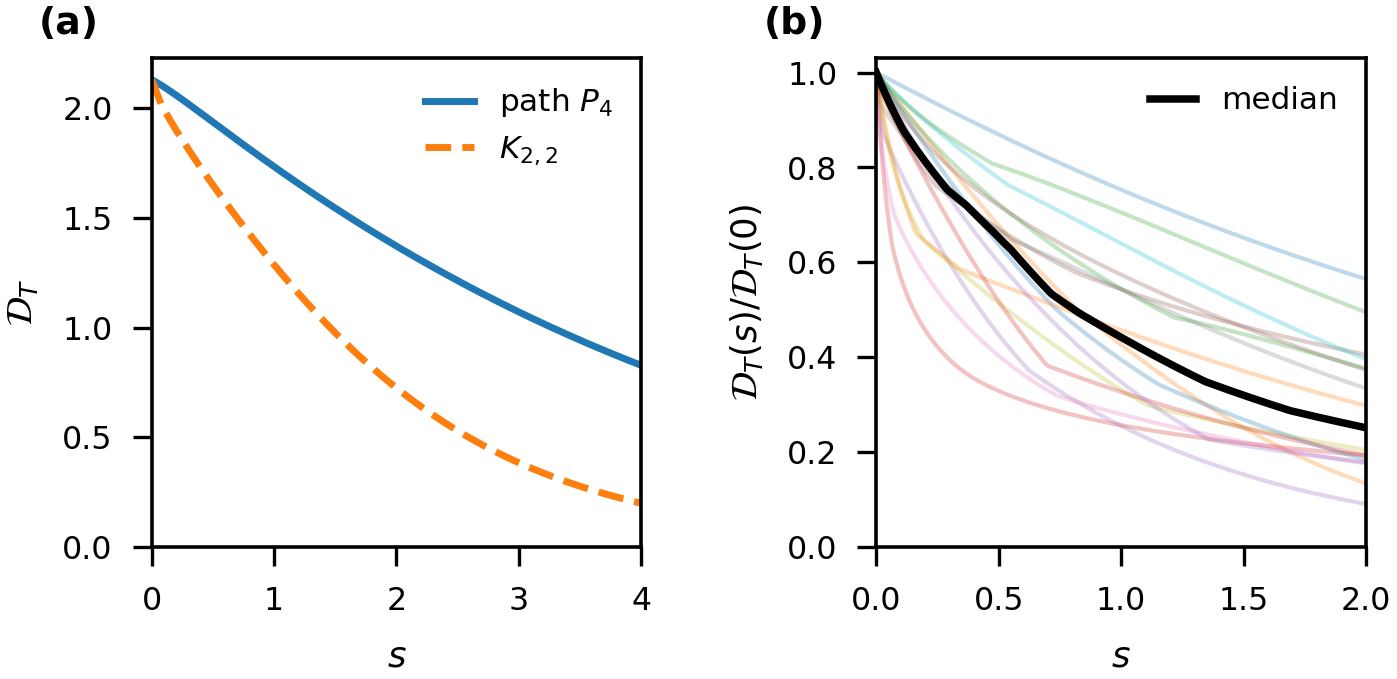}
\caption{
Representative locking trajectories of the normalized finite-matrix flow.
(a) Thompson diameter \(\mathcal D_T\) for the same four initial SPD cones
on the path \(P_4\) and the complete bipartite graph \(K_{2,2}\), with
\(\widehat\alpha_e=\widehat\beta_e=1\).
(b) Normalized diameter for sixteen random connected \(2+3\) bipartite
networks with positive rates in the interval
eq.~\eqref{eq:random-rate-window}; the black curve is the pointwise median.
The horizontal variable \(s\) is accumulated interaction time.
}
\label{fig:global-locking}
\end{figure*}

\subsection{Field-theoretic realization of persistent activity}
\label{subsec:qft-persistence}

Theorem~\ref{thm:finite-window} isolates the condition required from the
remaining coupling flow: the accumulated interaction clock must diverge and
every active edge must retain a finite fraction of that common clock.  We
now show that this condition is generated dynamically in an explicit
two-channel Dirac Yukawa theory on an open set of finite-mismatch initial
data.

Consider one Dirac node \(F\) carrying the Dirac field \(\psi_F\), two
scalar nodes \(B_1,B_2\), and a two-dimensional internal flavor space.
Choose
\begin{equation}
T_1=\sigma_x,
\qquad
T_2=\sigma_z.
\label{eq:pauli-two-channel}
\end{equation}
The matrices are Hermitian involutions, Hilbert--Schmidt orthogonal, and
anticommute.  Hence the Pauli-channel closure discussed in
section~\ref{subsec:channel-closure} applies.  Let \(a,b>0\) denote the
Weyl-block amplitudes defined in appendix~\ref{app:internal-algebra},
equivalently a common rescaling of the two dimensionless Dirac Yukawa
matrices.

The transformations
\(\phi_1\mapsto-\phi_1,\ \psi_F\mapsto T_2\psi_F\) and
\(\phi_2\mapsto-\phi_2,\ \psi_F\mapsto T_1\psi_F\) generate a
\(\mathbb Z_2\times\mathbb Z_2\) symmetry.  After tuning the two relevant
scalar masses to the critical surface, the complete renormalizable quartic
potential compatible with this symmetry is
\begin{equation}
V_4(\phi_1,\phi_2)
=
\frac{\lambda_1}{4!}\phi_1^4
+
\frac{\lambda_2}{4!}\phi_2^4
+
\frac{\lambda_{12}}{4}\phi_1^2\phi_2^2,
\label{eq:two-channel-quartic-potential}
\end{equation}
which defines the three dimensionless quartic couplings
\(\lambda_1,\lambda_2,\lambda_{12}\) in the same \(4\pi\)-rescaled
convention as the Yukawa couplings.  The mass parameters are set to zero on
the critical surface; in a mass-independent scheme they do not enter the
one-loop beta functions of the dimensionless kinetic, Yukawa, and quartic
couplings.

At a common propagation cone, define
\(\beta_a=\mu\,\dd a/\dd\mu\) and
\(\beta_b=\mu\,\dd b/\dd\mu\), where \(\mu\) is the renormalization scale.
The one-loop matrix Yukawa beta function
specialized to this four-component Dirac realization gives, in
\(4-\varepsilon\) dimensions,
\begin{subequations}
\label{eq:two-channel-beta-epsilon}
\begin{align}
\beta_a
&=
-\frac{\varepsilon}{2}a
+
a(7a^2-b^2),
\label{eq:beta-a-epsilon}\\
\beta_b
&=
-\frac{\varepsilon}{2}b
+
b(7b^2-a^2).
\label{eq:beta-b-epsilon}
\end{align}
\end{subequations}
Appendix~\ref{app:internal-algebra} derives the coefficients directly from
the general one-loop matrix beta function
\cite{PannellStergiou2023}, including the Dirac-to-Weyl map.  For comparison
with the standard Nambu--Jona-Lasinio--Yukawa (NJLY) normalization, the
equal-coupling ray \(a=b=g\) gives
\begin{equation}
\beta_g
=
-\frac{\varepsilon}{2}g+6g^3,
\qquad
g_*^2=\frac{\varepsilon}{12},
\label{eq:NJLY-matching}
\end{equation}
which coincides, after restoring the conventional overall
\(\varepsilon\) scaling, with the NJLY one-loop fixed-point value for
\(N_D=2\), eq.~(3.10) of Ref.~\cite{PannellStergiou2023}.  Here the single
graph node \(F\) carries a two-dimensional internal Dirac flavor space,
corresponding to two Dirac flavors in the standard NJLY counting.  This
agreement is a normalization check: the coefficients \(7\) and \(-1\) are
derived independently from the general Weyl beta function in
appendix~\ref{app:internal-algebra}.  That appendix also shows explicitly
why suppressing the second scalar channel gives the one-scalar GNY
denominator \(N+6\), whereas the anticommuting two-channel equal ray gives
the NJLY denominator \(N+4\).

For the four-dimensional marginal theory studied here, set
\(\varepsilon=0\) and orient the RG variable \(\ell\) toward the infrared.
A dot denotes \(\dd/\dd\ell\).  Up to a common positive normalization
\(\kappa\), eqs.~\eqref{eq:two-channel-beta-epsilon} become
\begin{subequations}
\label{eq:two-channel-IR}
\begin{align}
\dot a
&=
-\kappa a(7a^2-b^2),
\label{eq:a-IR}\\
\dot b
&=
-\kappa b(7b^2-a^2),
\label{eq:b-IR}
\end{align}
\end{subequations}
at the common cone.  The multiplicative form of the equations already shows
that the positive quadrant is invariant: positive initial couplings remain
positive for every finite RG time for which the perturbative flow exists.

Define
\begin{equation}
S=a^2+b^2,
\qquad
q=\frac{a^2-b^2}{a^2+b^2}.
\label{eq:S-q-def}
\end{equation}
Because \(a,b>0\), one has \(S>0\).  Define the normalized quartic ratios
\begin{equation}
u=\frac{\lambda_1}{S},
\qquad
v=\frac{\lambda_2}{S},
\qquad
w=\frac{\lambda_{12}}{S}.
\label{eq:normalized-quartics}
\end{equation}
Appendix~\ref{appsub:quartic-completion} derives their one-loop flow.  At a
common cone, the coupled normalized system for \((q,u,v,w)\) has the fixed
point
\begin{equation}
(q_*,u_*,v_*,w_*)
=
\left(0,\frac{12}{5},\frac{12}{5},\frac45\right),
\label{eq:quartic-normalized-fixed-point}
\end{equation}
corresponding to the \(O(2)\)-symmetric quartic ray
\(\lambda_1=\lambda_2=3\lambda\),
\(\lambda_{12}=\lambda\), with \(\lambda/g^2=8/5\).
The four linearized eigenvalues in the infrared-oriented normalized clock are
\begin{equation}
-8,\qquad -18,\qquad -\frac{58}{5},\qquad -\frac{82}{5},
\label{eq:quartic-hyperbolic-eigenvalues}
\end{equation}
so this point is a hyperbolic sink, meaning that all linearized directions
are strictly infrared attractive.  Appendix~\ref{appsub:quartic-trapping}
shows that the attraction persists for sufficiently small finite cone
mismatch and produces a compact forward-invariant neighborhood
\(\mathcal U_*\) of \(p_*\) in the \((q,u,v,w)\) variables.

A direct calculation from
eq.~\eqref{eq:two-channel-IR} gives
\begin{subequations}
\label{eq:S-q-flow}
\begin{align}
\dot S
&=
-\kappa S^2(6+8q^2),
\label{eq:S-flow}\\
\dot q
&=
-8\kappa S q(1-q^2).
\label{eq:q-flow}
\end{align}
\end{subequations}
Thus every positive initial pair has \(|q|<1\), the imbalance flows toward
zero, and
\begin{equation}
6\kappa
\le
\frac{\dd}{\dd\ell}\frac1S
\le
14\kappa.
\label{eq:S-recip-bounds}
\end{equation}
Consequently
\begin{equation}
q(\ell)\longrightarrow0,
\qquad
S(\ell)\asymp\frac1\ell,
\label{eq:S-q-asymptotic}
\end{equation}
where \(f\asymp g\) means that \(f/g\) remains between two positive
constants for sufficiently large \(\ell\).  Consequently, both channels
accumulate infinite activity:
\begin{equation}
\int^\infty a^2(\ell)\,\dd\ell
=
\int^\infty b^2(\ell)\,\dd\ell
=
\infty.
\label{eq:common-cone-divergent-activity}
\end{equation}
The exact invariant relating \(q\) and \(S\), and the resulting faster
decay of the coupling imbalance, are given in
appendix~\ref{app:persistence-topology}.

It remains to show that the preceding behavior is stable to a finite
propagation-cone mismatch.  The logarithmic one-loop Yukawa-vertex and
quartic coefficients are real analytic functions of the kinetic matrices on
every compact SPD domain: after Feynman parametrization the combined
quadratic forms remain uniformly positive definite, and the ultraviolet
logarithmic coefficients are built from their inverses and determinants.
The Pauli internal algebra is unchanged by these spacetime coefficients.
On a compact domain, the mean-value theorem therefore converts analyticity
into uniform Lipschitz bounds proportional to the Thompson mismatch.
Appendix~\ref{appsub:quartic-trapping} gives the explicit estimate and shows
that initial normalized quartics in \(\mathcal U_*\) remain in a compact
perturbative neighborhood.  Hence, in a sufficiently small Thompson
neighborhood of the common-cone manifold, the two-channel Yukawa flow can be
written as
\begin{subequations}
\label{eq:perturbed-Sq-flow}
\begin{align}
\dot S
&=
-\kappa S^2
\left[
6+8q^2+\varepsilon_S(G,q)
\right],
\label{eq:perturbed-S}\\
\dot q
&=
-\kappa S
\left[
8q(1-q^2)+\varepsilon_q(G,q)
\right],
\label{eq:perturbed-q}
\end{align}
\end{subequations}
where, uniformly for
\(\DT(G)\le\delta\),
\begin{equation}
|\varepsilon_S(G,q)|
\le C_S\delta,
\qquad
|\varepsilon_q(G,q)|
\le C_q\delta.
\label{eq:epsilon-bounds}
\end{equation}
The analyticity statement, the Lipschitz estimate, and the constants needed
below are developed in appendix~\ref{app:persistence-topology}.

Fix \(q_0\in(0,1)\) and choose the trapping neighborhood
\(\mathcal U_*\) small enough that
\(\mathcal U_*\subset\{|q|<q_0\}\).  At zero cone mismatch,
eq.~\eqref{eq:q-flow} points strictly inward on both boundaries
\(q=\pm q_0\).  By
eq.~\eqref{eq:epsilon-bounds}, there is therefore a
\(\delta_0>0\) such that the finite-mismatch flow remains inward on
\(|q|=q_0\) whenever
\begin{equation}
\DT(G)\le\delta_0.
\label{eq:delta0-neighborhood}
\end{equation}
The bootstrap proceeds in a definite order.  The multiplicative coupling
equations first preserve \(a,b>0\).  With positive edge rates,
Theorem~\ref{thm:global-nonexpansion} then gives
\begin{equation}
\DT(\ell)\le\DT(0),
\label{eq:bootstrap-nonexpansion}
\end{equation}
so an initial condition with \(\DT(0)<\delta_0\) never leaves the
coefficient-control neighborhood.  The inward \(q\)-boundary therefore
implies
\begin{equation}
|q(\ell)|\le q_0,
\qquad
a^2,b^2
\ge
\frac{1-q_0}{2}\,S.
\label{eq:q-invariant-strip}
\end{equation}
Taking \(\delta_0\) smaller if needed, the first equation in
eq.~\eqref{eq:perturbed-Sq-flow} obeys uniform bounds
\begin{equation}
-c_2S^2
\le
\dot S
\le
-c_1S^2,
\qquad
0<c_1<c_2,
\label{eq:S-finite-mismatch-bounds}
\end{equation}
and hence
\begin{equation}
S(\ell)\asymp\frac1\ell.
\label{eq:S-finite-mismatch-asymptotic}
\end{equation}
Equations~\eqref{eq:q-invariant-strip} and
\eqref{eq:S-finite-mismatch-asymptotic} show that each Yukawa edge still
accumulates infinite activity.

Finally, the cone-flow rates on edge \(r\) equal \(g_r^2\) times positive
wave-function and determinant factors that are continuous on the same
compact SPD domain.  Choosing
\begin{equation}
\chi(\ell)=S(\ell)
\label{eq:chi-S}
\end{equation}
therefore gives constants \(m,M>0\) such that
\begin{equation}
0<m
\le
\frac{\alpha_r}{\chi},
\frac{\beta_r}{\chi}
\le M,
\label{eq:persistence-realized}
\end{equation}
while
\begin{equation}
s(\ell)
=
\int_0^\ell S(u)\,\dd u
\longrightarrow\infty.
\label{eq:clock-diverges}
\end{equation}
This realizes the hypotheses of
Theorem~\ref{thm:finite-window} dynamically within the field theory.

\subsection{An open Yukawa basin with common-cone locking}
\label{subsec:open-locking-basin}

The persistence bootstrap and the finite-window theorem together give the
following field-theoretic consequence.

\begin{corollary}[Finite-mismatch locking]
\label{cor:open-locking-basin}
Consider the two-channel Pauli--Dirac Yukawa realization
eq.~\eqref{eq:pauli-two-channel}, completed by the critical scalar potential
eq.~\eqref{eq:two-channel-quartic-potential}.  There exist
\(\delta_0>0\) and an open neighborhood \(\mathcal U_*\) of the normalized
fixed point \(p_*\) such that every initial condition satisfying
\begin{equation}
\begin{aligned}
a(0)&>0,
& b(0)&>0,\\
\DT(G(0))&<\delta_0,
& (q(0),u(0),v(0),w(0))&\in\mathcal U_*.
\end{aligned}
\label{eq:open-basin-initial}
\end{equation}
remains in a weak-coupling persistence region in which the Yukawa and
quartic couplings stay perturbative, and the accumulated interaction time
diverges.  Consequently,
\begin{equation}
\DT(\ell)\longrightarrow0
\qquad
(\ell\to\infty)
\label{eq:open-basin-locking}
\end{equation}
throughout this open set of genuinely finite-mismatch propagation
geometries.
\end{corollary}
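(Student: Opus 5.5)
The plan is to assemble the corollary from the persistence bootstrap of section~\ref{subsec:qft-persistence} and Theorem~\ref{thm:finite-window}, checking the hypotheses in a definite order so the argument is not circular. First I fix \(q_0\in(0,1)\) and take the compact forward-invariant trapping neighborhood \(\mathcal U_*\subset\{|q|<q_0\}\) of \(p_*\) from appendix~\ref{appsub:quartic-trapping}. It is built from the hyperbolic sink with eigenvalues eq.~\eqref{eq:quartic-hyperbolic-eigenvalues} and is robust under a perturbation of size \(O(\delta)\). I then choose \(\delta_0\) small enough for three things to hold together: the Lipschitz bounds eq.~\eqref{eq:epsilon-bounds} keep \(\dot q\) inward on \(|q|=q_0\); the \(S\)-equation obeys eq.~\eqref{eq:S-finite-mismatch-bounds}; and the perturbed normalized quartic flow still points inward on \(\partial\mathcal U_*\). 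The open set of initial data is then the interior of \(\mathcal U_*\), together with \(a,b>0\) and \(\DT<\delta_0\).

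Second, I run a continuity (maximal-interval) argument. Let \(\ell_1\) be the supremum of times up to which the state stays in the perturbative domain, \(\DT\le\delta_0\), and \((q,u,v,w)\in\mathcal U_*\). On \([0,\ell_1)\) the following hold:
\begin{itemize}
\item positivity of \(a,b\) follows from the multiplicative form of eq.~\eqref{eq:two-channel-IR}, which persists at finite mismatch;
\item positivity of \(a,b\) makes all edge rates positive, so Theorem~\ref{thm:global-nonexpansion} gives \(\DT(\ell)\le\DT(0)<\delta_0\);
\item the inward boundary conditions then keep the normalized variables strictly inside \(\mathcal U_*\);
\item since \(\dot S\le -c_1S^2\) and \(S(0)\) is small, \(S\) is decreasing, so the absolute couplings \(a^2,b^2\le S\) and \(\lambda_\bullet=S\cdot(u,v,w)\) stay bounded by their small initial scale.
\end{itemize}
Hence none of the exit conditions can be reached, and \(\ell_1=\infty\). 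This proves that the trajectory remains in a compact weak-coupling region, which is the first claim.

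Third, I verify the hypotheses of Theorem~\ref{thm:finite-window} on this region. The extended state \((G,q,u,v,w)\), with \(S\) absorbed into the clock, lives in a compact invariant set, and its normalized vector field is real analytic because the one-loop coefficients are analytic in \(G\) on compact SPD domains. The graph \(B_1-F-B_2\) is connected. With \(\chi=S\), the bounds \(a^2,b^2\ge\tfrac{1-q_0}{2}S\) from eq.~\eqref{eq:q-invariant-strip}, together with continuity of the wave-function and determinant factors on the compact Loewner interval eq.~\eqref{eq:compact-Loewner-sector}, give eq.~\eqref{eq:persistence-realized}. Finally, \(S\ge 1/(S(0)^{-1}+c_2\ell)\) gives \(s(\ell)\to\infty\). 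Theorem~\ref{thm:finite-window} then yields \(\DT(\ell)\to0\).

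The main obstacle is the coupled bootstrap: the invariance of \(\mathcal U_*\) needs small mismatch, while the smallness of the mismatch needs positive rates. I would resolve this with the ordering above. Nonexpansion requires only positivity of the rates, not membership in \(\mathcal U_*\), so \(\DT\le\delta_0\) holds unconditionally and the other invariances follow from it. A subtler point is that Theorem~\ref{thm:finite-window} uses autonomy in the \(s\)-clock. I would handle this by normalizing all couplings by \(S\), so that the \(s\)-flow is autonomous in \((G,q,u,v,w)\), and by checking that \(S\) enters the normalized rates only through these ratios at one loop.
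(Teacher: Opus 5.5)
Your proposal is correct and follows the paper's route: the same bootstrap order, followed by Theorem~\ref{thm:finite-window} with $\chi=S$ to get the rate bounds and the divergent clock. In that order, the multiplicative Yukawa flow keeps $a,b>0$, Theorem~\ref{thm:global-nonexpansion} then keeps $\DT(\ell)\le\DT(0)<\delta_0$, and this makes the $q$-strip and the Lyapunov ellipsoid $\mathcal U_*$ forward invariant and gives $S\asymp 1/\ell$; your observation that the $s$-time flow is autonomous in the $S$-normalized variables $(G,q,u,v,w)$ makes explicit a point the paper leaves implicit in its ``compact normalized state space.''
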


The corollary connects Theorem~\ref{thm:finite-window} directly to the field
theory: the explicit Yukawa--quartic completion dynamically generates both a
compact weak-coupling basin and the persistent edge activity required for
locking on an open neighborhood of mismatched cones.  The next section turns from the
existence of global locking to a sharper question: how the interaction
topology controls the first nonzero short-time order at which an extremal
diameter begins to decrease.

\section{Interaction topology and the onset of locking}
\label{sec:topology-onset}

The finite-window theorem determines whether a persistently active connected
network locks.  We now ask a sharper short-time question: at what accumulated
interaction-time order does the initially maximal cone mismatch first begin
to decrease?  For a controlled two-plateau extremal class, the answer is set
exactly by graph distance.  The underlying distance-to-order mechanism has a
linear analogue in graph diffusion, where the short-time heat kernel between
two vertices begins at the power fixed by their combinatorial distance
\cite{KellerLenzMunchSchmidtTelcs2016}.  Here this distance-to-order mechanism is realized within a nonlinear matrix
flow that also contains generalized-eigenvalue branches, equality edges, and
the Thompson-diameter envelope.

\subsection{Two-plateau extremal data}
\label{subsec:two-plateau-data}

Fix a common simple active generalized-eigenvector direction \(x\) at
\(s=0\).  Normalize the two extremal quadratic-form values to
\begin{equation}
1<\Lambda,
\label{eq:plateau-values}
\end{equation}
and suppose every cone lies on one of the two \(x\)-plateaus,
\begin{equation}
x^TG_v(0)x
\in
\{1,\Lambda\}.
\label{eq:two-plateau}
\end{equation}
Let
\begin{equation}
\begin{aligned}
L_x&=\{v:\;x^TG_v(0)x=1\},\\
U_x&=\{v:\;x^TG_v(0)x=\Lambda\}.
\end{aligned}
\label{eq:plateau-sets}
\end{equation}
be the lower and upper node sets.  In addition, every graph edge internal to
one plateau is assumed to saturate the stronger equality condition of
section~\ref{subsec:kernel-rigidity} in the appropriate endpoint frame.  Thus
such an edge carries no first-order inward flux along the active direction.

Let
\(\mathcal A_x\subseteq L_x\times U_x\) denote the ordered endpoint pairs
whose simple \(x\)-branch realizes the initial Thompson diameter.  Let \(d(v,w)\) denote the shortest-path graph distance between nodes
\(v\) and \(w\).  For a node \(v\) and a node set \(S\), define
\begin{equation}
d(v,S)=\min_{w\in S}d(v,w)
\label{eq:distance-to-set}
\end{equation}
as the graph distance from \(v\) to the set.  Define, for every active pair,
\begin{equation}
q_{ij}
=
\min
\left\{
d(i,U_x),
d(j,L_x)
\right\},
\qquad
(i,j)\in\mathcal A_x,
\label{eq:qij}
\end{equation}
and
\begin{equation}
q
=
\max_{(i,j)\in\mathcal A_x}
q_{ij}.
\label{eq:qglobal-sec5}
\end{equation}
The quantity \(q_{ij}\) measures how deeply the two endpoints of the active
pair are shielded from the first opposite-plateau edge; \(q\) selects the
most deeply shielded branch among those supporting the initial diameter.

For later use, let
\(\lambda_{ij}^{(x)}(s)\) denote the analytic continuation of the simple
generalized-eigenvalue branch associated with \(x\) and the pair \((i,j)\),
with
\begin{equation}
\lambda_{ij}^{(x)}(0)=\Lambda,
\end{equation}
and define
\begin{equation}
D_{ij}(s)
=
\log \lambda_{ij}^{(x)}(s).
\label{eq:Dij-branch}
\end{equation}

\subsection{Propagation of the first nonzero coefficient}
\label{subsec:topology-propagation}

The endpoint-propagation mechanism of
section~\ref{subsec:endpoint-propagation} becomes quantitative for the data
above.  Orient a shortest path from a lower endpoint toward \(U_x\), or from
an upper endpoint toward \(L_x\).  The first edge crossing the plateau
boundary is genuinely mismatched and therefore generates a strictly inward
order-one coefficient along the propagated active direction.

Suppose inductively that the first nonzero inward coefficient at a node
lying \(m\) internal plateau edges away from such a boundary occurs at order
\(m+1\).  On a bosonic equality edge, differentiation of
\begin{equation}
\widehat\alpha_e(A-B)
\end{equation}
transfers that coefficient one order later with positive multiplier
\(\widehat\alpha_e(0)\).  On a fermionic equality edge,
eq.~\eqref{eq:Frechet-transfer} transfers it one order later with multiplier
\begin{equation}
2\widehat\beta_e(0)c(R)>0.
\label{eq:fermion-positive-transfer}
\end{equation}
Derivatives of the normalized rates and of the relative frame multiply
lower-order edge fluxes, which vanish at the induction step.  Hence they do
not change the first nonzero order or its sign.  If several shortest paths
reach the same endpoint, their leading coefficients add with the same inward
sign.

It follows that the branch of an active pair \((i,j)\) is stationary through
order \(q_{ij}-1\) and moves inward at order \(q_{ij}\):
\begin{equation}
D_{ij}(0)-D_{ij}(s)
=
C_{ij}s^{q_{ij}}
+
O(s^{q_{ij}+1}),
\qquad
C_{ij}>0.
\label{eq:pair-onset-sec5}
\end{equation}
The full Taylor-order induction, including multiple shortest paths and
moving-frame terms, is given in
appendix~\ref{app:persistence-topology}.

\subsection{Topology-sensitive onset}
\label{subsec:topology-theorem}

The pairwise expansion is then lifted to the Thompson diameter, which is a
maximum over branches.  Branches that do not
realize the initial diameter are separated from it at \(s=0\) and therefore
cannot support the envelope for sufficiently small \(s>0\).  Among the
initially active branches, those with smaller \(q_{ij}\) move inward at a
lower order and leave the maximum first.  The branches with
\(q_{ij}=q\) remain on the envelope longest and therefore set the first
nonzero order of the network diameter.

\begin{theorem}[Topology onset]
\label{thm:topology-onset}
Consider the normalized analytic cone flow
eq.~\eqref{eq:normalized-network-flow} on a finite connected graph, and
initial data satisfying the two-plateau conditions
eqs.~\eqref{eq:two-plateau}--\eqref{eq:plateau-sets} with a common simple
active direction \(x\) and equality on every edge internal to a plateau.
Let \(\mathcal A_x\), \(q_{ij}\), and \(q\) be defined by
eqs.~\eqref{eq:qij}--\eqref{eq:qglobal-sec5}.  Then each initially active
pair obeys
\begin{equation}
D_{ij}(0)-D_{ij}(s)
=
C_{ij}s^{q_{ij}}
+
O(s^{q_{ij}+1}),
\qquad
C_{ij}>0,
\label{eq:thm3-pairwise}
\end{equation}
and the Thompson diameter satisfies
\begin{equation}
\DT(0)-\DT(s)
=
C_qs^q
+
O(s^{q+1}),
\qquad
C_q>0.
\label{eq:thm3-global}
\end{equation}
\end{theorem}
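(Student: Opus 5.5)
The plan is a Taylor-order induction on each analytic branch, followed by lifting the branchwise expansions to the envelope $\DT$.

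\emph{Branch derivative.} Because $x$ is a simple active generalized eigenvector, the branch $\lambda_{ij}^{(x)}(s)$ and a normalized eigenvector $x(s)$ with $x(s)^TG_i(s)x(s)=1$ are real analytic near $s=0$. This follows from analytic perturbation theory for the symmetric pencil, since the vector field, and hence $G_v(s)$, is analytic. The pencil formula \eqref{eq:pencil-Dini} then reads
\[
\dot\lambda_{ij}=x(s)^T\bigl(\dot G_j-\lambda_{ij}\dot G_i\bigr)x(s).
\]
Differentiating this $k-1$ times and evaluating at $s=0$ expresses $\lambda_{ij}^{(k)}(0)$ through $x^TG_v^{(k)}(0)x$ at the two endpoints, plus cross terms. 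Each cross term contains a lower-order derivative of $G_i$ or $G_j$ paired with derivatives of $x(s)$ or of $\lambda_{ij}$. The first step is therefore a bookkeeping lemma: if $x^TG_i^{(l)}(0)x=0$ and $\lambda_{ij}^{(l)}(0)=0$ for all $l<k$, then the cross terms vanish at order $k$, and
\[
\lambda_{ij}^{(k)}(0)=x^T\bigl(G_j^{(k)}(0)-\Lambda G_i^{(k)}(0)\bigr)x .
\]
The cross terms need care, because $x^{(l)}(0)$ need not vanish. I would handle them through the stationarity of the Rayleigh quotient at a simple eigenvector: the first variation in the vector direction vanishes. I expect this to be the main technical obstacle. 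The hypothesis that the saturating edges are equality edges (so that $Ru=u$ holds, not just the quadratic identity) is what should make the quadratic-form terms in $x^{(l)}$ vanish. If that fails, the fallback is to work directly with the quantities $\Lambda x^TG_i(s)x-x^TG_j(s)x$ for fixed $x$. These lower-bound the branch to the order that matters.

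\emph{Induction on graph distance.} I would prove the following claim. For a node $v\in L_x$ with $d(v,U_x)=n$, one has $x^TG_v^{(l)}(0)x=0$ for $l<n$ and $x^TG_v^{(n)}(0)x>0$. The symmetric statement, with the opposite sign, holds for $v\in U_x$ with $d(v,L_x)=n$.

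The base case is $n=1$. The node $v$ has an edge to the opposite plateau. For a bosonic $v$ this gives a strict term $\widehat\alpha x^T(A-B)x>0$. For a fermionic $v$, the rigidity \eqref{eq:equality-rigidity} gives $u^T\cH(R)u>0$, because $Ru\ne u$ across a plateau boundary. All other incident terms are nonnegative by the nonexpansion sign argument.

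For the inductive step, differentiate \eqref{eq:normalized-network-flow} $n-1$ times. On bosonic edges the $n$th derivative picks up $\widehat\alpha_e(0)\,x^TG_w^{(n-1)}(0)x$ from the neighbor $w$ at distance $n-1$. On fermionic equality edges, \eqref{eq:Frechet-transfer} gives the multiplier $2\widehat\beta_e(0)c(R)>0$. Terms involving derivatives of the rates, of $A^{1/2}$ (the moving frame), or higher Fréchet derivatives of $\cH$ multiply fluxes of lower order, and these vanish by the induction hypothesis. Contributions from several shortest paths add with the same sign. Neighbors farther away contribute only at order $>n$.

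\emph{Lifting to the diameter.} Combining the two steps, $\lambda_{ij}^{(k)}(0)=0$ for $k<q_{ij}$. At order $q_{ij}$, whichever endpoint realizes the minimum in \eqref{eq:qij} contributes with the strict inward sign, and the other contributes either zero or the same sign. This gives \eqref{eq:thm3-pairwise}, with $C_{ij}=-\lambda_{ij}^{(q_{ij})}(0)/(q_{ij}!\,\Lambda)>0$.

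For \eqref{eq:thm3-global}, write $\DT(s)=\max$ over finitely many analytic branches, refining the subdivision locally as in section~\ref{subsec:away-from-sync}. Inactive branches lie strictly below $\DT(0)$, so by continuity they stay off the envelope for small $s$. Branches with non-simple or non-$x$ active directions are excluded by the simplicity hypothesis. Among the active pairs, those with $q_{ij}=q$ satisfy
\[
D_{ij}(0)-D_{ij}(s)=C_{ij}s^q+O(s^{q+1}),
\]
while all the others lie lower by an amount of order $s^{q_{ij}}\gg s^q$. Hence, for small $s$,
\[
\DT(s)=\max_{q_{ij}=q}D_{ij}(s),
\]
which yields the expansion with $C_q=\min_{q_{ij}=q}C_{ij}>0$.
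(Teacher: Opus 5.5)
Your overall route is the paper's: an arrival-depth induction with positive multipliers $\widehat\alpha_e(0)$ and $2\widehat\beta_e(0)c(R)$, followed by selecting the $q_{ij}=q$ branches on the envelope with $C_q=\min C_{ij}$. But the induction as you set it up does not close. You carry only the scalars $x^TG_v^{(l)}(0)x$, while every cross term you want to discard is controlled by the vectors $G_v^{(l)}(0)x$.

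In fact your bookkeeping lemma is false under its stated hypotheses. Already at order two, differentiate $G_jx=\lambda G_ix$ with $x^TG_ix=1$ and $\dot\lambda(0)=0$:
\[
\ddot\lambda(0)=x^T\bigl(\ddot G_j-\Lambda\ddot G_i\bigr)x+2\,\dot x^T(\Lambda G_i-G_j)\dot x,
\qquad
(\Lambda G_i-G_j)\dot x=(\dot G_j-\Lambda\dot G_i)x .
\]
The last term is nonnegative. It is strictly positive whenever the vector $w=(\dot G_j-\Lambda\dot G_i)x$ is nonzero, even though $x^Tw=\dot\lambda(0)=0$. Neither stationarity of the Rayleigh quotient nor the zeroth-order equality $Ru=u$ removes this mixed term, and it has the outward sign.

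Your fallback $\lambda_{ij}(s)\ge x^TG_j(s)x/x^TG_i(s)x$ does not rescue this. It only bounds the decrease from above, so it cannot give $C_{ij}>0$.

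The fermionic transfer has the same defect. Terms such as $u^TD^2\cH_R[\dot R,\dot R]u$ or $\dot u^T\cH(R)\dot u$ are quadratic in full matrix derivatives, so $x^T\dot G x=0$ does not kill them. In the principal-root frame they need not even vanish individually when $\dot Ax=0$, since $\dot u=(\tfrac{\dd}{\dd s}A^{1/2})x$ can be a nonzero vector orthogonal to $u$.

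What is missing is propagation of the \emph{vector} equality.
\begin{itemize}
\item Because $\cH(R)u=0$ whenever $Ru=u$, the vector $\cQ(A,B)x=A^{1/2}\cH(R)A^{1/2}x$ vanishes on the linear set $\{(A,B):Ax=Bx\}$.
\item Near that set it therefore factors as $\cQ(A,B)x=\mathcal L(A,B)[(B-A)x]$, with $\mathcal L$ a smooth linear-map-valued function satisfying $x^T\mathcal L[w]=c(R)\,x^Tw$ on the set, by \eqref{eq:Frechet-transfer}. The bosonic flux is already linear in $(A-B)x$.
\item Induct on the derivative order $l$, simultaneously over all nodes. Induction on depth alone is circular, because vanishing at $v$ uses its same-depth and deeper neighbours.
\item This gives $G_v^{(l)}(0)x=0$ as vectors for $1\le l<\tau_v$, together with your signed leading coefficients, with no frame bookkeeping at all.
\end{itemize}

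For an active pair this yields $(G_j(s)-\Lambda G_i(s))x=O(s^{q_{ij}})$. Hence $x(s)-x=O(s^{q_{ij}})$ and $\lambda_{ij}(s)=x^TG_j(s)x/x^TG_i(s)x+O(s^{2q_{ij}})$. That is exactly what your formula for $\lambda_{ij}^{(q_{ij})}(0)$ and your envelope step need.

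The paper's proof invokes the vector-valued agreement of $G_vx$ across plateau edges but is equally terse here. The point to add is that this agreement persists through order $\tau_v-1$.
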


\begin{proof}
Equation~\eqref{eq:thm3-pairwise} is the shortest-path induction summarized
in section~\ref{subsec:topology-propagation}.  For sufficiently small \(s\),
only branches in \(\mathcal A_x\) can support the initial diameter envelope.
Every branch with \(q_{ij}<q\) has already decreased at an earlier order.
The remaining branches with \(q_{ij}=q\) all have positive inward
coefficients.  Taking the smallest of these finitely many positive
order-\(q\) coefficients gives \(C_q>0\) and
eq.~\eqref{eq:thm3-global}.
\end{proof}

The theorem is an exact statement for the controlled two-plateau extremal
class.  Its exponent depends not on the graph diameter alone but on the
shielding depth of the branches that actually support the initial Thompson
diameter, through eq.~\eqref{eq:qglobal-sec5}.

\subsection{Four-node analytic coefficient}
\label{subsec:P4-coefficient}

A four-node isotropic path gives the coefficient directly, independently of
the general induction.  Consider
\begin{equation}
F_1-B_1-F_2-B_2
\label{eq:P4-path}
\end{equation}
with constant normalized fermionic and bosonic rates
\(\bar\beta,\bar\alpha>0\), respectively, and initial scalar cone values
\begin{equation}
(1,1,L,L),
\qquad
L>1.
\label{eq:P4-step}
\end{equation}
The isotropic submanifold
\begin{equation}
G_v(s)=g_v(s)\Id
\end{equation}
is invariant.  Writing
\begin{equation}
\cH(r\Id)=h(r)\Id,
\qquad
h(r)
=
\frac{4(\sqrt r-1)}
{3\sqrt r(\sqrt r+1)^2},
\label{eq:isotropic-h-sec5}
\end{equation}
one obtains
\begin{equation}
\DT(0)-\DT(s)
=
\bar\alpha\bar\beta
\left[
\frac{L-1}{6}
-
h(L^{-1})
\right]s^2
+
O(s^3).
\label{eq:P4-coefficient}
\end{equation}
Because \(L>1\) implies \(h(L^{-1})<0\), the coefficient is strictly
positive.  Equation~\eqref{eq:P4-coefficient} realizes \(q=2\) and fixes the
sign and normalization of the alternating bosonic and fermionic transfer
steps.

\subsection{Numerical onset exponents}
\label{subsec:onset-numerics}

We illustrate the exponent on the invariant isotropic submanifold.
The initial data are symmetric two-level steps on the paths
\(P_4,P_6,P_8\), for which
Theorem~\ref{thm:topology-onset} predicts
\begin{equation}
q=2,\qquad q=3,\qquad q=4,
\label{eq:path-q-predictions}
\end{equation}
respectively.  Constant normalized rates are used, so accumulated
interaction time equals the integration variable, \(s=\ell\).

Figure~\ref{fig:topology-onset}(a) plots
\(\DT(0)-\DT(s)\) together with \(s^q\) guides.  High-precision integration
with \(60\) decimal digits and a fixed fourth-order Runge--Kutta step
\(10^{-5}\) gives fitted powers
\begin{equation}
1.992389,\qquad
2.999501,\qquad
3.994740
\label{eq:onset-fits}
\end{equation}
over
\begin{equation}
3\times10^{-4}
\le s\le
3\times10^{-3}.
\label{eq:onset-fit-window}
\end{equation}
Panel~(b) shows the local logarithmic slope
\begin{equation}
q_{\rm eff}(s)
=
\frac{\dd\log[\DT(0)-\DT(s)]}
{\dd\log s},
\label{eq:qeff}
\end{equation}
which approaches the predicted integers as \(s\to0\).  Implementation
details and the associated numerical data are given in
appendix~\ref{appsub:onset-numerics}.

\begin{figure*}[t]
\centering
\includegraphics[width=0.96\textwidth]{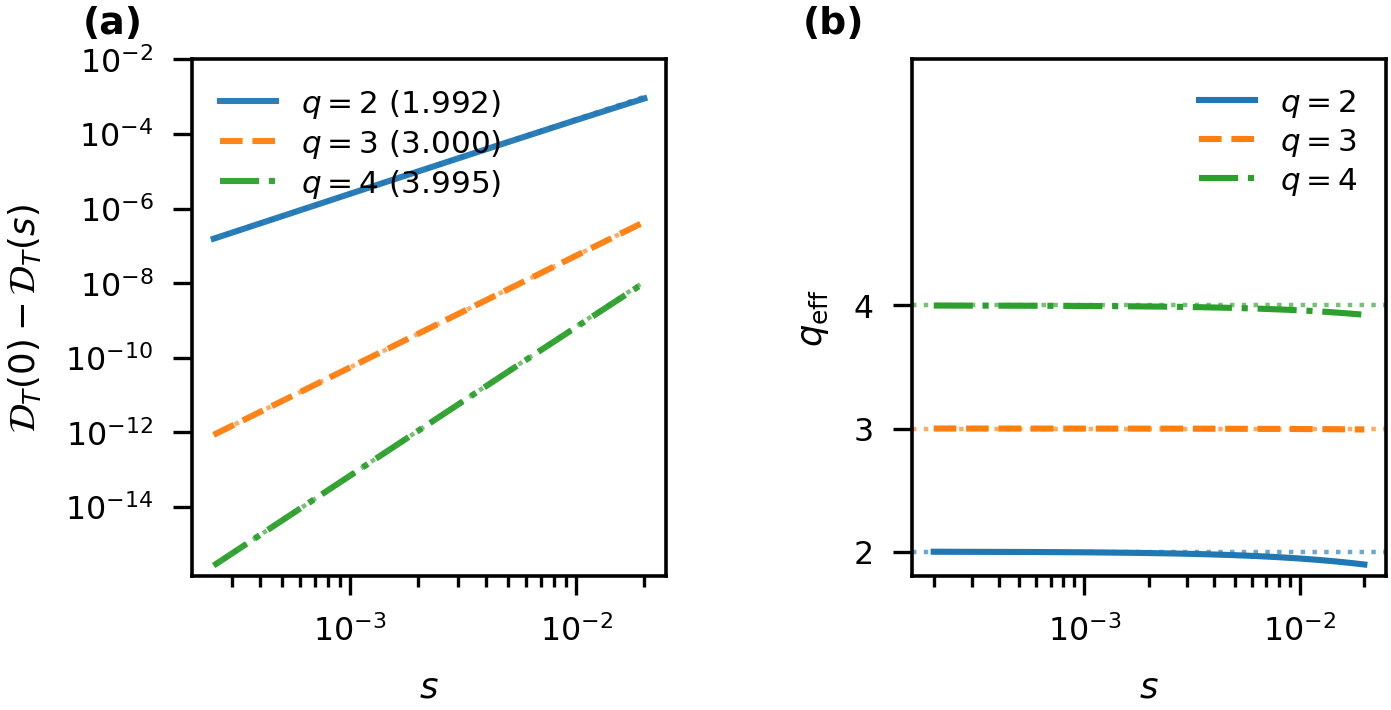}
\caption{
Topology-sensitive onset for symmetric two-plateau data on path graphs.
(a) Diameter decrease \(\mathcal D_T(0)-\mathcal D_T(s)\) for
\(P_4,P_6,P_8\), together with the predicted \(s^q\) scaling; the
parenthetical values in the legend are fitted powers over
eq.~\eqref{eq:onset-fit-window}.
(b) Effective exponent \(q_{\rm eff}(s)\), approaching
\(2,3,4\) toward short accumulated interaction time.
}
\label{fig:topology-onset}
\end{figure*}

The numerical calculation illustrates the controlled isotropic realization
of Theorem~\ref{thm:topology-onset}.  The theorem also covers finite
matrix-valued plateau data satisfying its active-direction and equality-edge
conditions.  Together with Theorem~\ref{thm:finite-window}, the result
separates two roles of topology: connectivity determines asymptotic relative
locking, while graph distance resolves the short-time order at which
strictness reaches shielded extremal sectors.

\section{Discussion and outlook}
\label{sec:discussion}

The analysis establishes a mechanism by which a connected Yukawa network
synchronizes a finite family of anisotropic propagation geometries.  It starts
from a field-theoretic one-loop flow that retains the full relative
positive-definite kinetic matrices at finite mismatch.  This matrix flow has
a global order structure:
its Thompson diameter is nonincreasing for arbitrary finite SPD mismatch,
and persistent activity on a connected graph upgrades this to a uniform
finite-window contraction,
\begin{equation}
\DT(s+T)\le\rho_T\DT(s),
\qquad
0<\rho_T<1.
\end{equation}
Accordingly,
\begin{equation}
\DT\longrightarrow0
\end{equation}
whenever the accumulated interaction time diverges.  In this sense the
infrared dynamics drives the interacting sectors to a common relative
propagation geometry.

The explicit persistence realization exhibits two distinct infrared regimes.
In strictly four dimensions the Yukawa couplings are marginally irrelevant.
Equations~\eqref{eq:S-finite-mismatch-asymptotic} and
\eqref{eq:clock-diverges} give
\begin{equation}
S(\ell)\asymp\ell^{-1},
\qquad
s(\ell)=\int_0^\ell S(u)\,\dd u\longrightarrow\infty,
\end{equation}
while the quartic couplings remain \(O(S)\).  Hence the interaction
couplings vanish in the infrared, but their accumulated action is still
infinite and locks the relative propagation matrices before the theory
reaches its Gaussian limit.  The four-dimensional endpoint is therefore a
free theory with one common spatial propagation cone, rather than a set of
free species retaining distinct cones.

The \(4-\varepsilon\) theory provides a complementary interacting
perspective.  On the equal NJLY ray,
eqs.~\eqref{eq:NJLY-matching} and
\eqref{eq:appA-NJLY-fixed-pair} give
\begin{equation}
g_*^2=\frac{\varepsilon}{12},
\qquad
S_*=2g_*^2=\frac{\varepsilon}{6}>0,
\end{equation}
so a persistent trajectory near that interacting critical point has an
interaction clock growing linearly, \(s(\ell)\sim S_*\ell\).  The contraction
mechanism therefore extends from the marginally irrelevant four-dimensional
regime to an interacting critical scaling regime with nonvanishing Yukawa
activity.  In four dimensions, the trapping theorem above supplies an explicit
open finite-mismatch basin.  An analogous trapping construction at
\(\varepsilon>0\) would establish the corresponding open interacting critical
basin.

The mechanism has both model-specific and structural ingredients.  The
model-specific input is carried by the Yukawa kernel
\(\cH(R)\): its inversion identity, extremal rank-one sign, and equality
rigidity are what make the global Thompson-diameter estimate possible.
The field-theoretic persistence realization is likewise specific to the
Pauli--Dirac subclass analyzed in
section~\ref{subsec:qft-persistence}.  By contrast, the route from
endpoint rigidity to finite-window contraction is more structural.  Once a
positive-cone flow supplies a congruence-invariant diameter, a rigid equality
condition, and a positive transfer coefficient across saturated edges,
connectedness converts local inward motion into a network-wide
synchronization mechanism.  The present Yukawa theory therefore provides a
concrete realization of a broader contraction architecture, whose extension
to other matrix-valued propagation sectors can be tested by checking those
three ingredients.

The result also sharpens the relation between velocity matching and emergent
Lorentz symmetry.  As reviewed in section~\ref{sec:introduction}, existing
RG analyses encompass scalar limiting velocities, anisotropic Dirac
dispersions, tilt, birefringence, and more general Lorentz-violating kinetic
structures.  Here the synchronized object is the full spatial propagation
quadratic form across a finite Yukawa interaction network.  Relative
eigenvalues and relative eigenframes are driven toward a common
characteristic geometry.  With temporal coefficients normalized and
time--space tilt set to zero, the one-loop flow provides an explicit route
from species-dependent anisotropic propagation to synchronization of the full
spatial characteristic quadratic forms.  Incorporating time--space tilt and
general Lorentzian tensors is the natural extension to a broader causal
geometry.

The relative and common-mode sectors therefore separate naturally.
Theorems~\ref{thm:global-nonexpansion} and
\ref{thm:finite-window} determine the relative sector through
\begin{equation}
\DT\to0,
\end{equation}
so all cones become asymptotically identical relative to one another.
Common-mode convergence in a fixed normalization is governed by the
common-mode drift and the detailed balance of the running rates.  Thus the
locking theorem determines whether a connected network develops one common
propagation geometry, while the common-mode dynamics determines which
terminal geometry is reached.

The topology-sensitive onset theorem resolves a complementary question.
Connectivity controls the asymptotic locking statement, while the graph
distance from an extremal endpoint to the first opposite plateau controls the
short-time order at which strictness arrives for the controlled two-plateau
class.  The exponent
\begin{equation}
q
=
\max_{(i,j)\in\mathcal A_x}
\min\{d(i,U_x),d(j,L_x)\}
\end{equation}
therefore measures shielding of the active Thompson-diameter branches rather
than a generic graph diameter.  This distinction explains why two connected
networks can share the same asymptotic locking conclusion while exhibiting
different early-time contraction laws.

The field-theoretic setting is the one-loop weak-coupling sector with
positive-definite spatial kinetic matrices, fixed interaction topology, and
vanishing time--space tilt.  In this setting, the global nonexpansion theorem
covers arbitrary finite SPD mismatch, while the explicit two-channel
Yukawa--quartic completion establishes running-coupling persistence on an open
finite-mismatch neighborhood of the common-cone manifold.  The compact-domain
formulation accommodates genuinely finite geometric mismatch while keeping
the loop expansion and the persistence bootstrap perturbatively controlled.

The logarithmic one-loop kinetic coefficients agree across the regulator
deformations compared in appendix~\ref{app:selfenergies}, including
dimensional regularization and smooth star-shaped cutoffs.  In
Lorentz-violating Yukawa theories, beta functions can depend on the
parameterization of redundant couplings across renormalization schemes
\cite{KarkiAltschul2022}.  The one-loop cone flow is stated in the convention specified above.
Physical relative-cone observables provide a natural scheme-robust language
for higher-order extensions.

Several extensions follow directly from this structure.  First, determining
the common-mode evolution would turn the locking theorem into a terminal
selection theory for the infrared cone.  Second, gauge systems replace a
single SPD propagation matrix by more general Fresnel or polarization
structures, providing a natural test of whether the same extremal-contraction
architecture survives beyond Yukawa matter.  Third, restoring time--space
tilt would promote the present spatial cone problem to a genuinely
Lorentzian one.  Finally, higher-loop corrections provide a direct test of
the structural stability of the finite-window mechanism: the relevant
question is whether the extremal sign, equality rigidity, and positive
transfer coefficient persist under perturbative deformation of the one-loop
vector field.

Taken together, the results establish for the Yukawa class analyzed here
that a universal spatial propagation cone can emerge as a collective infrared
property of the interaction network.  Local Yukawa interactions erase finite
matrix-valued
differences in both propagation rates and eigenframes across a connected
network.  In four dimensions this synchronization persists as the
interactions that generate it become marginally irrelevant: the Yukawa
couplings flow to zero while their accumulated RG action diverges, leaving a
Gaussian infrared theory with a common spatial cone.  Connectivity determines
the asymptotic locking, while graph distance controls the earliest order at
which strictness reaches shielded extremal sectors in the controlled
two-plateau class.  In this sense, common-cone locking is a collective infrared
property of the interaction network.

\begin{acknowledgments}
The author used ChatGPT (OpenAI, GPT-5.6 Sol) to assist with literature organization, mathematical cross-checking, and manuscript language and presentation. The author directed the tool through task-specific prompts and independently reviewed and verified all scientific content, derivations, numerical results, and citations and takes full responsibility for the manuscript.
\end{acknowledgments}

\section*{Data Availability}
The Python code and numerical data used to generate
Figs.~\ref{fig:global-locking} and~\ref{fig:topology-onset} are publicly
available as ancillary files to arXiv:2608.15661v2 \cite{ZengReproducibility2026}.

\appendix

\section{Internal Yukawa algebra and coupling RG}
\label{app:internal-algebra}

This appendix fixes the fermion conventions used in the two-channel
persistence analysis and derives the coefficients in
eqs.~\eqref{eq:two-channel-beta-epsilon} directly from the general one-loop
matrix Yukawa beta function.  The derivation also makes the one-loop
Pauli-channel closure of section~\ref{subsec:channel-closure} explicit.

\subsection{Dirac-to-Weyl map and coupling normalization}
\label{appsub:dirac-weyl}

We use the four-dimensional Weyl convention of
Ref.~\cite{PannellStergiou2023}.  For real scalars \(\phi_r\) and
two-component Weyl fermions \(\psi_A\), where capital indices such as
\(A,B\) label Weyl species and are summed when repeated, their Yukawa
interaction is
\begin{equation}
\mathcal L_Y
=
\frac12
\left(
Y_{r,AB}\phi_r\psi_A\psi_B+\text{h.c.}
\right),
\qquad
Y_r^T=Y_r .
\label{eq:appA-weyl-lagrangian}
\end{equation}
The dimensionless Yukawa matrices in that convention are rescaled by the
common factor \(4\pi\).

A Dirac multiplet with \(d\)-dimensional internal flavor space is written in
terms of \(2d\) Weyl fields.  For a purely scalar Dirac Yukawa matrix
\(y_r^D\) and vanishing pseudoscalar matrix, the block map of
Ref.~\cite{PannellStergiou2023} gives
\begin{equation}
Y_r
=
\frac12
\begin{pmatrix}
0&(y_r^D)^*\\
y_r^D&0
\end{pmatrix}.
\label{eq:appA-dirac-weyl-map}
\end{equation}
For the Hermitian involution channels used in this paper we set
\begin{equation}
y_r^D=2g_rT_r,
\qquad
T_r^\dagger=T_r,
\qquad
T_r^2=\Id_d,
\label{eq:appA-dirac-rescale}
\end{equation}
where \(\Id_d\) is the identity on the \(d\)-dimensional internal flavor
space.  Thus

\begin{equation}
Y_r=g_rK_r,
\qquad
K_r
=
\begin{pmatrix}
0&T_r^*\\
T_r&0
\end{pmatrix}.
\label{eq:appA-Klift}
\end{equation}
Accordingly, \(g_r\) is the Weyl-block amplitude, equivalently a common
rescaling of
the dimensionless Dirac Yukawa matrix.  This is the normalization denoted by
\(a,b\) in section~\ref{subsec:qft-persistence}; any overall conversion
factor is absorbed into the positive constant \(\kappa\) when the flow is
rewritten in the infrared-oriented variable.

Hermiticity of \(T_r\) implies
\begin{equation}
K_r^T=K_r,
\qquad
K_rK_r^\dagger=\Id_{2d}.
\label{eq:appA-Kproperties}
\end{equation}
If two internal channels satisfy
\begin{equation}
T_sT_rT_s=\eta_{sr}T_r,
\qquad
\eta_{sr}\in\{+1,-1\},
\label{eq:appA-Tconjugation}
\end{equation}
then the lifted Weyl matrices obey
\begin{equation}
K_sK_r^\dagger K_s
=
\eta_{sr}K_r.
\label{eq:appA-Kconjugation}
\end{equation}
For Pauli strings, \(\eta_{sr}=+1\) when the strings commute and
\(\eta_{sr}=-1\) when they anticommute.

\subsection{Finite-degree Pauli realization and one-loop closure}
\label{appsub:pauli-closure}

Let \(d=2^n\) and choose distinct nonidentity \(n\)-qubit Pauli strings for
the channels incident on a fermion node.  They satisfy
\begin{equation}
\Tr(T_rT_s)=d\,\delta_{rs},
\label{eq:appA-pauli-orthogonality}
\end{equation}
where \(\delta_{rs}\) is the Kronecker delta.  Every pair either commutes or
anticommutes.  Since the number of
nonidentity strings is \(4^n-1\), any finite node degree can be realized by
taking \(n\) sufficiently large.

For four-dimensional Weyl fermions, the scalar anomalous-dimension tensor
entering the beta function is
\begin{equation}
Z_{rs}
=
\Tr\!\left(
Y_rY_s^\dagger+Y_r^\dagger Y_s
\right).
\label{eq:appA-Zrs}
\end{equation}
Equations~\eqref{eq:appA-Klift} and
\eqref{eq:appA-pauli-orthogonality} give
\begin{equation}
Z_{rs}
=
4d\,g_r^2\,\delta_{rs}.
\label{eq:appA-Zdiagonal}
\end{equation}
Hence off-diagonal scalar wave-function mixing vanishes in this channel
basis.

Let \(\varepsilon\) denote the deviation from four spacetime dimensions and
let \(\mu\) be the renormalization scale.  With
\(\beta_{Y_r}=\mu\,\dd Y_r/\dd\mu\), the general one-loop Weyl Yukawa beta
function can be written as
\begin{align}
\beta_{Y_r}
={}&
-\frac{\varepsilon}{2}Y_r
+
\frac12\sum_s
\Bigl(
Y_sY_s^\dagger Y_r
+Y_rY_s^\dagger Y_s
\nonumber\\
&\hspace{35mm}
+4Y_sY_r^\dagger Y_s
+Z_{rs}Y_s
\Bigr).
\label{eq:appA-general-weyl-beta}
\end{align}
Using
\begin{equation}
Y_sY_s^\dagger=g_s^2\Id_{2d},
\qquad
Y_sY_r^\dagger Y_s
=
\eta_{sr}g_s^2Y_r,
\label{eq:appA-beta-algebra}
\end{equation}
and eq.~\eqref{eq:appA-Zdiagonal}, every term in
eq.~\eqref{eq:appA-general-weyl-beta} is proportional to the original
external channel \(K_r\).  The prescribed Pauli-string Yukawa subspace is
therefore invariant under the one-loop coupling flow.

Projecting onto \(K_r\) gives the general channel-amplitude equation
\begin{equation}
\beta_{g_r}
=
-\frac{\varepsilon}{2}g_r
+
g_r
\left[
\sum_s(1+2\eta_{sr})g_s^2
+
2d\,g_r^2
\right],
\label{eq:appA-general-channel-beta}
\end{equation}
where \(\eta_{rr}=1\).  Equivalently, the self-channel coefficient is
\(2d+3\), while a distinct commuting or anticommuting channel contributes
\(3\) or \(-1\), respectively.

\subsection{\texorpdfstring{Explicit two-channel coefficients \(7:-1\)}{Explicit two-channel coefficients 7:-1}}
\label{appsub:seven-minus-one}

For the persistence realization of section~\ref{subsec:qft-persistence},
take
\begin{equation}
d=2,
\qquad
T_1=\sigma_x,
\qquad
T_2=\sigma_z.
\label{eq:appA-explicit-pauli}
\end{equation}
The two matrices anticommute, so
\begin{equation}
\eta_{11}=\eta_{22}=1,
\qquad
\eta_{12}=\eta_{21}=-1.
\label{eq:appA-etas}
\end{equation}
Writing \(g_1=a\) and \(g_2=b\),
eq.~\eqref{eq:appA-general-channel-beta} gives
\begin{subequations}
\label{eq:appA-twochannel-beta}
\begin{align}
\beta_a
&=
-\frac{\varepsilon}{2}a
+
a(7a^2-b^2),
\label{eq:appA-beta-a}\\
\beta_b
&=
-\frac{\varepsilon}{2}b
+
b(7b^2-a^2).
\label{eq:appA-beta-b}
\end{align}
\end{subequations}

It is useful to display the coefficient bookkeeping explicitly.  For
\(\beta_a\), the two external-leg terms in
eq.~\eqref{eq:appA-general-weyl-beta} contribute
\begin{equation}
+a(a^2+b^2),
\end{equation}
the conjugation term contributes
\begin{equation}
+2a(a^2-b^2),
\end{equation}
and the diagonal scalar wave-function term contributes
\begin{equation}
+4a^3.
\end{equation}
Thus
\begin{equation}
7=1+2+4,
\qquad
-1=1-2+0.
\label{eq:appA-seven-minus-one-breakdown}
\end{equation}
The equation for \(b\) follows by exchange symmetry.  In particular,
anticommuting Pauli conjugation in the vertex term gives the negative cross
coefficient directly.

\subsection{Equal ray and NJLY normalization}
\label{appsub:njly-normalization}

On the symmetric ray
\begin{equation}
a=b=g,
\end{equation}
eqs.~\eqref{eq:appA-twochannel-beta} reduce to
\begin{equation}
\beta_g
=
-\frac{\varepsilon}{2}g+6g^3,
\label{eq:appA-equal-beta}
\end{equation}
with nontrivial one-loop fixed point
\begin{equation}
g_*^2=\frac{\varepsilon}{12}.
\label{eq:appA-gstar}
\end{equation}
This normalization provides an independent check against the standard
Nambu--Jona-Lasinio--Yukawa result.  In the notation of
Ref.~\cite{PannellStergiou2023}, let \(N_D\) denote the number of Dirac
flavors and define \(N=4N_D\).  Restoring the overall \(\varepsilon\)
scaling suppressed in the fixed-point coordinates of that reference,
eq.~(3.10) gives the NJLY fixed point
\begin{equation}
y_*^2=\frac{\varepsilon}{N+4},
\qquad
N=4N_D.
\label{eq:appA-njly-source}
\end{equation}
For \(N_D=2\), \(N=8\), and hence
\(y_*^2=\varepsilon/12\), in agreement with
eq.~\eqref{eq:appA-gstar}.  The corresponding one-real-scalar GNY normalization provides a useful
contrast.  With the same restoration of \(\varepsilon\),
eq.~(3.2) of Ref.~\cite{PannellStergiou2023} gives
\(y_*^2=\varepsilon/(N+6)\) for GNY,
and the same denominator follows directly from
eq.~\eqref{eq:appA-general-channel-beta} after suppressing all distinct
Yukawa channels:
\begin{equation}
\beta_g
=
-\frac{\varepsilon}{2}g+(2d+3)g^3,
\qquad
 g_*^2=\frac{\varepsilon}{4d+6}
      =\frac{\varepsilon}{N+6}.
\label{eq:appA-gny-check}
\end{equation}
On the two-channel anticommuting equal ray, the second channel contributes
\(-g^3\), changing the coefficient from \(2d+3\) to \(2d+2\), and hence the
denominator from \(N+6\) to \(N+4\).  The NJLY agreement therefore serves
as an independent normalization check on the coefficients \(7\) and \(-1\)
derived in
eq.~\eqref{eq:appA-seven-minus-one-breakdown}.

\subsection{Infrared-oriented four-dimensional flow}
\label{appsub:IR-flow}

At \(\varepsilon=0\), the beta function
eq.~\eqref{eq:appA-twochannel-beta} is marginally irrelevant toward the
infrared.  Reversing from the conventional ultraviolet-oriented
\(\log\mu\) variable to the infrared-oriented variable \(\ell\), and
allowing for the common positive normalization used in the cone equations,
gives
\begin{subequations}
\label{eq:appA-IR-twochannel}
\begin{align}
\dot a
&=
-\kappa a(7a^2-b^2),
\\
\dot b
&=
-\kappa b(7b^2-a^2).
\end{align}
\end{subequations}
Here \(\kappa>0\) is a common convention-dependent normalization.
This is the flow quoted in eq.~\eqref{eq:two-channel-IR}.

The multiplicative form
\begin{equation}
\dot a=aF(a,b),
\qquad
\dot b=bH(a,b)
\end{equation}
makes the positive quadrant invariant.  Thus
\(a(0),b(0)>0\) implies \(a(\ell),b(\ell)>0\) for every finite RG time on
which the perturbative solution exists.  This positivity property provides the first step in the finite-mismatch
persistence bootstrap of
section~\ref{subsec:qft-persistence}; it precedes the use of
Theorem~\ref{thm:global-nonexpansion}.

\subsection{Closed quartic completion and normalized infrared flow}
\label{appsub:quartic-completion}

The two discrete transformations stated in
section~\ref{subsec:qft-persistence} forbid odd mixed scalar monomials.
After tuning the relevant scalar masses to the critical surface, the
renormalizable scalar potential is therefore the three-coupling potential
in eq.~\eqref{eq:two-channel-quartic-potential}.  In the same rescaled
convention as eq.~\eqref{eq:appA-general-weyl-beta}, projection of the
general one-loop scalar--fermion quartic beta function
\cite{PannellStergiou2023} gives
\begin{subequations}
\label{eq:appA-quartic-betas}
\begin{align}
\beta_{\lambda_1}
={}&
-\varepsilon\lambda_1
+3(\lambda_1^2+\lambda_{12}^2)
+16a^2\lambda_1
-96a^4,
\label{eq:appA-beta-lambda1}\\
\beta_{\lambda_2}
={}&
-\varepsilon\lambda_2
+3(\lambda_2^2+\lambda_{12}^2)
+16b^2\lambda_2
-96b^4,
\label{eq:appA-beta-lambda2}\\
\beta_{\lambda_{12}}
={}&
-\varepsilon\lambda_{12}
+\lambda_{12}(\lambda_1+\lambda_2)
+4\lambda_{12}^2
\nonumber\\
&+8(a^2+b^2)\lambda_{12}
-32a^2b^2 .
\label{eq:appA-beta-lambda12}
\end{align}
\end{subequations}
Here
\(\beta_{\lambda_I}=\mu\,\dd\lambda_I/\dd\mu\), with
\(I\in\{1,2,12\}\).  The first quadratic terms are the scalar-loop
contributions, the terms proportional to
\(\lambda_I a^2\) or \(\lambda_I b^2\) are generated by scalar
wave-function renormalization, and the negative quartic Yukawa terms come
from the fermion box.  Because
\(T_1T_2+T_2T_1=0\), a constant scalar background obeys
\[
(a\phi_1T_1+b\phi_2T_2)^2
=
(a^2\phi_1^2+b^2\phi_2^2)\Id_2,
\]
where \(\Id_2\) is the identity on the two-dimensional internal flavor
space.  This makes the relative \(a^4:a^2b^2:b^4\) box coefficients
manifest.  The one-loop renormalizability of scalar--Yukawa theories with
Lorentz-violating kinetic terms is consistent with the general analysis of
Ref.~\cite{FerreroAltschul2011}.

On the equal Yukawa ray \(a=b=g\), the combined internal-flavor/scalar
rotation generated by \(\sigma_y\) leaves the interaction invariant.  The
quartic \(O(2)\) ray is
\begin{equation}
\lambda_1=\lambda_2=3\lambda,
\qquad
\lambda_{12}=\lambda.
\label{eq:appA-O2-quartic-ray}
\end{equation}
Substitution into eqs.~\eqref{eq:appA-quartic-betas} reduces all three
equations to
\begin{equation}
\beta_\lambda
=
-\varepsilon\lambda
+
10\lambda^2
+
16\lambda g^2
-
32g^4.
\label{eq:appA-NJLY-quartic-beta}
\end{equation}
Together with eq.~\eqref{eq:appA-equal-beta}, this gives
\begin{equation}
g_*^2=\frac{\varepsilon}{12},
\qquad
\lambda_*=\frac{2}{15}\varepsilon,
\qquad
\frac{\lambda_*}{g_*^2}=\frac85,
\label{eq:appA-NJLY-fixed-pair}
\end{equation}
which is the \(N_D=2\) NJLY fixed point in the same normalization.

For the four-dimensional infrared flow set \(\varepsilon=0\) and use the
infrared-oriented variable \(\ell\) of
section~\ref{subsec:qft-persistence}.  Define
\begin{equation}
\begin{aligned}
S&=a^2+b^2>0,
& q&=\frac{a^2-b^2}{S},\\
u&=\frac{\lambda_1}{S},
& v&=\frac{\lambda_2}{S},
& w&=\frac{\lambda_{12}}{S}.
\end{aligned}
\label{eq:appA-normalized-quartic-defs}
\end{equation}
Introduce a normalized quartic clock \(\vartheta\) by
\begin{equation}
\frac{\dd\vartheta}{\dd\ell}
=
\kappa S,
\label{eq:appA-vartheta}
\end{equation}
and let a prime denote \(\dd/\dd\vartheta\).  Combining
eqs.~\eqref{eq:appA-quartic-betas} with the Yukawa flow gives
\begin{subequations}
\label{eq:appA-normalized-quartic-flow}
\begin{align}
q'
={}&
-8q(1-q^2),
\label{eq:appA-q-prime}\\
u'
={}&
(-2-8q+8q^2)u
-3u^2-3w^2
+24(1+q)^2,
\label{eq:appA-u-prime}\\
v'
={}&
(-2+8q+8q^2)v
-3v^2-3w^2
+24(1-q)^2,
\label{eq:appA-v-prime}\\
w'
={}&
(-2+8q^2-u-v)w
-4w^2
+8(1-q^2).
\label{eq:appA-w-prime}
\end{align}
\end{subequations}
The point
\begin{equation}
p_*
=
(q_*,u_*,v_*,w_*)
=
\left(0,\frac{12}{5},\frac{12}{5},\frac45\right)
\label{eq:appA-pstar}
\end{equation}
is a fixed point.  Writing the deviation vector as
\begin{equation}
\xi
=
(q,\ u-u_*,\ v-v_*,\ w-w_*)^T,
\label{eq:appA-xi}
\end{equation}
the Jacobian \(J_*=\partial(q',u',v',w')/\partial(q,u,v,w)\) at \(p_*\)
is
\begin{equation}
J_*
=
\begin{pmatrix}
-8 & 0 & 0 & 0\\
144/5 & -82/5 & 0 & -24/5\\
-144/5 & 0 & -82/5 & -24/5\\
0 & -4/5 & -4/5 & -66/5
\end{pmatrix},
\label{eq:appA-quartic-Jacobian}
\end{equation}
with eigenvalues
\begin{equation}
-8,\qquad
-18,\qquad
-\frac{58}{5},\qquad
-\frac{82}{5}.
\label{eq:appA-quartic-eigenvalues}
\end{equation}
All eigenvalues have negative real part, so \(p_*\) is a hyperbolic
infrared sink of the common-cone normalized coupling flow.  In particular,
there is an open common-cone neighborhood in which the quartic ratios remain
bounded and converge toward \(p_*\), while the unnormalized quartics satisfy
\(\lambda_I=O(S)\) and therefore remain perturbative as \(S\to0\).

\section{One-loop finite-matrix self-energy derivations}
\label{app:selfenergies}

This appendix supplies the loop derivations underlying
section~\ref{sec:yukawa-flow}.  We keep the full relative positive-definite
spatial matrix throughout the fermion self-energy, derive the scalar
two-point counterterm with its complete quadratic form, and collect the
covariance, isotropic, and Fr\'echet identities used later in the locking
proof.
We use the convention \(\log(\Lambda_{\rm UV}/\mu)\), where
\(\Lambda_{\rm UV}\) is the ultraviolet regulator scale and \(\mu\) is the
renormalization scale.  Other standard logarithmic conventions amount to a
positive rescaling of the RG-time normalization.

\subsection{Fermion self-energy in the identity frame}
\label{appsub:fermion-selfenergy}

Consider one Yukawa edge with fermion propagation matrix \(A=e^Te\) and
scalar propagation matrix \(B\).  By polar decomposition,
\begin{equation}
e=O A^{1/2},
\qquad
O\in O(3),
\label{eq:appB-polar}
\end{equation}
where \(O(3)\) denotes the real orthogonal group.  The orthogonal factor can be absorbed into the spatial Dirac frame.
The loop momentum transformation
\begin{equation}
\boldsymbol K=A^{1/2}\boldsymbol k
\label{eq:appB-fermion-whitening}
\end{equation}
then places the fermion propagator in the identity frame.  The scalar
quadratic form becomes
\begin{equation}
R=A^{-1/2}BA^{-1/2},
\label{eq:appB-relative-R}
\end{equation}
while the positive Jacobian
\((\det A)^{-1/2}=|\det e|^{-1}\) is absorbed into the edge-rate
normalization when the result is returned to the original coordinates.

Let \(k=(k_0,\boldsymbol k)\) and \(q=(q_0,\boldsymbol q)\) denote
Euclidean four-momenta.  In the identity frame the fermion and scalar
propagators are, respectively,
\begin{equation}
S(k)
=
\frac{\gamma_0k_0+\boldsymbol\gamma\cdot\boldsymbol k}
{k_0^2+\boldsymbol k^2},
\qquad
D_R(q)
=
\frac{1}{q_0^2+\boldsymbol q^TR\boldsymbol q}.
\label{eq:appB-propagators}
\end{equation}
Since the internal Yukawa factor obeys \(T^2=\Id\), the logarithmically
divergent self-energy is proportional to the identity in internal flavor
space.  Denoting the Yukawa coupling on this edge by \(g\),
\begin{align}
\Sigma(p)
={}&
g^2
\int\frac{\dd k_0\,\dd^3k}{(2\pi)^4}
\left(\gamma_0k_0+\boldsymbol\gamma\cdot\boldsymbol k\right)
\nonumber\\
&\times
(k_0^2+\boldsymbol k^2)^{-1}
\nonumber\\
&\times
\left[
(p_0-k_0)^2+
(\boldsymbol p-\boldsymbol k)^T
R
(\boldsymbol p-\boldsymbol k)
\right]^{-1}.
\label{eq:appB-Sigma}
\end{align}

Using
\begin{equation}
\frac1{ab}
=
\int_0^1\frac{\dd y}{[(1-y)a+yb]^2},
\label{eq:appB-Feynman}
\end{equation}
define
\begin{equation}
M_y(R)=(1-y)\Id+yR.
\label{eq:appB-My}
\end{equation}
The combined denominator before completing the square is
\begin{align}
&(1-y)(k_0^2+\boldsymbol k^2)
+
y\left[
(p_0-k_0)^2+
(\boldsymbol p-\boldsymbol k)^TR(\boldsymbol p-\boldsymbol k)
\right]
\nonumber\\
&\quad=
(k_0-yp_0)^2
+
\left(
\boldsymbol k-yM_y^{-1}R\boldsymbol p
\right)^T
M_y
\nonumber\\
&\qquad\times
\left(
\boldsymbol k-yM_y^{-1}R\boldsymbol p
\right)
+
\Delta_y(p),
\label{eq:appB-complete-square}
\end{align}
where
\begin{equation}
\Delta_y(p)
=
y(1-y)
\left[
p_0^2+
\boldsymbol p^TRM_y^{-1}\boldsymbol p
\right].
\label{eq:appB-Delta}
\end{equation}
The shifted variables
\begin{equation}
q_0=k_0-yp_0,
\qquad
\boldsymbol q=
\boldsymbol k-yM_y^{-1}R\boldsymbol p
\label{eq:appB-shifts}
\end{equation}
remove all terms linear in the loop momentum.  Odd terms vanish upon
integration, leaving the external-momentum numerator
\begin{equation}
y\gamma_0p_0
+
y\gamma_i[M_y^{-1}R]_{ij}p_j.
\label{eq:appB-external-numerator}
\end{equation}
The coefficient of the ultraviolet logarithm multiplying the linear external
momentum is independent of \(\Delta_y\); \(\Delta_y\) enters the finite part.

\subsection{Logarithmic integral and regulator independence}
\label{appsub:log-integral}

For \(M\in\SPD(3)\) and \(\Delta>0\), define the scalar integral required
by the fermion self-energy as
\begin{equation}
I(M,\Delta)
=
\int\frac{\dd q_0\,\dd^3q}{(2\pi)^4}
\frac1{[q_0^2+\boldsymbol q^TM\boldsymbol q+\Delta]^2}.
\label{eq:appB-I}
\end{equation}
The spatial linear transformation
\begin{equation}
\boldsymbol Q=M^{1/2}\boldsymbol q
\end{equation}
gives
\begin{equation}
I(M,\Delta)
=
\frac1{\sqrt{\det M}}
\int\frac{\dd q_0\,\dd^3Q}{(2\pi)^4}
\frac1{(q_0^2+\boldsymbol Q^2+\Delta)^2}.
\label{eq:appB-I-transform}
\end{equation}
With a spherical four-momentum cutoff and the convention
\(\log(\Lambda_{\rm UV}/\mu)\),
\begin{equation}
I(M,\Delta)
=
\frac1{8\pi^2\sqrt{\det M}}
\log\frac{\Lambda_{\rm UV}}{\mu}
+
O(1).
\label{eq:appB-I-log}
\end{equation}

The same determinant factor follows from a proper-time cutoff.  Using
\begin{equation}
\frac1{X^2}
=
\int_0^\infty \dd t\,t\,e^{-tX}
\label{eq:appB-Schwinger}
\end{equation}
and the Gaussian integrals gives
\begin{equation}
I(M,\Delta)
=
\frac1{16\pi^2\sqrt{\det M}}
\int_0^\infty\frac{\dd t}{t}\,e^{-t\Delta}.
\label{eq:appB-propertime-I}
\end{equation}
Restricting the logarithmic proper-time interval to
\(t\in[\Lambda_{\rm UV}^{-2},\mu^{-2}]\) reproduces
eq.~\eqref{eq:appB-I-log}.  More generally, replacing the asymptotic
spherical boundary by any smooth bounded star-shaped cutoff changes the
radial upper limit by an angle-dependent finite factor.  The coefficient of
\(\int \dd r/r\), and hence the logarithmic matrix kernel, is unchanged;
only the finite part depends on the cutoff shape.

Substituting eq.~\eqref{eq:appB-I-log} into
eq.~\eqref{eq:appB-Sigma} gives
\begin{align}
\Sigma_{\log}(p)
={}&
\kappa_4 g^2
\log\frac{\Lambda_{\rm UV}}{\mu}
\left[
f_0(R)\gamma_0p_0
+
\gamma_iF_{ij}(R)p_j
\right],
\nonumber\\
&\kappa_4=\frac1{8\pi^2}.
\label{eq:appB-Sigma-log}
\end{align}
with
\begin{align}
f_0(R)
&=
\int_0^1
\frac{y\,\dd y}{\sqrt{\det M_y(R)}},
\label{eq:appB-f0}\\
F(R)
&=
\int_0^1
\frac{
yM_y(R)^{-1}R\,\dd y
}{
\sqrt{\det M_y(R)}
}.
\label{eq:appB-F}
\end{align}

\subsection{Temporal normalization and the finite-mismatch kernel}
\label{appsub:H-derivation}

For one infinitesimal infrared shell, write the positive logarithmic
coefficient as \(\delta z_\psi>0\).  In the identity fermion frame the
renormalized inverse kinetic operator has the form
\begin{equation}
(1+\delta z_\psi f_0)\gamma_0p_0
+
\gamma_i
\left[
\delta_{ij}+\delta z_\psi F_{ij}
\right]p_j.
\label{eq:appB-fermion-before-time-normalization}
\end{equation}
Restoring the temporal coefficient to unity changes the spatial frame by
\begin{equation}
e_{\rm new}
=
\Id+
\delta z_\psi
\left[
F(R)-f_0(R)\Id
\right]
+
O(\delta z_\psi^2).
\label{eq:appB-e-update}
\end{equation}
This identifies
\begin{equation}
\mathcal H(R)
=
F(R)-f_0(R)\Id.
\label{eq:appB-H-def}
\end{equation}
Using
\begin{equation}
yM_y^{-1}R
=
\Id-(1-y)M_y^{-1},
\label{eq:appB-My-identity}
\end{equation}
one obtains the finite-mismatch representation
\begin{equation}
\mathcal H(R)
=
\int_0^1\dd y\,
\frac{
(1-y)
\left[
\Id-M_y(R)^{-1}
\right]
}{
\sqrt{\det M_y(R)}
}.
\label{eq:appB-H}
\end{equation}
The derivation retains the full finite relative matrix \(R\) throughout.

Since \(R\) is symmetric positive definite, every \(M_y(R)\) is a polynomial
in \(R\) followed by inversion.  Therefore \(\mathcal H(R)\) is a spectral
matrix function:
\begin{equation}
\begin{aligned}
[\mathcal H(R),R]&=0,\\
\mathcal H(URU^T)&=U\mathcal H(R)U^T,
\qquad U\in O(3).
\end{aligned}
\label{eq:appB-H-spectral}
\end{equation}
The identity-frame quadratic form \(A=e^Te\) consequently changes as
\begin{equation}
\delta A
=
2\delta z_\psi\,\mathcal H(R)
+
O(\delta z_\psi^2).
\label{eq:appB-A-identity-update}
\end{equation}

\subsection{Scalar bubble and scalar-cone flow}
\label{appsub:scalar-bubble}

We next compute the momentum-dependent scalar counterterm.  Return to a
general fermion frame \(A=e^Te\), rotate away the orthogonal factor in
eq.~\eqref{eq:appB-polar}, and define
\begin{equation}
K_0=k_0,
\qquad
\boldsymbol K=A^{1/2}\boldsymbol k,
\qquad
P_0=p_0,
\qquad
\boldsymbol P=A^{1/2}\boldsymbol p.
\label{eq:appB-scalar-whitening}
\end{equation}
The measure contributes
\begin{equation}
\dd^3k
=
\frac{\dd^3K}{\sqrt{\det A}}
=
\frac{\dd^3K}{|\det e|}.
\label{eq:appB-scalar-Jacobian}
\end{equation}

Let
\begin{equation}
n_T=\Tr_{\mathcal V}(T^2).
\label{eq:appB-nT}
\end{equation}
For the involution channels used here,
\(n_T=\dim\mathcal V\).  Defining \(\Pi(p)\) as the contribution added to the
inverse scalar propagator, the one-loop fermion bubble is
\begin{equation}
\Pi(p)
=
-\frac{g^2n_T}{|\det e|}
\int\frac{\dd^4K}{(2\pi)^4}
\frac{
\operatorname{tr}_D[
\not\!K(\not\!K+\not\!P)
]
}{
K^2(K+P)^2}.
\label{eq:appB-Pi}
\end{equation}
Using
\begin{equation}
\begin{aligned}
\operatorname{tr}_D[\not\!K(\not\!K+\not\!P)]
&=4K\cdot(K+P)\\
&=2\left[K^2+(K+P)^2-P^2\right].
\end{aligned}
\label{eq:appB-Dirac-trace}
\end{equation}
the first two terms renormalize only the momentum-independent local scalar
operators in a cutoff treatment and vanish as scaleless integrals in
dimensional regularization.  The logarithmic kinetic contribution is
therefore
\begin{align}
\Pi_{\log}(p)
&=
\frac{2g^2n_T}{|\det e|}
P^2
\int\frac{\dd^4K}{(2\pi)^4}
\frac1{K^2(K+P)^2}
\nonumber\\
&=
\frac{g^2n_T}{4\pi^2|\det e|}
\left(
p_0^2+\boldsymbol p^TA\boldsymbol p
\right)
\log\frac{\Lambda_{\rm UV}}{\mu}
+
O(1).
\label{eq:appB-Pi-log}
\end{align}
The equality uses the same logarithmic convention as
eq.~\eqref{eq:appB-I-log}.  In dimensional regularization the identical
coefficient appears as the residue of the \(1/\varepsilon\) pole after the
standard conversion between \(2/\varepsilon\) and
\(\log(\Lambda_{\rm UV}^2/\mu^2)\).

Let
\begin{equation}
\delta z_\phi
=
\frac{g^2n_T}{4\pi^2|\det e|}\,\delta\ell
\label{eq:appB-dzphi}
\end{equation}
in the present shell convention.  Before temporal normalization the scalar
inverse kinetic operator is
\begin{equation}
(1+\delta z_\phi)p_0^2
+
\boldsymbol p^T
\left(
B+\delta z_\phi A
\right)
\boldsymbol p.
\label{eq:appB-scalar-before-normalization}
\end{equation}
Dividing by the temporal coefficient gives
\begin{equation}
B_{\rm new}
=
B+\delta z_\phi(A-B)
+
O(\delta z_\phi^2),
\label{eq:appB-Bupdate}
\end{equation}
and hence the edge flow
\begin{equation}
\dot B=\alpha(A-B),
\qquad
\alpha>0.
\label{eq:appB-Bflow}
\end{equation}
For the normalization used in
eq.~\eqref{eq:appB-dzphi},
\(\alpha=g^2n_T/(4\pi^2|\det e|)\) before further positive
wave-function/coupling reparametrizations.  The global locking analysis uses the positivity and regularity of the
resulting edge rate.

Scalar quartic tadpoles are momentum independent at one loop and therefore
do not modify eq.~\eqref{eq:appB-Bflow}.

\subsection{Return to a general frame and congruence covariance}
\label{appsub:Q-covariance}

The identity-frame update can be pulled back without choosing a preferred
eigenbasis.  Define
\begin{equation}
\mathcal Q(A,B)
=
A^{1/2}
\mathcal H(A^{-1/2}BA^{-1/2})
A^{1/2}.
\label{eq:appB-Q}
\end{equation}
Then the edge contribution to the fermion quadratic form is
\begin{equation}
\dot A=2\beta\,\mathcal Q(A,B),
\qquad
\beta>0,
\label{eq:appB-Aflow}
\end{equation}
where \(\beta\) contains the positive loop coefficient, the coupling squared,
and the Jacobian and wave-function factors appropriate to the chosen
coordinate convention.

The principal square root is not itself congruence covariant, whereas
\(\mathcal Q\) is.  For \(S\in GL(3,\mathbb R)\), set
\begin{equation}
U
=
(SAS^T)^{-1/2}SA^{1/2}.
\label{eq:appB-U}
\end{equation}
Then
\begin{equation}
UU^T=\Id,
\label{eq:appB-U-orthogonal}
\end{equation}
and
\begin{equation}
(SAS^T)^{-1/2}
SBS^T
(SAS^T)^{-1/2}
=
U
(A^{-1/2}BA^{-1/2})
U^T.
\label{eq:appB-relative-similarity}
\end{equation}
Using the orthogonal equivariance in
eq.~\eqref{eq:appB-H-spectral} and
\begin{equation}
(SAS^T)^{1/2}U=SA^{1/2},
\label{eq:appB-square-root-identity}
\end{equation}
we obtain
\begin{equation}
\mathcal Q(SAS^T,SBS^T)
=
S\mathcal Q(A,B)S^T.
\label{eq:appB-Q-covariance}
\end{equation}
This is the covariance property used throughout the main text.

\subsection{Isotropic closed forms}
\label{appsub:isotropic-forms}

Set
\begin{equation}
R=\nu^2\Id,
\qquad
\nu>0.
\label{eq:appB-isotropic-R}
\end{equation}
Then
\begin{equation}
M_y=
[1+(\nu^2-1)y]\Id,
\end{equation}
and the two integrals can be evaluated elementarily:
\begin{align}
f_0(\nu^2\Id)
&=
\int_0^1
\frac{y\,\dd y}
{[1+(\nu^2-1)y]^{3/2}}
=
\frac{2}{\nu(\nu+1)^2},
\label{eq:appB-f0-isotropic}\\
F(\nu^2\Id)
&=
\int_0^1
\frac{y\nu^2\,\dd y}
{[1+(\nu^2-1)y]^{5/2}}\Id
\nonumber\\
&=
\frac{2(2\nu+1)}
{3\nu(\nu+1)^2}\Id.
\label{eq:appB-F-isotropic}
\end{align}
Therefore
\begin{equation}
\mathcal H(\nu^2\Id)
=
h(\nu^2)\Id,
\qquad
h(\nu^2)
=
\frac{4(\nu-1)}
{3\nu(\nu+1)^2}.
\label{eq:appB-h-isotropic}
\end{equation}
In particular,
\begin{equation}
\operatorname{sgn}h(\nu^2)=\operatorname{sgn}(\nu-1),
\end{equation}
so the fermion cone moves toward the scalar cone, while
eq.~\eqref{eq:appB-Bflow} moves the scalar cone toward the fermion cone.
The resulting attraction for an arbitrary positive velocity ratio is
consistent with the known one-loop Yukawa velocity flow
\cite{AnberDonoghue2011,RoyJuricicHerbut2016}.

At the common cone \(\nu=1\),
\begin{equation}
f_0(\Id)=\frac12,
\qquad
F(\Id)=\frac12\Id,
\qquad
\mathcal H(\Id)=0,
\label{eq:appB-commoncone-values}
\end{equation}
as required by common-cone stationarity.

\subsection{Fr\'echet derivative and common-cone linearization}
\label{appsub:Frechet-H}

The matrix derivative needed in sections~\ref{subsec:endpoint-propagation}
and~\ref{subsec:near-sync} follows directly from
eq.~\eqref{eq:appB-H}.  Since
\begin{equation}
D M_y(R)[K]=yK,
\end{equation}
we have
\begin{equation}
D(M_y^{-1})[K]
=
-yM_y^{-1}KM_y^{-1},
\label{eq:appB-DMinv}
\end{equation}
and
\begin{equation}
D(\det M_y)^{-1/2}[K]
=
-\frac{y}{2}
(\det M_y)^{-1/2}
\Tr(M_y^{-1}K).
\label{eq:appB-Ddet}
\end{equation}
Consequently
\begin{align}
D\mathcal H_R[K]
={}&
\int_0^1\dd y\,
\frac{y(1-y)}{\sqrt{\det M_y}}
\Bigg[
M_y^{-1}KM_y^{-1}
\nonumber\\
&\hspace{18mm}-
\frac12
\left(\Id-M_y^{-1}\right)
\Tr(M_y^{-1}K)
\Bigg].
\label{eq:appB-DH}
\end{align}

At \(R=\Id\), the second term vanishes and
\begin{equation}
D\mathcal H_{\Id}[K]
=
\left(
\int_0^1 y(1-y)\,\dd y
\right)K
=
\frac16K.
\label{eq:appB-DH-I}
\end{equation}
Thus
\begin{equation}
\mathcal H(\Id+K)
=
\frac16K+O(\|K\|^2),
\label{eq:appB-H-linearization}
\end{equation}
which is the coefficient used in the near-synchronization consensus system.

There is also a useful projected derivative at an equality direction.
If
\begin{equation}
Rz=z,
\qquad
\|z\|=1,
\label{eq:appB-unit-eigenvector}
\end{equation}
then
\begin{equation}
M_yz=z,
\qquad
z^T(\Id-M_y^{-1})z=0.
\end{equation}
Taking the \(z\)-expectation of
eq.~\eqref{eq:appB-DH} therefore removes the determinant-derivative term and
gives
\begin{equation}
z^T(D\mathcal H_R[K])z
=
c(R)\,z^TKz,
\label{eq:appB-projected-DH}
\end{equation}
where
\begin{equation}
c(R)
=
\int_0^1
\frac{y(1-y)\,\dd y}
{\sqrt{\det[(1-y)\Id+yR]}}
>0.
\label{eq:appB-cR}
\end{equation}
This is the positive transfer coefficient used in the endpoint-propagation
argument.  Appendix~\ref{app:global-proofs} applies
eq.~\eqref{eq:appB-projected-DH} to the full Taylor-order induction.

\section{Proofs of global nonexpansion and finite-window contraction}
\label{app:global-proofs}

This appendix gives the technical proofs underlying
Theorems~\ref{thm:global-nonexpansion} and~\ref{thm:finite-window}.  We first
treat degenerate active generalized eigenvalues without differentiating a
matrix square root, then prove the extremal sign and equality statements in
their order-theoretic form.  The second half establishes finite-order propagation of strictness, handles
branch switching of the diameter envelope, and proves uniform contraction
factors away from and near the synchronized manifold.

\subsection{Generalized-eigenvalue Dini derivatives}
\label{appsub:pencil-dini}

Let \(A(s),B(s)\in\SPD(3)\) be \(C^1\) matrix curves and define
\begin{equation}
\lambda_{\max}(A,B)
=
\max_{x\neq0}
\frac{x^TBx}{x^TAx}.
\label{eq:appC-Rayleigh}
\end{equation}
At a fixed time \(s_0\), write
\begin{equation}
\lambda=\lambda_{\max}(A,B),
\qquad
\mathcal E
=
\{x:\;Bx=\lambda Ax\}.
\label{eq:appC-active-space}
\end{equation}
Normalize active vectors by \(x^TAx=1\).  The upper Dini derivative is
\begin{equation}
D^+\lambda_{\max}(A,B)
=
\max_{\substack{x\in\mathcal E\\x^TAx=1}}
x^T(\dot B-\lambda\dot A)x.
\label{eq:appC-pencil-Dini}
\end{equation}

A short proof follows directly from the generalized Rayleigh quotient.
For a fixed nonzero \(x\),
\begin{equation}
r_x(s)=\frac{x^TB(s)x}{x^TA(s)x}
\end{equation}
has derivative, at an active normalized vector,
\begin{equation}
\dot r_x
=
x^T(\dot B-\lambda\dot A)x.
\label{eq:appC-Rayleigh-derivative}
\end{equation}
The maximum in eq.~\eqref{eq:appC-Rayleigh} is taken over the compact
\(A\)-unit sphere.  The standard directional-derivative formula for the
maximum of a smooth family of functions therefore gives
eq.~\eqref{eq:appC-pencil-Dini}; in finite dimensions it can also be obtained
by taking maximizing vectors along a sequence \(s_n\downarrow s_0\) and
extracting a convergent subsequence.  This argument treats a degenerate
largest generalized eigenvalue without choosing an eigenbasis inside
\(\mathcal E\).

For the network define
\begin{equation}
\Lambda_{ij}
=
\lambda_{\max}(G_i,G_j),
\qquad
\Lambda=\max_{i,j}\Lambda_{ij}.
\label{eq:appC-network-lambda}
\end{equation}
Since the maximum is over finitely many ordered pairs,
\begin{equation}
D^+\Lambda
\le
\max_{(i,j)\in\mathcal A}
D^+\Lambda_{ij},
\qquad
\mathcal A=\{(i,j):\Lambda_{ij}=\Lambda\}.
\label{eq:appC-finite-max-Dini}
\end{equation}
Equations~\eqref{eq:appC-pencil-Dini} and
\eqref{eq:appC-finite-max-Dini} justify the degenerate-active-pair step in
section~\ref{subsec:global-nonexpansion}.

\subsection{Rank-one sign, equality rigidity, and endpoint signs}
\label{appsub:rankone-proof}

We give a self-contained proof of the kernel inequality used by
Theorem~\ref{thm:global-nonexpansion}.  Let \(R\in\SPD(3)\) and let \(u\) be
a Euclidean unit vector.  We write \(X\succeq Y\) for Loewner order,
meaning that \(X-Y\) is positive semidefinite.  The rank-one order relation
\begin{equation}
R\succeq uu^T
\label{eq:appC-rankone-order}
\end{equation}
is equivalent to
\begin{equation}
u^TR^{-1}u\le1.
\label{eq:appC-Schur}
\end{equation}
Indeed, after congruence by \(R^{-1/2}\),
eq.~\eqref{eq:appC-rankone-order} becomes
\begin{equation}
\Id\succeq vv^T,
\qquad
v=R^{-1/2}u,
\end{equation}
which is equivalent to \(\|v\|^2\le1\), with \(\|\cdot\|\) the Euclidean
vector norm in this paragraph.

For
\begin{equation}
M_y=(1-y)\Id+yR,
\end{equation}
operator convexity of inversion gives
\begin{equation}
M_y^{-1}
\preceq
(1-y)\Id+yR^{-1}.
\label{eq:appC-operator-convex}
\end{equation}
Therefore
\begin{equation}
u^T(\Id-M_y^{-1})u
\ge
y(1-u^TR^{-1}u)
\ge0.
\label{eq:appC-integrand-sign}
\end{equation}
Multiplying by the positive weight
\((1-y)/\sqrt{\det M_y}\) and integrating yields
\begin{equation}
R\succeq uu^T
\quad\Longrightarrow\quad
u^T\mathcal H(R)u\ge0.
\label{eq:appC-rankone-sign}
\end{equation}

The equality case is rigid.  If
\(u^T\mathcal H(R)u=0\), the continuous nonnegative integrand in
eq.~\eqref{eq:appC-integrand-sign} vanishes for every \(y\in[0,1]\).
Expanding \(M_y^{-1}\) at \(y=0\),
\begin{equation}
M_y^{-1}
=
\Id-y(R-\Id)+O(y^2),
\end{equation}
gives
\begin{equation}
u^TRu=1.
\label{eq:appC-uRu}
\end{equation}
Since \(R-uu^T\succeq0\),
\begin{equation}
u^T(R-uu^T)u=0
\end{equation}
implies
\begin{equation}
(R-uu^T)u=0,
\end{equation}
and hence
\begin{equation}
Ru=u.
\label{eq:appC-Ru-u}
\end{equation}
The converse is immediate because \(M_yu=u\) for every \(y\).  Thus
\begin{equation}
u^T\mathcal H(R)u=0
\quad\Longleftrightarrow\quad
Ru=u
\qquad
(R\succeq uu^T).
\label{eq:appC-rigidity}
\end{equation}

The inversion identity
\begin{equation}
\mathcal H(R^{-1})
=
-\sqrt{\det R}\,\mathcal H(R)
\label{eq:appC-inversion}
\end{equation}
converts eq.~\eqref{eq:appC-rankone-sign} into the dual statement
\begin{equation}
R^{-1}\succeq uu^T
\quad\Longrightarrow\quad
u^T\mathcal H(R)u\le0,
\label{eq:appC-dual-sign}
\end{equation}
with equality again equivalent to \(Ru=u\).

We now spell out the endpoint equality used later.  Let
\((i,j)\) be active,
\begin{equation}
G_jx=\Lambda G_ix,
\qquad
x^TG_ix=1.
\label{eq:appC-active-pair}
\end{equation}
At a lower bosonic endpoint \(G_i=B_r\), every neighboring fermion satisfies
\begin{equation}
x^TA_ax\ge x^TB_rx=1.
\end{equation}
If the lower endpoint derivative vanishes, positivity of all edge
contributions forces
\begin{equation}
x^TA_ax=1
\qquad
\text{for every }a\sim r.
\label{eq:appC-boson-equality-scalar}
\end{equation}
Since
\begin{equation}
G_j\preceq\Lambda A_a
\end{equation}
and
\(x^T(\Lambda A_a-G_j)x=0\), positivity of the matrix
\(\Lambda A_a-G_j\) implies
\begin{equation}
A_ax=B_rx.
\label{eq:appC-boson-vector-equality}
\end{equation}
At a lower fermionic endpoint \(G_i=A_a\), equality in
eq.~\eqref{eq:appC-rankone-sign} gives exactly the same relation
\(B_rx=A_ax\) for every incident edge.  The upper endpoint follows by the
dual argument.  Thus:

\begin{lemma}[Endpoint equality propagation]
\label{lem:appC-endpoint-equality}
If an active lower or upper endpoint has zero directional derivative along
an active generalized eigenvector, then the vector-valued cone relation
\(G_vx\) agrees across every incident edge.
\end{lemma}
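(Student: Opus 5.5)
Lemma~\ref{lem:appC-endpoint-equality} will be proved by splitting into the four endpoint types (lower bosonic, lower fermionic, upper bosonic, upper fermionic). For each, the vanishing directional derivative will be written as a sum of edge contributions that are each sign-definite by the proof of Theorem~\ref{thm:global-nonexpansion}. Hence every summand vanishes separately, using the positivity of the rates \(\widehat\alpha_{ar},\widehat\beta_{ar}\). The remaining task is to upgrade each scalar equality \(x^T(\cdot)x=0\) to the vector identity \(G_vx=G_wx\) across the edge.

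\textbf{Lower bosonic endpoint} \(G_i=B_r\). Vanishing of each term gives \(x^TA_ax=1\) for every neighbor \(a\sim r\). The global Loewner bound \(G_j\preceq\Lambda A_a\) makes \(\Lambda A_a-G_j\) positive semidefinite. Since \(x^T(\Lambda A_a-G_j)x=\Lambda-\Lambda=0\), this matrix annihilates \(x\), so \(\Lambda A_ax=G_jx=\Lambda B_rx\) and therefore \(A_ax=B_rx\).

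\textbf{Lower fermionic endpoint} \(G_i=A_a\). Work in the frame \eqref{eq:lower-fermion-frames}, with \(u=A_a^{1/2}x\) a unit vector. Each edge term is proportional to \(u^T\cH(R)u\ge0\), where \(R\succeq uu^T\). Vanishing therefore invokes the equality rigidity \eqref{eq:appC-rigidity}, giving \(Ru=u\). Unwhitening, \(A_a^{-1/2}B_rx=A_a^{1/2}x\), i.e.\ \(B_rx=A_ax\).

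\textbf{Upper endpoints.} These are dual. For a bosonic \(G_j=B_r\), the bound \(A_a\preceq\Lambda G_i\) together with \(x^TA_ax=\Lambda\) gives \((\Lambda G_i-A_a)x=0\), hence \(A_ax=\Lambda G_ix=B_rx\). For a fermionic \(G_j=A_a\), use \(v=A_a^{1/2}x/\sqrt\Lambda\) and the dual rigidity statement following \eqref{eq:appC-dual-sign} (from the inversion identity) to get \(Rv=v\), which unwhitens to \(B_rx=A_ax\).

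\textbf{Main obstacle.} The only delicate point is the fermionic endpoint, where the directional derivative is \(x^T\dot A_ax=2\sum_r\widehat\beta_{ar}\,x^T\cQ(A_a,B_r)x\). One must check that \(x^T\cQ x=u^T\cH(R)u\) exactly, which holds by the definition \eqref{eq:jhep-Q-kernel}. This identity is what lets per-edge vanishing feed the rigidity statement. It must also hold for every vector in a degenerate active eigenspace, which it does, since the argument is pointwise in \(x\).
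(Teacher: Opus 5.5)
Your proposal is correct and follows essentially the same route as the paper. Both arguments make each sign-definite edge term vanish separately under positive rates. For a bosonic endpoint they then use that the positive-semidefinite matrix \(\Lambda A_a-G_j\) (or \(\Lambda G_i-A_a\)) annihilates \(x\); for a fermionic endpoint they use the rigidity \(u^T\cH(R)u=0\Leftrightarrow Ru=u\). Your version differs only in writing out the upper-endpoint cases, which the paper dismisses as ``the dual argument.''
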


For a complete bipartite graph this already implies pointwise strictness.
If both active endpoint derivatives vanished in a nonsynchronized state, the
relation in Lemma~\ref{lem:appC-endpoint-equality} would propagate through a
direct edge when the endpoints lie in opposite partitions, or through any
shared neighbor when they lie in the same partition.  It would follow that
\(G_jx=G_ix\), contradicting
\(G_jx=\Lambda G_ix\) with \(\Lambda>1\).

\subsection{Finite-order propagation of strictness}
\label{appsub:jet-propagation}

For a sparse connected graph, endpoint equality can persist through several
edges.  Finite graph distance nevertheless forces strictness to appear at a
finite Taylor order.

Let \(s=0\) be a reference time and let
\(\lambda(s)\) be a real-analytic active generalized-eigenvalue branch with
\begin{equation}
\lambda(0)=\Lambda>1.
\end{equation}
Choose a corresponding analytic generalized-eigenvector branch \(x(s)\)
inside a local Rellich block and normalize it by
\begin{equation}
x(s)^TG_i(s)x(s)=1
\label{eq:appC-x-normalization}
\end{equation}
at the lower endpoint.  The active-branch logarithm is
\begin{equation}
D(s)=\log\lambda(s).
\end{equation}

Assume the first derivative vanishes.  By
Lemma~\ref{lem:appC-endpoint-equality}, the directional cone vector agrees
across every incident edge at the endpoint.  Choose a shortest graph path
\begin{equation}
v_0=i,\ v_1,\ldots,v_m=j
\label{eq:appC-transfer-path}
\end{equation}
from this lower endpoint to the active upper endpoint, and set
\begin{equation}
r_k=x(0)^TG_{v_k}(0)x(0).
\label{eq:appC-path-rk}
\end{equation}
The global extremality of the active pair gives, for every network node
\(v\),
\begin{equation}
G_j(0)\preceq\Lambda G_v(0),
\qquad
G_v(0)\preceq\Lambda G_i(0).
\label{eq:appC-global-path-orders}
\end{equation}
Using eqs.~\eqref{eq:appC-active-pair} and
\eqref{eq:appC-x-normalization}, these inequalities imply
\begin{equation}
1\le x(0)^TG_v(0)x(0)\le\Lambda.
\label{eq:appC-path-scalar-bounds}
\end{equation}
Moreover, saturation of the lower bound is rigid.  If
\(x^TG_vx=1\) at \(s=0\), then the positive-semidefinite matrix
\(\Lambda G_v-G_j\) has zero quadratic form on \(x\), and therefore
\begin{equation}
(\Lambda G_v-G_j)x=0.
\end{equation}
Together with \(G_jx=\Lambda G_ix\), this gives
\begin{equation}
G_vx=G_ix.
\label{eq:appC-lower-saturation-vector}
\end{equation}
Thus, along the chosen path, lower scalar saturation implies the full
propagated vector-valued equality.  Since \(r_0=1\) and \(r_m=\Lambda>1\),
there is a first index \(p\) for which \(r_p>1\).  For every \(k<p\),
eq.~\eqref{eq:appC-lower-saturation-vector} gives
\(G_{v_k}x=G_ix\), while the first departing edge
\(e_*=(v_{p-1},v_p)\) satisfies
\begin{equation}
x^T\bigl(G_{v_p}-G_{v_{p-1}}\bigr)x>0.
\label{eq:appC-first-scalar-departure}
\end{equation}
If \(v_{p-1}\) is bosonic, eq.~\eqref{eq:appC-boson-flux} therefore gives a
strictly positive lower-endpoint quadratic flux.  If \(v_{p-1}\) is
fermionic, use its equality frame and the same global-order argument as in
Lemma~\ref{lem:appC-endpoint-equality}: one has
\(R\succeq uu^T\), while eq.~\eqref{eq:appC-first-scalar-departure} excludes
\(Ru=u\).  The rigidity equivalence
\eqref{eq:appC-rigidity} then yields
\(u^T\mathcal H(R)u>0\).  Hence in either endpoint type the first edge
leaving the lower saturated plateau supplies a strictly inward order-one
quadratic flux.  We transfer this coefficient back toward the active
endpoint through the preceding equality edges.

For a bosonic update the edge vector field is linear,
\begin{equation}
\mathcal F^{(B)}_e
=
\widehat\alpha_e(A-B).
\label{eq:appC-boson-flux}
\end{equation}
Suppose the edge mismatch and all of its directional derivatives through
order \(m-1\) vanish at the preceding equality edge, while the first
nonzero inward coefficient appears at order \(m\) on the adjacent node.
The product rule gives, at order \(m+1\),
\begin{align}
\partial_s^{m+1}\mathcal F^{(B)}_e(0)
={}&
\widehat\alpha_e(0)
\partial_s^{m+1}(A-B)(0)
\nonumber\\
&+
\sum_{r=1}^{m+1}
\binom{m+1}{r}
\partial_s^r\widehat\alpha_e(0)
\nonumber\\
&\hspace{18mm}\times
\partial_s^{m+1-r}(A-B)(0).
\label{eq:appC-boson-product}
\end{align}
Every term in the sum contains a lower-order mismatch derivative and
therefore vanishes by the induction hypothesis.  The leading coefficient is
multiplied only by
\(\widehat\alpha_e(0)>0\).

For a fermionic update, return to the equality frame of the edge.  The
zeroth-order relative matrix satisfies
\begin{equation}
Rz=z.
\end{equation}
The first variation of the Yukawa kernel projected onto \(z\) is, by
eqs.~\eqref{eq:appB-projected-DH}--\eqref{eq:appB-cR},
\begin{equation}
z^T(D\mathcal H_R[K])z
=
c(R)z^TKz,
\qquad
c(R)>0.
\label{eq:appC-Frechet-transfer}
\end{equation}
Thus the first nonzero coefficient transmitted through a fermionic equality
edge is multiplied by
\begin{equation}
2\widehat\beta_e(0)c(R)>0.
\label{eq:appC-fermion-multiplier}
\end{equation}
Higher derivatives of \(\widehat\beta_e\), of the congruence frame, and of
the active eigenvector multiply lower-order edge mismatches.  Those
mismatches vanish at the induction step, so such terms do not alter the
first nonzero order or its sign.

Induction along the finite path proves:

\begin{lemma}[Finite-order endpoint propagation]
\label{lem:appC-finite-order}
For a nonsynchronized analytic connected trajectory, every analytic active
branch with value \(\lambda>1\) has a finite Taylor order \(q\ge1\) at which
it moves strictly inward:
\begin{equation}
D(s)=D(0)-Cs^q+O(s^{q+1}),
\qquad
C>0.
\label{eq:appC-finite-order-expansion}
\end{equation}
When the first derivative vanishes, the order is at most the number of edges
from the active endpoint to the first edge that leaves the lower saturated
plateau \(x^TG_vx=1\) along the chosen path.  By
eq.~\eqref{eq:appC-lower-saturation-vector}, all preceding edges satisfy the
full propagated vector equality.  Rate and frame derivatives do not change
the leading order.
\end{lemma}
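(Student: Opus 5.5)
The plan is to argue by contradiction on the Taylor jet of the branch, combined with a shortest-path induction. Work in the analytic Rellich block of the active branch, with analytic eigenvector \(x(s)\) normalized by \(x^TG_ix=1\). First compute the branch derivatives. Differentiating \(G_jx=\lambda G_ix\) and pairing with \(x\) gives
\[
\dot\lambda=x^T(\dot G_j-\lambda\dot G_i)x .
\]
This is the single-vector analogue of eq.~\eqref{eq:appC-pencil-Dini}. Each higher derivative \(\lambda^{(k)}\) equals \(x^T\bigl(G_j^{(k)}-\lambda G_i^{(k)}\bigr)x\) plus terms involving lower derivatives of \(\lambda\) and of \(x\). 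By the endpoint signs of section~\ref{subsec:global-nonexpansion}, both endpoint contributions at first order are \(\le0\). If \(\dot\lambda(0)\neq0\), we are done with \(q=1\).

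Otherwise, Lemma~\ref{lem:appC-endpoint-equality} applies at the lower endpoint, and dually at the upper one. Take the shortest path in eq.~\eqref{eq:appC-transfer-path} and the first departing index \(p\) constructed above. By eq.~\eqref{eq:appC-lower-saturation-vector}, the edges \((v_{k-1},v_k)\) with \(k<p\) are vector-equality edges, and \(e_*\) carries a strictly inward order-one flux. This uses the strict case of the rigidity equivalence eq.~\eqref{eq:appC-rigidity} for a fermionic \(v_{p-1}\), and eq.~\eqref{eq:appC-first-scalar-departure} for a bosonic one. Note that \(p\) is finite, and \(p\le m\) because \(r_m=\Lambda>1\); this is exactly where nonsynchronization and connectedness enter.

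Next run the induction backwards along \(v_{p-1},\dots,v_0\). The claim is that the directional mismatch \(x^T(G_{v_{k}}-G_{v_{k-1}})x\) has vanishing derivatives through order \(p-k-1\), and that its order-\((p-k)\) derivative has the inward sign. The step across a bosonic equality edge is eq.~\eqref{eq:appC-boson-product}. Across a fermionic edge we expand \(\cQ\) in the equality frame and use eq.~\eqref{eq:appC-Frechet-transfer}, which gives the positive multiplier eq.~\eqref{eq:appC-fermion-multiplier}.

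In both cases the cross terms are the key point. These are derivatives of \(\widehat\alpha_e,\widehat\beta_e\), of the frame \(A^{1/2}\), and of \(x(s)\), and they multiply lower-order mismatches. For the fermionic edge they also include the quadratic remainder of \(\cH\) at \(R\). Each such term vanishes by the induction hypothesis, because it is either a product of lower mismatch jets or of the form \(O(\|K\|^2)\) with \(K\) of order \(\ge1\). One must also check that the other incident edges at each intermediate node contribute with the same sign or vanish to that order. This follows from the global Loewner bounds eq.~\eqref{eq:appC-global-path-orders}, which make every incident flux at a saturated node one-signed. Consequently, competing paths can only add inward coefficients.

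Reaching \(v_0=i\), the derivative \(x^T\dot G_i x\) first becomes nonzero, and positive, at order \(p\). Then \(\lambda^{(p)}(0)<0\) provided the eigenvector-derivative corrections in \(\lambda^{(p)}\) vanish. They do, because \(\lambda^{(k)}(0)=0\) for \(k<p\), and \(x^{(k)}\) pairs against \((G_j-\lambda G_i)x=0\) or against lower jets that are zero. The same argument at the upper endpoint can only add an inward term. Passing to \(D=\log\lambda\) gives eq.~\eqref{eq:appC-finite-order-expansion} with \(C=-\lambda^{(p)}(0)/(p!\,\Lambda)>0\) and \(q\le p\).

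The main obstacle is the bookkeeping of the moving frame and eigenvector, that is, proving that every cross term at order \(k\) really contains a factor whose jet vanishes below order \(k\). I would handle this by working with the vector-valued mismatch \((G_{v_k}-G_{v_{k-1}})x(s)\) rather than its scalar quadratic form. The vector equality then holds at order zero, and the analytic product rule cleanly shifts orders.
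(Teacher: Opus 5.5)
Your architecture is the paper's: a shortest $i$--$j$ path, the first departing index $p$, a strict order-one source on $e_*$ from the rigidity equivalence, and backward transfer with multipliers $\widehat\alpha_e(0)$ and $2\widehat\beta_e(0)c(R)$. The gap is in the step you yourself flag. You discard the quadratic remainder of $\cH$ as $O(\|K\|^2)$ with $K$ of order $\ge1$, and you discard frame and eigenvector derivatives as multiplying lower-order mismatches. Neither holds term by term. Edge equality holds only along one direction, $R_0u_0=u_0$, while $\cH(R_0)\neq0$ on the other eigendirections of $R_0$. So fermion frames and neighbouring cones move transversally at first order even deep inside the saturated plateau. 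Hence $K=R(s)-R(0)$ is generically $O(s)$ as a matrix even where the vector mismatch $(R-\Id)u$ vanishes to second order: $u_0^TK'u_0=0$ but $K'u_0=-(R_0-\Id)u'\neq0$, with $u'=\partial_s(A^{1/2}x)|_{s=0}$. Your $O(\|K\|^2)$ is then only $O(s^2)$, and its fixed-$u_0$ part is $-s^2\int_0^1y^2(1-y)(\det M_y)^{-1/2}(K'u_0)^TM_y^{-1}K'u_0\,\dd y\neq0$. Likewise $\partial_s^2[u^T\cH(R)u]$ contains the frame term $2u'^T\cH(R_0)u'\neq0$, because $\cH(R_0)$ annihilates $u_0$ but not $u'$. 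These pieces cancel only in combination, so the term-by-term argument breaks down once a fermionic node sits three or more edges behind the first departure. The paper's proof covers this step with a one-line assertion; the justification you supply is what fails. Switching to the vector mismatch while keeping the Fr\'echet expansion meets the same obstruction, since e.g.\ $A^{1/2}\cH(R_0)u'$ is a nonzero vector.

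The missing ingredient is an exact identity. From $\Id-M_y^{-1}=yM_y^{-1}(R-\Id)$,
\[
\cH(R)=\Bigl[\int_0^1\frac{y(1-y)}{\sqrt{\det M_y}}M_y^{-1}\,\dd y\Bigr](R-\Id),
\qquad
\cQ(A,B)x=\mathcal W(A,B)\,(B-A)x,
\]
where $\mathcal W(A,B)=A^{1/2}\bigl[\int_0^1y(1-y)(\det M_y)^{-1/2}M_y^{-1}\,\dd y\bigr]A^{-1/2}$ is analytic and satisfies $x^T\mathcal W=c(R)\,x^T$ whenever $Bx=Ax$. Freeze $x_0=x(0)$ and set $z_v(s)=G_v(s)x_0$. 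Every node then obeys the exactly linear system $\dot z_v=\sum_{w\sim v}\mathcal W_{vw}(s)(z_w-z_v)$, with $\mathcal W_{vw}=\widehat\alpha_e\Id$ for bosonic updates and $2\widehat\beta_e\mathcal W(A_v,B_w)$ for fermionic ones. All rate and frame derivatives now sit inside $\mathcal W_{vw}^{(k)}$ and multiply genuinely vanishing vector differences.

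Run the induction on the arrival depth $\tau_v=d(v,\{w:x_0^TG_wx_0>1\})$, as in appendix~\ref{appsub:topology-induction}, not on edge mismatches along your fixed path. A departing off-path neighbour of some $v_k$ can make $x^T(G_{v_k}-G_{v_{k-1}})x$ nonzero below order $p-k$, or nonzero at that order with either sign, so your hypothesis is false in general. The Loewner bounds at $s=0$ only sign first-order fluxes.

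The depth induction gives $z_v^{(k)}(0)=0$ for $1\le k<\tau_v$. For $\tau_v\ge2$ it gives $x_0^Tz_v^{(\tau_v)}(0)=\sum_{w\sim v,\ \tau_w=\tau_v-1}\mu_e\,x_0^Tz_w^{(\tau_v-1)}(0)>0$; the case $\tau_v=1$ is the strict rank-one sign. The upper plateau is handled dually, with negative sign and $\tau_j$ defined dually. Hence $(G_j-\Lambda G_i)^{(k)}x_0=0$ for $1\le k<q=\min(\tau_i,\tau_j)\le p$, and the eigenvector derivatives below order $q$ stay parallel to $x_0$. Therefore $\lambda^{(q)}(0)=x_0^T(G_j^{(q)}-\Lambda G_i^{(q)})x_0<0$, which is the lemma with $q\le p$.
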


For general data the lemma establishes finite-order strictness.  If several
shortest transfer routes reach the same active endpoint, every leading edge
multiplier in eqs.~\eqref{eq:appC-boson-product} and
\eqref{eq:appC-fermion-multiplier} is positive.  Their first surviving
coefficients therefore add with the same inward sign; the explicit path-sum
form is given in appendix~\ref{appsub:topology-induction}.  The two-plateau
structure of section~\ref{sec:topology-onset} identifies the resulting order
with the graph-distance quantity \(q_{ij}\).

\subsection{Analytic branches and exclusion of constant-diameter intervals}
\label{appsub:branch-switching}

The extended normalized RG vector field is real analytic on the compact
controlled domain.  Hence every matrix entry \(G_i(s)\) is real analytic in
\(s\).  For an ordered pair \((i,j)\), first pass from the generalized
symmetric pencil \((G_j,G_i)\) to the ordinary symmetric family
\begin{equation}
C_{ij}(s)
=
G_i(s)^{-1/2}G_j(s)G_i(s)^{-1/2}
\label{eq:appC-Cij}
\end{equation}
is analytic because the principal inverse square root is analytic on
\(\SPD(3)\).  This reduction includes degenerate generalized eigenvalues:
Rellich's theorem is applied to the full analytic symmetric block, not to a
preselected simple eigenvector.  Rellich's theorem for analytic
real-symmetric matrix families \cite{Kato1995} therefore supplies finitely
many real-analytic local eigenvalue branches
\begin{equation}
\lambda_{ij,1}(s),\ldots,\lambda_{ij,3}(s).
\label{eq:appC-Rellich-branches}
\end{equation}
The network extremum is their finite upper envelope,
\begin{equation}
\Lambda(s)
=
\max_{i,j,n}\lambda_{ij,n}(s).
\label{eq:appC-envelope}
\end{equation}

Suppose a connected nonsynchronized trajectory had
\begin{equation}
\Lambda(s)=\Lambda_*>1
\end{equation}
through a nonempty open interval \(I\).  Restrict to a smaller interval on
which all branches in eq.~\eqref{eq:appC-Rellich-branches} are defined.
For each branch set
\begin{equation}
Z_{ij,n}
=
\{s\in I:\lambda_{ij,n}(s)=\Lambda_*\}.
\end{equation}
The finitely many closed sets \(Z_{ij,n}\) cover \(I\).  By the Baire
category theorem, one of them contains a nonempty open subinterval.
Equivalently, a real-analytic function that is not identically zero has only
isolated zeros.  Hence at least one analytic branch equals \(\Lambda_*\) on
a subinterval.  This contradicts
Lemma~\ref{lem:appC-finite-order}.  Therefore a nonsynchronized connected
trajectory cannot have constant Thompson diameter on any nonzero
accumulated-time interval.

Since Theorem~\ref{thm:global-nonexpansion} already gives monotonicity, we
have for every nonsynchronized state and every \(T>0\),
\begin{equation}
\DT(s+T)<\DT(s)
\label{eq:appC-strict-window}
\end{equation}
whenever the flow exists on the entire window.

\subsection{Uniform contraction away from synchronization}
\label{appsub:away-uniform}

Fix a reference node \(v_0\).  A smooth representative of the relative
congruence class is obtained by
\begin{equation}
\widetilde G_i
=
G_{v_0}^{-1/2}G_iG_{v_0}^{-1/2},
\qquad
\widetilde G_{v_0}=\Id.
\label{eq:appC-relative-section}
\end{equation}
If
\(\DT\le D_0\), then
\begin{equation}
e^{-D_0}\Id
\preceq
\widetilde G_i
\preceq
e^{D_0}\Id.
\label{eq:appC-compact-section}
\end{equation}
Compactness of the remaining variables \(Z\), together with the normalized
rate bounds, defines a compact normalized state space \(\mathcal K\).

For \(\varepsilon>0\), let
\begin{equation}
\mathcal K_\varepsilon
=
\{\mathsf X\in\mathcal K:\DT(\mathsf X)\ge\varepsilon\}.
\label{eq:appC-Keps}
\end{equation}
Let \(\Phi_T\) denote the time-\(T\) map of the normalized autonomous flow.
ODE continuous dependence makes \(\Phi_T\) continuous on
\(\mathcal K_\varepsilon\), and Thompson distance is continuous on
\(\SPD(3)\times\SPD(3)\).  Therefore
\begin{equation}
R_T(\mathsf X)
=
\frac{\DT(\Phi_T\mathsf X)}{\DT(\mathsf X)}
\label{eq:appC-ratio}
\end{equation}
is continuous on \(\mathcal K_\varepsilon\).  By
eq.~\eqref{eq:appC-strict-window},
\(R_T(\mathsf X)<1\) pointwise.  Compactness then gives
\begin{equation}
\rho^{\rm out}_{T,\varepsilon}
=
\max_{\mathsf X\in\mathcal K_\varepsilon}R_T(\mathsf X)
<1.
\label{eq:appC-rho-out}
\end{equation}

\subsection{Uniform contraction near the synchronized manifold}
\label{appsub:near-uniform}

We now control the region in which the relative diameter is small.  By one
common congruence, choose local coordinates around a synchronized state in
which the common cone is the identity:
\begin{equation}
A_a=\Id+X_a,
\qquad
B_r=\Id+Y_r.
\label{eq:appC-near-coordinates}
\end{equation}
Only differences between the matrices are transverse to the synchronized
manifold.  Define the operator-norm oscillation
\begin{equation}
\operatorname{osc}_{\rm op}(X,Y)
=
\max_{u,v}
\|Z_u-Z_v\|_{\rm op},
\label{eq:appC-op-osc}
\end{equation}
where \(Z_u\) denotes any \(X_a\) or \(Y_r\).  This is a norm on the quotient
of the finite family of symmetric matrices by the common diagonal subspace.

Appendix~\ref{app:selfenergies} gives
\begin{equation}
D\mathcal H_{\Id}[K]=\frac16K,
\end{equation}
and hence the transverse linearization
\begin{subequations}
\label{eq:appC-linear-consensus}
\begin{align}
\dot X_a
&=
\sum_{r\sim a}
\frac{\widehat\beta_{ar}}{3}(Y_r-X_a),
\\
\dot Y_r
&=
\sum_{a\sim r}
\widehat\alpha_{ar}(X_a-Y_r).
\end{align}
\end{subequations}
For each matrix entry this is the same time-dependent scalar consensus
system
\begin{equation}
\dot z=L(s)z,
\label{eq:appC-scalar-consensus}
\end{equation}
where \(L(s)\) is Metzler, has zero row sums, and is irreducible on the fixed
connected bidirected graph.  If
\begin{equation}
m_0=\frac{m}{3},
\qquad
M_0=M,
\label{eq:appC-m0M0}
\end{equation}
then every directed edge rate lies in
\([m_0,M_0]\) after using the smaller lower bound for the two orientations.

Let \(N\) be the total number of nodes and
\(\Delta\) the maximal graph degree.  Denote the transition matrix over a
window \(T\) by \(P(s+T,s)\).  It is row stochastic.  Positivity and
variation of constants along a directed path of length \(d\le N-1\) give
the uniform entry bound
\begin{equation}
P_{uv}(s+T,s)
\ge
e^{-\Delta M_0T}
\frac{(m_0T)^d}{d!}.
\label{eq:appC-path-lower-bound}
\end{equation}
For \(u=v\), the same estimate holds with \(d=0\).  Consequently every entry
is bounded below by
\begin{equation}
\eta_T
=
e^{-\Delta M_0T}
\min_{0\le d\le N-1}
\frac{(m_0T)^d}{d!}
>0.
\label{eq:appC-eta}
\end{equation}

For a row-stochastic matrix \(P\), define the Dobrushin coefficient
\begin{equation}
\delta(P)
=
1-
\min_{u,v}
\sum_w\min(P_{uw},P_{vw}).
\label{eq:appC-Dobrushin}
\end{equation}
Equation~\eqref{eq:appC-eta} gives
\begin{equation}
\delta(P(s+T,s))
\le
1-N\eta_T
\equiv
\delta_T
<1.
\label{eq:appC-Dobrushin-bound}
\end{equation}
For any norm on the value space, convexity yields
\begin{equation}
\max_{u,v}
\left\|
\sum_w(P_{uw}-P_{vw})Z_w
\right\|
\le
\delta(P)
\max_{u,v}\|Z_u-Z_v\|.
\label{eq:appC-Dobrushin-anynorm}
\end{equation}
Taking the operator norm shows that the linearized evolution contracts
\(\operatorname{osc}_{\rm op}\) by at most \(\delta_T\).

It remains to compare this quotient norm with Thompson diameter.  For
\(\|Z_u\|_{\rm op}\le\epsilon\) with \(\epsilon<1/2\), the logarithmic
generalized-eigenvalue expansion gives uniformly
\begin{equation}
d_T(\Id+Z_u,\Id+Z_v)
=
\|Z_v-Z_u\|_{\rm op}
+
O\!\left(
\epsilon\|Z_v-Z_u\|_{\rm op}
\right).
\label{eq:appC-local-Thompson}
\end{equation}
Hence there are constants \(c_-(\epsilon),c_+(\epsilon)\to1\) as
\(\epsilon\to0\) such that
\begin{equation}
c_-\operatorname{osc}_{\rm op}
\le
\DT
\le
c_+\operatorname{osc}_{\rm op}.
\label{eq:appC-local-equivalence}
\end{equation}

Because the full normalized vector field is analytic and vanishes on the
synchronized manifold in the relative directions, its transverse nonlinear
remainder obeys
\begin{equation}
\|\mathcal N(Z,s)\|_{\rm osc}
\le
C\epsilon\,\|Z\|_{\rm osc}
\label{eq:appC-nonlinear-remainder}
\end{equation}
throughout a sufficiently small relative neighborhood, uniformly over the
compact common-mode and \(Z\)-variables.  Theorem~\ref{thm:global-nonexpansion}
keeps a trajectory that starts in this neighborhood from leaving it.
Variation of constants over the fixed window \(T\), together with
eq.~\eqref{eq:appC-Dobrushin-bound}, gives
\begin{equation}
\operatorname{osc}_{\rm op}(s+T)
\le
\left(
\delta_T+C_T\epsilon
\right)
\operatorname{osc}_{\rm op}(s),
\label{eq:appC-nonlinear-osc-contraction}
\end{equation}
where \(C_T\) is uniform on the compact domain.  Choose \(\epsilon\) so small
that
\begin{equation}
\delta_T+C_T\epsilon<1.
\label{eq:appC-epsilon-choice}
\end{equation}
Combining with eq.~\eqref{eq:appC-local-equivalence} and, if necessary,
shrinking \(\epsilon\) once more, gives
\begin{equation}
\DT(s+T)
\le
\rho_T^{\rm in}\DT(s),
\qquad
0<\rho_T^{\rm in}<1.
\label{eq:appC-rho-in}
\end{equation}

\subsection{Patching and connected components}
\label{appsub:patching}

Choose the same small relative threshold \(\varepsilon\) used in the near
region.  Equations~\eqref{eq:appC-rho-out} and
\eqref{eq:appC-rho-in} give
\begin{equation}
\rho_T
=
\max
\left\{
\rho^{\rm out}_{T,\varepsilon},
\rho_T^{\rm in}
\right\}
<1.
\label{eq:appC-rho-patch}
\end{equation}
Therefore every controlled connected trajectory satisfies
\begin{equation}
\DT(s+T)\le\rho_T\DT(s).
\label{eq:appC-window-final}
\end{equation}
Iteration yields
\begin{equation}
\DT(s+nT)
\le
\rho_T^n\DT(s),
\qquad
n\in\mathbb N.
\label{eq:appC-iteration}
\end{equation}
If \(s(\ell)\to\infty\), this proves
\begin{equation}
\DT(\ell)\to0.
\end{equation}

If the interaction graph has several connected components, the proof applies
to each component separately.  Thus every component synchronizes internally
under its own persistence conditions.  For a graph with \(k\) connected
components, the infrared theory has at most \(k\) mutually unlocked relative
cone sectors.

A useful exact special case occurs when all edge rates share a common
positive factor,
\begin{equation}
\alpha_e(\ell)=\chi(\ell)\bar\alpha_e,
\qquad
\beta_e(\ell)=\chi(\ell)\bar\beta_e,
\end{equation}
with fixed positive barred rates.  The time change
\begin{equation}
s(\ell)=\int_0^\ell\chi(u)\,\dd u
\end{equation}
reduces the flow exactly to the fixed-rate normalized system.

\section{Persistence, topology, and numerical details}
\label{app:persistence-topology}

This appendix completes the field-theoretic persistence argument of
section~\ref{subsec:qft-persistence}, gives the detailed Taylor-order proof
behind Theorem~\ref{thm:topology-onset}, and records the numerical
implementation used in Figs.~\ref{fig:global-locking} and
\ref{fig:topology-onset}.

\subsection{Analyticity of anisotropic one-loop coupling coefficients}
\label{appsub:anisotropic-vertex}

We first establish the analytic dependence used in the finite-mismatch
persistence bootstrap.  Consider the Pauli-string Yukawa subclass of
appendix~\ref{app:internal-algebra}.  At one loop, the logarithmically
divergent local part of an external Yukawa channel \(r\) is a sum of
wave-function terms and triangle terms with internal channel \(s\).  The
internal flavor structures are
\begin{equation}
T_s^2T_r,\qquad
T_rT_s^2,\qquad
T_sT_rT_s.
\label{eq:appD-flavor-structures}
\end{equation}
For Pauli strings,
\begin{equation}
T_sT_rT_s=\eta_{sr}T_r,
\qquad
\eta_{sr}=\pm1,
\label{eq:appD-Pauli-conjugation}
\end{equation}
so the flavor direction is preserved.  The anisotropic kinetic matrices
enter through the scalar coefficient multiplying that direction.

To see the regularity of this coefficient, whiten the fermion spatial
quadratic form and denote the relevant relative scalar matrices by
\(R_s\in\SPD(3)\).  After Feynman parametrization, every logarithmically
divergent triangle coefficient can be reduced to a finite sum of integrals
of the form
\begin{equation}
C_{rs}(G)
=
\int_{\Delta}
\dd\boldsymbol x\,
\frac{
P_{rs}\!\left(
\boldsymbol x,
M(\boldsymbol x;G)^{-1},
G
\right)
}{
\sqrt{\det M(\boldsymbol x;G)}
},
\label{eq:appD-Crs-generic}
\end{equation}
where \(G\) collectively denotes the propagation matrices entering the
loop, \(\Delta\) is a compact Feynman-parameter simplex,
\(\boldsymbol x\) denotes its parameter vector, \(P_{rs}\) is polynomial
in its matrix arguments, and \(M(\boldsymbol x;G)\) is a positive affine
combination of the spatial
quadratic forms entering the propagators.  For example, the triangle with
two identical fermion denominators and one scalar denominator has, in the
whitened frame,
\begin{equation}
M_x=(1-x)\Id+xR_s
\label{eq:appD-triangle-M}
\end{equation}
after combining the two fermion parameters.

Let \(\lambda_{\min}(X)\) and \(\lambda_{\max}(X)\) denote the smallest and
largest eigenvalues of an SPD matrix \(X\).  Let \(\mathcal K_G\) be a
compact SPD domain satisfying
\begin{equation}
0<\lambda_-
\le
\lambda_{\min}(G_v)
\le
\lambda_{\max}(G_v)
\le
\lambda_+<\infty
\label{eq:appD-SPD-compact}
\end{equation}
for every propagation matrix.  Every affine matrix
\(M(\boldsymbol x;G)\) then obeys a uniform positive lower bound on
\(\Delta\times\mathcal K_G\).  Hence
\begin{equation}
M^{-1},
\qquad
(\det M)^{-1/2}
\label{eq:appD-analytic-building-blocks}
\end{equation}
are real analytic functions of the kinetic-matrix entries and are uniformly
bounded together with all derivatives on compact subsets.  Differentiation
under the finite parameter integral in
eq.~\eqref{eq:appD-Crs-generic} is therefore justified to every order.

\begin{lemma}[Analytic anisotropic coupling coefficients]
\label{lem:appD-analytic-vertex}
On every compact positive-definite no-tilt kinetic domain, the logarithmic
one-loop Yukawa-vertex and scalar-quartic coefficients are real analytic
functions of the propagation matrices.  The Yukawa counterterms remain in
the original Pauli-string channel subspace, while the
\(\mathbb Z_2\times\mathbb Z_2\)-even quartic sector
\((\lambda_1,\lambda_2,\lambda_{12})\) remains closed.
\end{lemma}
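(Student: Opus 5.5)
The argument splits into two halves: real analyticity of the logarithmic coefficients, and closure of the flavor and quartic structures. \textbf{Analyticity.} I would first reduce every logarithmically divergent one-loop diagram to the generic form of eq.~\eqref{eq:appD-Crs-generic}. The relevant diagrams are the fermion and scalar self-energies, the Yukawa triangle with one scalar and two fermion lines, the scalar-loop contributions to the quartics, and the fermion box with four fermion lines.

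For each diagram, a spatial whitening by the fermion matrix places the fermion lines in the identity frame, exactly as in appendix~\ref{app:selfenergies}. Feynman parameters on the compact simplex \(\Delta\) then combine the denominators into \(q_0^2+\boldsymbol q^TM(\boldsymbol x;G)\boldsymbol q+\Delta_{\boldsymbol x}\), where \(M\) is a convex combination of the whitened propagation matrices.

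After shifting, the logarithmic part is extracted as in eqs.~\eqref{eq:appB-I-transform}--\eqref{eq:appB-I-log}: the substitution \(\boldsymbol Q=M^{1/2}\boldsymbol q\) produces \((\det M)^{-1/2}\). Numerator tensor contractions, such as \(\langle q_iq_j\rangle\propto (M^{-1})_{ij}\) and the shifted external-momentum vectors \(M^{-1}R\boldsymbol p\), produce polynomials in \(M^{-1}\), \(G\) and \(\boldsymbol x\).

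On the compact domain of eq.~\eqref{eq:appD-SPD-compact}, any convex combination obeys \(M\succeq\lambda_-'\Id\) uniformly, with \(\lambda_-'\) depending on \(\lambda_\pm\) through the whitening. Consequently \(M^{-1}\) and \((\det M)^{-1/2}\) extend holomorphically to a uniform complex neighborhood of the real domain. The whitening map \(G\mapsto A^{-1/2}GA^{-1/2}\) is itself real analytic on SPD, since the principal square root is analytic there. The Cauchy estimates are uniform in \(\boldsymbol x\), so differentiating under the finite parameter integral is allowed, and the integral of a uniformly holomorphic integrand over \(\Delta\) is holomorphic. This yields real analyticity of all coefficients.

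\textbf{Closure.} The internal flavor factor of each diagram is a product of Pauli strings that factors out of the spacetime integral, because the \(T_{ar}\) commute with the Dirac algebra.
\begin{itemize}
\item For the vertex, eq.~\eqref{eq:appD-flavor-structures} together with \(T_s^2=\Id\) and eq.~\eqref{eq:appD-Pauli-conjugation} shows that every structure is proportional to \(T_r\). The Yukawa counterterms therefore stay in the Pauli channel subspace, with analytic scalar coefficients.
\item Off-diagonal scalar wave-function mixing is proportional to \(\Tr(T_rT_s)=0\), so it vanishes.
\item The quartic sector is closed by symmetry. The \(\mathbb Z_2\times\mathbb Z_2\) transformations of section~\ref{subsec:qft-persistence} act only on fields and internal indices, not on spatial kinetic matrices, so they remain symmetries of the anisotropic action. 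Hence any generated local quartic operator is even in each of \(\phi_1\) and \(\phi_2\) separately, and the only renormalizable such monomials are \(\phi_1^4\), \(\phi_2^4\) and \(\phi_1^2\phi_2^2\).
\end{itemize}
Scalar tadpoles affect only the mass terms, which are tuned to criticality.

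\textbf{Main obstacle.} The hardest step will be showing that the logarithmic coefficient of the four-point box and triangle diagrams has exactly the stated determinant-times-polynomial form. When several distinct relative matrices \(R_s\) meet in one loop, as in the mixed box with both scalar channels, \(M\) is a genuinely noncommuting convex combination. I would handle this uniformly with proper-time regularization (eq.~\eqref{eq:appB-Schwinger}). The Gaussian integral gives \((\det M)^{-1/2}\) times a polynomial in \(M^{-1}\) for any SPD \(M\), with no commutativity needed, and the coefficient of \(\int\dd t/t\) isolates the logarithm independently of the cutoff shape.
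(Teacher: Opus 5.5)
Your proposal is correct and follows essentially the paper's own argument: you reduce each logarithmic vertex and quartic coefficient to a Feynman-parameter integral of a polynomial in \(M^{-1}\) divided by \(\sqrt{\det M}\), use the uniform positive lower bound on the affine combinations \(M\) over the compact kinetic domain, and obtain closure from \(T_sT_rT_s=\eta_{sr}T_r\), \(\Tr(T_rT_s)=0\), and the \(\mathbb Z_2\times\mathbb Z_2\) symmetry (which is blind to the kinetic matrices), while your holomorphic-extension step justifies real analyticity, as opposed to mere smoothness, more cleanly than the paper's differentiation-to-all-orders remark. One small correction to your ``main obstacle'': the box has only fermion internal lines, so after whitening it is isotropic up to the Jacobian, and the genuinely noncommuting combination \((1-x)B_1+xB_2\) arises instead in the mixed \(\phi_1\)--\(\phi_2\) scalar bubble, which your determinant argument covers in the same way.
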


The Dirac part of the same statement follows from symmetry of the loop
momentum tensors.  After the Gaussian or angular integrations, every
quadratic loop-momentum tensor is symmetric.  Its contraction with products of spatial gamma matrices therefore projects
onto the anticommutator part, preserving the original Dirac Yukawa vertex
structure.  Together
with the two-point closure proved in appendices~\ref{app:internal-algebra}
and~\ref{app:selfenergies}, this gives a closed analytic one-loop vector
field for the couplings and propagation matrices on the compact domain.

The analyticity statement also supplies the uniform finite-mismatch estimate
used in section~\ref{subsec:qft-persistence}.  Let \(G_{\rm c}\) denote the
common-cone point obtained by setting all relative propagation matrices equal
inside a fixed common-congruence section, and let
\(\boldsymbol p=(q,u,v,w)\) collect the normalized coupling variables of
eq.~\eqref{eq:appA-normalized-quartic-defs}.  Denote the corresponding
normalized coupling vector field by
\(\boldsymbol F(G,\boldsymbol p)\).  In this subsection,
\(\|\cdot\|\) denotes the matrix operator norm for propagation matrices,
the Euclidean norm for the finite-dimensional coupling vector, and the
corresponding induced norm for linear maps.  On a compact SPD domain,
Thompson distance and the matrix operator norm are locally equivalent, so
there is a finite
constant \(C_G\) such that
\begin{equation}
\|G-G_{\rm c}\|
\le
C_G\,\DT(G)
\label{eq:appD-local-metric-equivalence}
\end{equation}
throughout a sufficiently small common-cone neighborhood.  Let \(D_G\boldsymbol F\) denote the Fr\'echet derivative of
\(\boldsymbol F\) with respect to its propagation-matrix arguments.  The
mean-value identity
\begin{align}
\boldsymbol F(G,\boldsymbol p)
-
\boldsymbol F(G_{\rm c},\boldsymbol p)
={}&
\int_0^1
D_G\boldsymbol F
\!\left(
G_{\rm c}+t(G-G_{\rm c}),
\boldsymbol p
\right)
\nonumber\\
&\times [G-G_{\rm c}]\,\dd t.
\label{eq:appD-mean-value}
\end{align}
then gives
\begin{equation}
\left\|
\boldsymbol F(G,\boldsymbol p)
-
\boldsymbol F(G_{\rm c},\boldsymbol p)
\right\|
\le
C_{\rm an}\,\DT(G),
\label{eq:appD-Lipschitz}
\end{equation}
where \(C_{\rm an}<\infty\) is the supremum of this derivative norm on the
compact interpolation set.

\subsection{Finite-mismatch trapping of the quartic sector}
\label{appsub:quartic-trapping}

We now use eq.~\eqref{eq:appD-Lipschitz} to extend the common-cone sink
\(p_*\) of appendix~\ref{appsub:quartic-completion} to finite cone mismatch.
Let \(\xi\) be the deviation vector in eq.~\eqref{eq:appA-xi}; a prime
continues to denote \(\dd/\dd\vartheta\), with \(\vartheta\) defined in
eq.~\eqref{eq:appA-vartheta}.  Near
\(p_*\), the common-cone normalized vector field has the form
\begin{equation}
\xi'
=
J_*\xi
+
\boldsymbol N(\xi),
\qquad
\|\boldsymbol N(\xi)\|
\le
C_N\|\xi\|^2,
\label{eq:appD-common-quartic-normal-form}
\end{equation}
for some finite \(C_N\).  At finite mismatch,
eq.~\eqref{eq:appD-Lipschitz} adds a remainder
\(\boldsymbol R_G\) satisfying
\begin{equation}
\|\boldsymbol R_G\|
\le
C_R\,\DT(G)
\label{eq:appD-quartic-remainder}
\end{equation}
on a sufficiently small compact neighborhood.

All eigenvalues of \(J_*\) have negative real part by
eq.~\eqref{eq:appA-quartic-eigenvalues}.  Hence there exists a symmetric
positive-definite matrix \(\mathsf P\) solving the Lyapunov equation
\begin{equation}
J_*^T\mathsf P+\mathsf P J_*=-\Id_4,
\label{eq:appD-Lyapunov-equation}
\end{equation}
where \(\Id_4\) is the \(4\times4\) identity.  Define
\begin{equation}
L_Q(\xi)=\xi^T\mathsf P\xi.
\label{eq:appD-quartic-Lyapunov}
\end{equation}
Using eqs.~\eqref{eq:appD-common-quartic-normal-form} and
\eqref{eq:appD-quartic-remainder},
\begin{equation}
L_Q'
\le
-\|\xi\|^2
+
2\|\mathsf P\|C_N\|\xi\|^3
+
2\|\mathsf P\|C_R\,\DT(G)\,\|\xi\|.
\label{eq:appD-Lyapunov-bound}
\end{equation}
Choose a sufficiently small radius \(r_*>0\) and the ellipsoid
\begin{equation}
\mathcal U_*
=
\{\xi:\;L_Q(\xi)<r_*^2\}
\label{eq:appD-Ustar}
\end{equation}
and then choose \(\delta_Q>0\) sufficiently small.  The right-hand side of
eq.~\eqref{eq:appD-Lyapunov-bound} is strictly negative on
\(L_Q=r_*^2\) whenever \(\DT(G)\le\delta_Q\).  Thus
\(\mathcal U_*\) is forward invariant as long as the cone mismatch remains
below \(\delta_Q\).  Theorem~\ref{thm:global-nonexpansion} supplies exactly
that bootstrap:
\begin{equation}
\DT(0)<\delta_Q
\quad\Longrightarrow\quad
\DT(\ell)\le\DT(0)<\delta_Q.
\label{eq:appD-quartic-bootstrap}
\end{equation}
Consequently the normalized quartic ratios remain in a compact
weak-coupling neighborhood for all infrared RG time, and
\begin{equation}
|\lambda_1|+|\lambda_2|+|\lambda_{12}|
=
O(S).
\label{eq:appD-quartic-O-S}
\end{equation}
This supplies the compact scalar-sector control entering the persistence
bounds below.

\subsection{Exact two-channel flow and invariant}
\label{appsub:exact-Sq}

At a common cone, the two-channel Pauli--Dirac flow is
\begin{subequations}
\label{eq:appD-ab-common}
\begin{align}
\dot a
&=
-\kappa a(7a^2-b^2),
\\
\dot b
&=
-\kappa b(7b^2-a^2).
\end{align}
\end{subequations}
Here \(\kappa>0\).
Set
\begin{equation}
x=a^2,\qquad
z=b^2,\qquad
S=x+z,\qquad
q=\frac{x-z}{x+z}.
\label{eq:appD-Sq-def}
\end{equation}
A direct calculation gives
\begin{subequations}
\begin{align}
\dot S
&=
-\kappa S^2(6+8q^2),
\label{eq:appD-S-common}\\
\dot q
&=
-8\kappa Sq(1-q^2).
\label{eq:appD-q-common}
\end{align}
\end{subequations}
For \(a,b>0\), one has \(|q|<1\).  The sign in
eq.~\eqref{eq:appD-q-common} points toward \(q=0\), and
\begin{equation}
6\kappa
\le
\frac{\dd}{\dd\ell}\frac1S
\le
14\kappa.
\label{eq:appD-recip-S-common}
\end{equation}
Thus \(S\asymp\ell^{-1}\).

The system also has an exact invariant.  Combining
eqs.~\eqref{eq:appD-S-common} and~\eqref{eq:appD-q-common},
\begin{equation}
\frac{\dd}{\dd\ell}
\left[
6\log|q|
-
7\log(1-q^2)
-
8\log S
\right]
=
0.
\label{eq:appD-invariant-log}
\end{equation}
Therefore
\begin{equation}
\frac{q^6}{(1-q^2)^7}
=
C S^8,
\qquad
C\ge0.
\label{eq:appD-invariant}
\end{equation}
For a nonzero initial imbalance this implies, as \(S\to0\),
\begin{equation}
q=O(S^{4/3})=O(\ell^{-4/3}).
\label{eq:appD-q-asymptotic}
\end{equation}
The balanced case \(q=0\) is invariant.  In either case
\begin{equation}
a^2\sim b^2\sim\frac{S}{2},
\end{equation}
and both edge activities have logarithmically divergent accumulated
integrals.

\subsection{Finite-mismatch persistence with explicit constants}
\label{appsub:persistence-constants}

We now turn the continuity argument into a uniform proposition.  By
Lemma~\ref{lem:appD-analytic-vertex}, the two-channel coupling flow in a
sufficiently small relative-cone neighborhood can be written in the
variables \(S,q\) as
\begin{subequations}
\begin{align}
\dot S
&=
-\kappa S^2
\left[
6+8q^2+\varepsilon_S(G,q)
\right],
\label{eq:appD-S-perturbed}\\
\dot q
&=
-\kappa S
\left[
8q(1-q^2)+\varepsilon_q(G,q)
\right],
\label{eq:appD-q-perturbed}
\end{align}
\end{subequations}
where, on
\begin{equation}
\DT(G)\le\delta,
\qquad
|q|\le q_*<1,
\label{eq:appD-control-domain}
\end{equation}
there are finite constants \(C_S,C_q\) such that
\begin{equation}
|\varepsilon_S(G,q)|\le C_S\delta,
\qquad
|\varepsilon_q(G,q)|\le C_q\delta.
\label{eq:appD-error-bounds}
\end{equation}
The constants are uniform because
Lemma~\ref{lem:appD-analytic-vertex} and the trapping region
\(\mathcal U_*\) of eq.~\eqref{eq:appD-Ustar} place the normalized quartic
couplings and kinetic data in a compact perturbative domain.

Choose any \(q_*\in(0,1)\) and let \(\delta_{\rm an}\) be a radius on which
the analytic representation above is valid.  It is sufficient to choose
\begin{equation}
0<\delta_0
<
\min\left\{
\delta_{\rm an},
\frac{4q_*(1-q_*^2)}{C_q},
\frac{3}{C_S}
\right\},
\label{eq:appD-delta-choice}
\end{equation}
with the corresponding constraint omitted if one of the displayed error
constants vanishes.  At \(q=q_*\),
\begin{equation}
8q_*(1-q_*^2)+\varepsilon_q
>
4q_*(1-q_*^2)>0,
\end{equation}
so \(\dot q<0\); at \(q=-q_*\) the sign is reversed and
\(\dot q>0\).  Hence the strip
\begin{equation}
|q|\le q_*
\label{eq:appD-q-strip}
\end{equation}
is forward invariant as long as
\(\DT\le\delta_0\).

The one-loop Pauli-channel equations have multiplicative form
\begin{equation}
\dot a=aF_a,\qquad
\dot b=bF_b,
\label{eq:appD-multiplicative}
\end{equation}
so \(a,b>0\) are preserved for finite RG time.  Positive couplings make all
cone-flow edge rates positive.  Theorem~\ref{thm:global-nonexpansion}
therefore gives
\begin{equation}
\DT(\ell)\le\DT(0).
\label{eq:appD-bootstrap-DT}
\end{equation}
An initial condition with
\(\DT(0)<\delta_0\) consequently remains inside the coefficient-control
neighborhood, closing the bootstrap independently of
Theorem~\ref{thm:finite-window}.

Within the invariant strip, define
\begin{equation}
c_1
=
\kappa(6-C_S\delta_0)>0,
\qquad
c_2
=
\kappa(6+8q_*^2+C_S\delta_0).
\label{eq:appD-c1c2}
\end{equation}
Then
\begin{equation}
-c_2S^2
\le
\dot S
\le
-c_1S^2,
\label{eq:appD-S-comparison}
\end{equation}
and integration yields
\begin{equation}
\frac{1}{
S(0)^{-1}+c_2\ell
}
\le
S(\ell)
\le
\frac{1}{
S(0)^{-1}+c_1\ell
}.
\label{eq:appD-S-bounds}
\end{equation}
Moreover,
\begin{equation}
a^2=\frac{1+q}{2}S,
\qquad
b^2=\frac{1-q}{2}S,
\end{equation}
so
\begin{equation}
\frac{1-q_*}{2}S
\le
a^2,b^2
\le
\frac{1+q_*}{2}S.
\label{eq:appD-edge-fractions}
\end{equation}
The lower bound and
eq.~\eqref{eq:appD-S-bounds} imply
\begin{equation}
\int^\infty a^2\,\dd\ell
=
\int^\infty b^2\,\dd\ell
=
\infty.
\label{eq:appD-infinite-activity}
\end{equation}

Finally write the two cone-flow coefficients as
\begin{equation}
\alpha_r
=
g_r^2\,K_r^{(B)}(G,Z),
\qquad
\beta_r
=
g_r^2\,K_r^{(F)}(G,Z),
\label{eq:appD-rate-prefactors}
\end{equation}
where \(g_1=a\), \(g_2=b\).  The loop derivations of
appendix~\ref{app:selfenergies} show that the kinetic prefactors are positive
and continuous.  Compactness therefore gives
\begin{equation}
0<K_-\le
K_r^{(B)},K_r^{(F)}
\le K_+<\infty.
\label{eq:appD-K-bounds}
\end{equation}
With
\begin{equation}
\chi=S,
\end{equation}
eqs.~\eqref{eq:appD-edge-fractions} and~\eqref{eq:appD-K-bounds} give
\begin{equation}
\frac{K_-(1-q_*)}{2}
\le
\frac{\alpha_r}{\chi},
\frac{\beta_r}{\chi}
\le
\frac{K_+(1+q_*)}{2}.
\label{eq:appD-normalized-rate-bounds}
\end{equation}
At the same time,
\begin{equation}
s(\ell)=\int_0^\ell S(u)\,\dd u\longrightarrow\infty.
\label{eq:appD-s-diverges}
\end{equation}
This proves the persistence statement used by
Corollary~\ref{cor:open-locking-basin}.

\subsection{Taylor-order proof for the topology theorem}
\label{appsub:topology-induction}

We now make the graph-distance counting in
section~\ref{sec:topology-onset} explicit.  Work at \(s=0\) and let \(x\)
be the common simple active direction of the two-plateau data.  Internal
plateau edges satisfy the equality condition of
Lemma~\ref{lem:appC-endpoint-equality}; hence the vector-valued action
\(G_vx\) agrees across such an edge at zeroth order.

For a lower-plateau node \(v\), define its arrival depth
\begin{equation}
\tau_v=d(v,U_x),
\end{equation}
and for an upper-plateau node define
\begin{equation}
\tau_v=d(v,L_x).
\label{eq:appD-tau}
\end{equation}
Thus \(\tau_v=1\) precisely when \(v\) is incident on an
opposite-plateau edge.

We use the following induction statement.  In the analytic active frame,
the directional displacement of a lower node is zero through order
\(\tau_v-1\) and has a positive order-\(\tau_v\) coefficient; for an upper
node the corresponding coefficient is negative.  When \(\tau_v=1\), the
claim follows from the strict endpoint sign on an edge joining opposite
plateaus.

Assume the statement holds for every node of arrival depth at most \(m\).
Let \(v\) have \(\tau_v=m+1\).  Every neighbor of \(v\) within its plateau
has zeroth-order equality with \(v\), and at least one neighbor \(w\) lies on
a shortest path with \(\tau_w=m\).  Differentiating the edge flux through
order \(m\), all terms containing derivatives of the normalized rate,
moving frame, or analytic active vector multiply lower-order mismatches and
vanish, exactly as in appendix~\ref{app:global-proofs}.  The first surviving
coefficient is therefore the coefficient arriving at \(w\) times the
positive edge multiplier
\begin{equation}
\mu_e=
\begin{cases}
\widehat\alpha_e(0), & \text{bosonic update},\\[2mm]
2\widehat\beta_e(0)c(R_e), & \text{fermionic update},
\end{cases}
\qquad
\mu_e>0.
\label{eq:appD-transfer-multiplier}
\end{equation}
Every shortest path has a product of positive multipliers, so contributions
from distinct shortest paths add with the same inward sign.  This completes
the induction.

Equivalently, if \(\mathcal P_v\) is the set of shortest paths from \(v\) to
the opposite plateau, the leading Taylor coefficient has the schematic
form
\begin{equation}
A_v
=
\sum_{\pi\in\mathcal P_v}
\sigma_{\partial\pi}
\prod_{e\in\pi_{\rm int}}\mu_e
>0,
\label{eq:appD-path-sum}
\end{equation}
where \(\sigma_{\partial\pi}>0\) is the strict source coefficient at the
first boundary edge and \(\pi_{\rm int}\) contains the saturated transfer
edges behind it.  Factorials depend on whether one writes derivatives or
Taylor coefficients and are positive, so they do not affect the sign or
order.

For an initially active lower--upper endpoint pair \((i,j)\), the branch
can move inward as soon as either endpoint is reached.  Hence
\begin{equation}
q_{ij}
=
\min\{\tau_i,\tau_j\}
=
\min\{d(i,U_x),d(j,L_x)\},
\label{eq:appD-qij}
\end{equation}
and
\begin{equation}
D_{ij}(0)-D_{ij}(s)
=
C_{ij}s^{q_{ij}}
+
O(s^{q_{ij}+1}),
\qquad
C_{ij}>0.
\label{eq:appD-pair-expansion}
\end{equation}

Branches that do not realize the initial Thompson diameter have a strictly
positive gap from the envelope and cannot become active at sufficiently
small \(s\).  Among initially active branches, those with smaller
\(q_{ij}\) move inward first.  Therefore the last branches remaining on the
maximum envelope are precisely those with
\begin{equation}
q=
\max_{(i,j)\in\mathcal A_x}q_{ij}.
\label{eq:appD-q-envelope}
\end{equation}
If several branches have this same order, the maximum selects the smallest
of their positive order-\(q\) coefficients.  Thus
\begin{equation}
\DT(0)-\DT(s)
=
C_qs^q+O(s^{q+1}),
\qquad
C_q>0,
\label{eq:appD-global-onset}
\end{equation}
which completes the proof of Theorem~\ref{thm:topology-onset}.

\subsection{\texorpdfstring{Explicit \(P_4\) coefficient}{Explicit P4 coefficient}}
\label{appsub:P4-coefficient}

For the four-node isotropic path
\begin{equation}
F_1-B_1-F_2-B_2
\end{equation}
let the initial scalar cone values be
\begin{equation}
(1,1,L,L),
\qquad
L>1,
\end{equation}
and let every bosonic and fermionic normalized edge rate be
\(\bar\alpha\) and \(\bar\beta\), respectively.  On the isotropic invariant
submanifold the fermion edge flow is
\begin{equation}
\dot a_F
=
2\bar\beta\,a_F\,h(b_B/a_F),
\label{eq:appD-isotropic-fermion-flow}
\end{equation}
where
\begin{equation}
h(r)
=
\frac{4(\sqrt r-1)}
{3\sqrt r(\sqrt r+1)^2},
\end{equation}
and the bosonic edge flow is
\begin{equation}
\dot b_B
=
\bar\alpha(a_F-b_B).
\label{eq:appD-isotropic-boson-flow}
\end{equation}

At \(s=0\), the far lower endpoint \(F_1\) and far upper endpoint \(B_2\)
have zero first derivative.  The central nodes satisfy
\begin{equation}
\dot B_1(0)
=
\bar\alpha(L-1),
\qquad
\dot F_2(0)
=
2\bar\beta L\,h(L^{-1}).
\label{eq:appD-central-first}
\end{equation}
Since \(D\mathcal H_{\Id}[K]=K/6\), the far lower endpoint has
\begin{equation}
\ddot F_1(0)
=
\frac{\bar\alpha\bar\beta}{3}(L-1),
\label{eq:appD-F1-second}
\end{equation}
whereas
\begin{equation}
\ddot B_2(0)
=
2\bar\alpha\bar\beta L\,h(L^{-1}).
\label{eq:appD-B2-second}
\end{equation}
The active far-end branch is
\begin{equation}
D(s)=\log\frac{B_2(s)}{F_1(s)}.
\end{equation}
Because both first derivatives vanish,
\begin{equation}
D''(0)
=
2\bar\alpha\bar\beta h(L^{-1})
-
\frac{\bar\alpha\bar\beta}{3}(L-1).
\label{eq:appD-Dsecond}
\end{equation}
Therefore
\begin{equation}
\DT(0)-\DT(s)
=
\bar\alpha\bar\beta
\left[
\frac{L-1}{6}-h(L^{-1})
\right]s^2
+
O(s^3).
\label{eq:appD-P4-final}
\end{equation}
Since \(L>1\) implies \(h(L^{-1})<0\), the coefficient is strictly positive.

\subsection{Numerical implementation for global locking}
\label{appsub:global-numerics}

The finite-matrix integrations are implemented in
\texttt{yukawa\_cone\_rg.py} and orchestrated by the reproduction driver
\texttt{reproduce\_figures.py}.
The spectral integral defining
\(\mathcal H(R)\) is evaluated with a fixed 64-point Gauss--Legendre rule on
\([0,1]\).  Matrix square roots and generalized eigenvalues are evaluated
with symmetric eigensolvers.  The full matrix ODE is integrated with
DOP853 using
\begin{equation}
{\tt rtol}=2\times10^{-9},
\qquad
{\tt atol}=2\times10^{-11}.
\label{eq:appD-DOP853-tols}
\end{equation}

For Figure~\ref{fig:global-locking}(a), the random seed is
\begin{equation}
20260804.
\end{equation}
Four \(3\times3\) SPD matrices are generated with logarithmic eigenvalue
radius \(1.4\) and reused unchanged for the alternating path \(P_4\) and
\(K_{2,2}\).  All normalized edge rates equal unity.  The interval
\(0\le s\le4\) is sampled at 301 points.  The common initial Thompson
diameter is
\begin{equation}
\DT(0)=2.1313012341140207,
\end{equation}
and the displayed final values are
\begin{equation}
\begin{aligned}
\mathcal D_T^{P_4}(4)&=0.827770121280161,\\
\mathcal D_T^{K_{2,2}}(4)&=0.19874555274700123.
\end{aligned}
\end{equation}

For panel~(b), the seed is
\begin{equation}
20260805.
\end{equation}
There are 16 displayed connected networks with two fermion and three boson
nodes.  A fixed connected backbone is supplemented by random edges, the
initial SPD logarithmic radius is \(1.8\), and each normalized edge rate is
drawn independently from
\begin{equation}
[e^{-0.35},e^{0.35}].
\end{equation}
The integrations use \(0\le s\le2\) with 241 samples.  All displayed trajectories have nonpositive sampled diameter increments; the
final median normalized diameter is
\begin{equation}
0.25113545058594416.
\end{equation}

\subsection{High-precision onset numerics}
\label{appsub:onset-numerics}

The onset calculation is performed on the invariant isotropic submanifold.
For a fermion--boson edge with scalar cone values \(u_F,u_B>0\) and unit
normalized rates,
\begin{equation}
\dot u_F
=
2u_F h(u_B/u_F),
\qquad
\dot u_B
=
u_F-u_B.
\label{eq:appD-scalar-path-ODE}
\end{equation}
Contributions from all incident edges are summed.

The paths \(P_4,P_6,P_8\) start with the symmetric two-level step
\begin{equation}
u_v(0)
=
\begin{cases}
1, & v<N/2,\\
8, & v\ge N/2.
\end{cases}
\label{eq:appD-onset-initial}
\end{equation}
The calculation uses \texttt{mpmath} at 60 decimal digits and a fixed
fourth-order Runge--Kutta step
\begin{equation}
\Delta s=10^{-5}
\end{equation}
through \(s=0.025\).  Power-law fits are performed over
\begin{equation}
3\times10^{-4}
\le
s
\le
3\times10^{-3}.
\label{eq:appD-fit-window}
\end{equation}
The fitted exponents are
\begin{center}
\begin{tabular}{c c c}
\hline
graph & predicted \(q\) & fitted power\\
\hline
\(P_4\) & 2 & 1.99238931847472\\
\(P_6\) & 3 & 2.9995006738958963\\
\(P_8\) & 4 & 3.994739533491443\\
\hline
\end{tabular}
\end{center}
The effective exponent plotted in
Figure~\ref{fig:topology-onset}(b) is computed from the numerical logarithmic
derivative
\begin{equation}
q_{\rm eff}(s)
=
\frac{\dd\log[\DT(0)-\DT(s)]}{\dd\log s}.
\end{equation}

The files \texttt{path\_onset\_high\_precision.csv} and
\texttt{path\_onset\_fit\_summary.csv} contain the underlying trajectories
and fit metadata.  The supplied scripts fix the seeds, tolerances, quadrature order, and figure
construction and regenerate the two quantitative figures directly.

\end{document}